\newcommand{\distas}[1]{\mathbin{\overset{#1}{\kern\z@\sim}}}%
\newsavebox{\mybox}\newsavebox{\mysim}
\newcommand{\distras}[1]{%
  \savebox{\mybox}{\hbox{\kern3pt$\scriptstyle#1$\kern3pt}}%
  \savebox{\mysim}{\hbox{$\sim$}}%
  \mathbin{\overset{#1}{\kern\z@\resizebox{\wd\mybox}{\ht\mysim}{$\sim$}}}%
}
\newcommand*\rel@kern[1]{\kern#1\dimexpr\macc@kerna}
\newcommand*\widebar[1]{%
  \begingroup
  \def\mathaccent##1##2{%
    \rel@kern{0.8}%
    \overline{\rel@kern{-0.8}\macc@nucleus\rel@kern{0.2}}%
    \rel@kern{-0.2}%
  }%
  \macc@depth\@ne
  \let\math@bgroup\@empty \let\math@egroup\macc@set@skewchar
  \mathsurround\z@ \frozen@everymath{\mathgroup\macc@group\relax}%
  \macc@set@skewchar\relax
  \let\mathaccentV\macc@nested@a
  \macc@nested@a\relax111{#1}%
  \endgroup
}
\newtheorem{definition}{Definition}
\newtheorem{assumption}{Assumption}
\newtheorem{theorem}{Theorem}
\newtheorem{lemma}{Lemma}
\newtheorem{remark}{Remark}
\newtheorem{corollary}{Corollary}
\newcommand{\etal}{\textit{et al. }}
\begin{document}
\title{Anomaly Detection in Networked Bandits}
\author{Xiaotong Cheng and Setareh Maghsudi\thanks{
X. Cheng and S. Maghsudi are with the Department of Electrical Engineering and Information Technology, Ruhr-University Bochum, 44801 Bochum, Germany (email:xiaotong.cheng@ruhr-uni-bochum.de, setareh.maghsudi@rub.de).

A part of this paper is accepted for presentation at the 35th IEEE International Workshop on Machine Learning for Signal Processing (IEEE MLSP 2025).}\\
}

\markboth{Journal of \LaTeX\ Class Files,~Vol.~14, No.~8, August~2021}%
{}

\maketitle
\begin{abstract}
The nodes' interconnections on a social network often reflect their dependencies and information-sharing behaviors. Nevertheless, abnormal nodes, which significantly deviate from most of the network concerning patterns or behaviors, can lead to grave consequences. Therefore, it is imperative to design efficient online learning algorithms that robustly learn users' preferences while simultaneously detecting anomalies. 

We introduce a novel bandit algorithm to address this problem. Through network knowledge, the method characterizes the users' preferences and residuals of feature information. By learning and analyzing these preferences and residuals, it develops a personalized recommendation strategy for each user and simultaneously detects anomalies. We rigorously prove an upper bound on the regret of the proposed algorithm and experimentally compare it with several state-of-the-art collaborative contextual bandit algorithms on both synthetic and real-world datasets.
\end{abstract}

\begin{IEEEkeywords}
Online learning, Contextual bandits, Anomaly detection, Regret analysis
\end{IEEEkeywords}
\section{Introduction}
\IEEEPARstart{M}{ulti}-armed bandit (MAB) is a prototypical solution for many applications, such as online advertising \cite{li2010exploitation}, personalized recommendation \cite{mahadik2020fast}, etc. In the classical MAB setting, a decision-maker selects one action at each round and receives an instantaneous reward, aiming to find policies that maximize its cumulative rewards over all rounds. 

We focus on a powerful MAB setting known as contextual bandits, where side information about users or items is available. In particular, each arm (action) has a feature vector known as ``context'', and the expected payoff depends on the unknown bandit parameters and the given context. Contextual bandits offer significant advantages in large-scale applications \cite{wu2016contextual} and have become a reference model for adaptive recommender systems \cite{li2010exploitation,bouneffouf2012contextual}. Recently, several researchers have studied contextual bandits in collaborative settings \cite{cesa2013gang,wu2016contextual,yang2020laplacian}, where a network models the information-sharing and dependency among users. The potential affinities between pairs of users are beneficial to improving the action selection policy.  

The state-of-the-art research assumes that the network structure accurately reflects users' similarities or relationships. That assumption ignores the impact of anomalies/outliers in the network. Anomalies in a network are nodes that do not conform to the patterns of the majority \cite{li2017radar}. Integrating the network structure information into the learning algorithm on one hand can effectively accelerate the learning process at each node. On the other hand, the existence of anomalies, if not handled properly, can easily mislead the collaborative system and significantly corrupt the algorithm's performance. Thus, it is imperative to incorporate anomaly detection into online learning algorithms to mitigate the negative impact caused by these anomalies. 

To better explain the practical relevance of anomaly detection, we offer concrete examples across various real-world scenarios. In social networks, anomalies often correspond to fake or bot users who exhibit interaction patterns that diverge significantly from typical user behavior, e.g., forming random or overly dense connections, or spreading misinformation \cite{10909361}. In recommendation systems, anomalous behavior may stem from malicious users who inject biased profiles to manipulate item rankings, commonly known as shilling attacks \cite{zhou2014detection}. Anomaly detection also plays a critical role in other areas, including scientific discovery~\cite{grun2015single,chaudhary2015folding}, fraud detection in financial systems~\cite{bolton2001unsupervised,kshetri2010economics}, and medical screening or diagnosis~\cite{schiff2017screening}, where identifying rare or abnormal patterns can be crucial for decision-making. 

In summary, anomaly detection aims to identify noteworthy items, events, or entities that do not conform to the expected patterns of the majority in a dataset \cite{chandola2009anomaly,ding2019interactive}. Various algorithms are proposed for anomaly detection; Nonetheless, they typically suffer from some limitations: Some do not fit sequential analysis or online learning \cite{laxhammar2013online}; some rely on the assumption that properties of anomalies are known in advance, which might not be true \cite{li2017radar}; and many involve tuning several hyperparameters such as anomaly thresholds, which might lead to overfitting or poorly-calibrated alarm rates  \cite{laxhammar2013online}. 

We address anomaly detection in networked contextual bandits, where anomalies refer to users whose behaviors or underlying feature vectors significantly deviate from those expected under the mutual influence structure of the network. Motivated by \cite{li2017radar}, we focus on residual analysis, where anomalies manifest as outliers in the deviation between observed and estimated collaborative user behavior. By characterizing the residuals of users' features and leveraging the coherence between users' features and network information, the anomalies can be detected. We further extend this idea to an online setting within the networked contextual bandit framework, enabling real-time anomaly detection and decision-making. Our main contributions are as follows.
\begin{itemize}
    \item We are the first to formally formulate the networked bandits with anomalies problem, which extends traditional models by explicitly incorporating the possibility of anomalies within the networked structure. Our formulation removes the common assumption that the network structure perfectly mirrors user interactions, making it more applicable to real-world applications where anomalies exist and may interfere with collaborative learning.
    \item To address the formulated problem, we propose an algorithm called NEtworked LinUCB with Anomaly detection (NELA), which integrates two key functions: Learning users' preferences and detecting anomalies in a networked system. This dual-function approach allows NELA to robustly estimate the bandit parameters for all users, utilizing network information to efficiently reduce the sample complexity for preference learning and minimize regret. Based on the estimated global bandit parameters, NELA accurately detects anomalies. To the best of our knowledge, NELA is the first framework to address these two problems concurrently. 
    \item We rigorously prove the regret of NELA is upper bounded by $O(dn\log(1+\frac{\sum_{t=1}^{T}\sum_{j=1}^n \boldsymbol{W}(u_t,j)^2}{dn \lambda_1})\sqrt{T}+ s_As_1\tau + \log s_0\sqrt{s_0T})$ and also provide an $\Omega(\xi d\sqrt{nT} + \sqrt{s_0T})$ regret lower bound with $\xi \in [1,\sqrt{n}]$. In addition, we prove the correctness of NELA in anomaly detection theoretically.
    \item Extensive experiments on both synthetic and real-world datasets demonstrate the effectiveness and superior performance of our proposed algorithm compared to the state-of-the-art methods.
\end{itemize}

The remainder of the paper is organized as follows. Section~\ref{sec:related-work} reviews the related work. In Section~\ref{sec:problem}, we formulate the anomaly detection problem in networked bandits. Section~\ref{sec:alg} presents the details of the proposed algorithm, Networked LinUCB with Anomaly Detection (NELA). Theoretical analysis is discussed in Section~\ref{sec:analysis}. To evaluate the effectiveness of the proposed algorithm, we conduct experiments on both synthetic and real-world datasets, as described in Section~\ref{sec:experiment}. Finally, we conclude our work and highlight the directions of future work in Section~\ref{sec:conclusion}.

\section{Related Work}\label{sec:related-work}
Previous research actively explores exploiting users' social relationships to accelerate bandit learning. To adaptively learn and utilize unknown user relations,   \cite{gentile2014online} and \cite{li2016collaborative} initially propose clustering of bandits problems, which, however, suffer from the irreversible splitting of clusters. Reference \cite{li2019improved} introduces an improved algorithm that uses sets to represent clusters, allowing split and merge actions at each time step. In addition, authors in \cite{cheng2023parallel} solve the clustering of bandits problem with a game-theoretic approach. 

Some researchers use social network topologies to enhance algorithm performance. Reference \cite{cesa2013gang} combines contextual bandits with the social graph information, proposing the GOB algorithm that allows signal-sharing among users by exploiting the network structure. Similarly, Wu \etal \cite{wu2016contextual} investigate scenarios when neighboring users’ judgments influence each other through shared information. Reference \cite{yang2020laplacian} improves GOB by using the random-walk graph Laplacian to model the signal smoothness among the feature vectors of users and addresses the high scaling complexity problem of GOB. 

Another line of related work is anomaly/outlier detection. Several papers study outlier detection in the MAB framework; Nonetheless, they focus on either identifying arms whose expected rewards deviate significantly from the majority \cite{zhuang2017identifying,ban2020generic,zhu2020robust} or selecting a single anomalous node or link for a human expert to evaluate \cite{ding2019interactive,meng2021interactive,zhang2023multi}. These approaches are typically limited to addressing a single task at a time. In contrast, the challenge of identifying outliers in multi-user systems during interaction through users' action selection remains less explored. Besides, some anomaly detection research focuses on attributed networks \cite{perozzi2014focused,liu2017accelerated,li2017radar,peng2018anomalous}. In \cite{perozzi2014focused}, the authors infer user preferences. They then simultaneously form clusters and find outliers according to the inferred user preferences. Reference \cite{li2017radar} proposes a learning framework to model the residuals of attribute information and its coherence with network information for anomaly detection.
\section{Problem Formulation}\label{sec:problem}
\textbf{Notations:} We denote vectors by bold lowercase letters, matrices by bold uppercase letters, and sets or events by calligraphic fonts. The notation $[\cdot]$ represents a set of integers within a specific range (e.g., $[d]$ represents $\{1, \ldots, d\}$). We represent the $i$-th row of matrix $\boldsymbol{A}$ as $\boldsymbol{A}(i,:)$, the $j$-th column as $\boldsymbol{A}(:,j)$ and the $(i,j)$-th entry as $\boldsymbol{A}(i,j)$. $\norm{\boldsymbol{x}}_p$ denotes the $p-$norm of a vector $\boldsymbol{x}$ and the weighted $2-$norm of vector $\boldsymbol{x}$ is defined by $\norm{\boldsymbol{x}}_{\boldsymbol{A}} = \sqrt{\boldsymbol{x}^{\top}\boldsymbol{A}\boldsymbol{x}}$. The inner product is denoted by $\langle \cdot, \cdot \rangle$. $vec(\cdot)$ denotes the vectorize operation. \textbf{Table~\ref{tab:nota}} summarizes important definitions.

\begin{table}[!ht]
\begin{center}
\centering
\captionsetup{justification=centering}
\caption{Notation}
\label{tab:nota}
 \begin{tabular}{c p{7cm}}
 \hline
 \multicolumn{2}{l}{\textbf{Problem-specific notations}} \\
 \hline
 $n$ & Number of users \\
 $\mathcal{G}$ & The user connected graph \\
 $\mathcal{U}$ & The set of users  \\
 $\tilde{\mathcal{U}}$ & The set of anomalies \\
 $\mathcal{E}$ & The set of edges \\
 $\boldsymbol{W}$ & Adjacency matrix of user connected graph \\
 $d$ & Dimension of feature vectors \\
 $T$ & Total number of rounds \\
 $\boldsymbol{\theta}_i$ & Feature vector of user $i$ \\
 $\boldsymbol{v}_i$ & Residual feature vector of user $i$ \\
 $\mathcal{A}_{i,t}$ & Set of context vectors of user $i$ at time $t$ \\
 $\boldsymbol{x}_{a_t,u_t}$ & The action of user $u_t$ at time $t$ \\
 $r_{u_t,a_t}$ & Payoff of user $u_t$ at time $t$ \\
 $\boldsymbol{\Theta}$ & Global feature parameter matrix \\
 $\boldsymbol{V}$ & Global residual feature parameter matrix \\
 $\mathring{\boldsymbol{X}}_{a_t,u_t}$ & Only the column corresponding to user $u_t$ is set to $\boldsymbol{x}_{a_t,u_t}^T$, the other columns are set to zero \\
 $\mathcal{J}$ & Support of $vec(\boldsymbol{V})$ (see Remark~\ref{rem:s}) \\
 $s_0$ & Sparsity index of $vec(\boldsymbol{V})$ (see Remark~\ref{rem:s}) \\
 $\boldsymbol{\Sigma}$ & Theoretical expected $d \times d$ covariance matrix \\
 $\widehat{\Sigma}_{i,t}$ & Empirical covariance matrix for user $i$ \\
 $\mathcal{T}_{i,t}$ & Time steps that user $i$ is selected until $t$ \\
 $\overline{\boldsymbol{\Sigma}}$ & Multi-user theoretical expected covariance matrix \\
 $\widehat{\overline{\boldsymbol{\Sigma}}}$ & Multi-user empirical covariance matrix \\
 $\lambda_t$ & Regularizer for Lasso at round $t$ \\
 $\lambda_0$ & Coefficient of the regularizer for Lasso \\
 $\hat{\mathcal{J}}_0^t$, $\hat{\mathcal{J}}_1^t$ & Estimate of the support after the first and second thresholding, respectively \\
 $\widehat{\boldsymbol{V}}_t$ & Estimated vector of $\boldsymbol{V}$ \\
 \hline
 \end{tabular}
 \end{center}
\end{table}

We consider a known network graph, $\mathcal{G} = \{\mathcal{U}, \mathcal{E}\}$, where $\mathcal{U} = \{1,2,\ldots, n\}$ denotes the set of $n$ users and $\mathcal{E}$ represents the set of edges encoding affinities between users. The set of abnormal users is denoted by $\tilde{\mathcal{U}} \subseteq \mathcal{U}$. We represent the network structure using a weighted matrix $\boldsymbol{W} \in \mathbb{R}^{n \times n}$. Similar to the settings in \cite{wu2016contextual}, each element $\boldsymbol{W}(i,j)$ of $\boldsymbol{W}$ is non-negative and proportional to the influence that user $i$ has on user $j$ in determining the payoffs. $\boldsymbol{W}$ is column-wise normalized such that $\sum_{i=1}^n\boldsymbol{W}(i,j) = 1$ for $j \in \mathcal{V}$. We assume $\boldsymbol{W}$ is time-invariant and known to the learner beforehand. 

Each user $i \in \mathcal{U}$ has an unknown feature vector $\boldsymbol{\theta}_i \in \mathbb{R}^d$. We collect these vectors in the global feature parameter matrix $\boldsymbol{\Theta} = (\boldsymbol{\theta}_1, \ldots, \boldsymbol{\theta}_n) \in \mathbb{R}^{d\times n}$. At each time step $t = 1,2, \ldots, T$, a user $u_t \in \mathcal{U}$ appears uniformly at random. Given a set $\mathcal{A}_{t} \subset \mathbb{R}^d$ consisting of $M$ arms, the learner selects an arm $\boldsymbol{x}_{t}$ to recommend to $u_t$, and obtains a payoff $r_t$. If $\mathcal{F}_t$ is the $\sigma$-algebra generated by random variables $(\mathcal{A}_1,\boldsymbol{x}_1,r_1,\ldots,\mathcal{A}_{t-1},\boldsymbol{x}_{t-1},r_{t-1},\mathcal{A}_t)$, $\boldsymbol{x}_t$ is $\mathcal{F}_t$-measurable. 

Previous work considers only the mutual influence among users \cite{wu2016contextual,wang2017factorization}, resulting in a payoff $r_t = \boldsymbol{x}_t^\top\boldsymbol{\Theta}\boldsymbol{W}(:,u_t) + \eta_t$, where $\eta_{t} \sim \mathcal{N}(0,\sigma^2)$ is a zero mean sub-Gaussian random variable with variance $\sigma^2$ given $\mathcal{F}_t$. However, such assumptions may not hold in the presence of anomalies, since anomalous behaviors do not conform to the patterns of the majority. To measure users' feature vectors' deviation from network-based collaborative model $\boldsymbol{\Theta}\boldsymbol{W}$, we associate each user $i \in \mathcal{U}$ with an unknown residual feature vector $\boldsymbol{v}_i \in \mathbb{R}^d$. The global residual parameter matrix yields $\boldsymbol{V} = (\boldsymbol{v}_1, \ldots, \boldsymbol{v}_n) \in \mathbb{R}^{d\times n}$. We define the bandit parameter matrix as $\overline{\boldsymbol{\Theta}} = \boldsymbol{\Theta}\boldsymbol{W} + \boldsymbol{V}$, which determines the reward. This additive-deviation formulation models anomalies as unusual individual behaviors that deviate from the group-level influence pattern, following the approach in \cite{zhao2016online,li2017radar,peng2018anomalous} and we extend it to the online setting. Under this assumption, the payoff for user $u_t$'s jointly determined by (i) mutual influences from other users through $\boldsymbol{\Theta}\boldsymbol{W}$ and and (ii) the user’s personal residual feature vector $\boldsymbol{V}$
\begin{align}
r_{a_t,u_t} &= \boldsymbol{x}_{a_t,u_t}^{\top}\overline{\boldsymbol{\Theta}}(:,u_t) + \eta_{t}  \notag \\   
&= vec(\mathring{\boldsymbol{X}}_{a_t,u_t})^\top vec(\boldsymbol{\Theta}\boldsymbol{W}+\boldsymbol{V}) + \eta_{t}. \label{eq:payoff}
\end{align}
In \eqref{eq:payoff}, $\mathring{\boldsymbol{X}}_{a_t,u_t} = (\boldsymbol{0}, \ldots, \boldsymbol{x}_{a_t,u_t}, \ldots, \boldsymbol{0}) \in \mathbb{R}^{d \times n}$ denotes the matrix, where the column corresponding to user $u_t$ is $\boldsymbol{x}_{a_t,u_t}^{\top}$, and the rest is zero.

The learner intends to learn a selection strategy that minimizes the cumulative regret with respect to an optimal strategy. Formally, the cumulative regret yields
\begin{align}
    R(T) = \mathbb{E}\Big[\sum_{t=1}^T R_t \Big] = \mathbb{E}\Big[ \sum_{t=1}^{T} (r_{a_t^*,u_t} - r_{a_t,u_t}) \Big],
\end{align}
where $u_t$ is the user served at $t$ and $r_{a_t^*,u_t}$ is the corresponding optimal payoff with respect to user $u_t$. The user $u_t$ is sampled uniformly at random from the entire user set. To support the primary goal of regret minimization, accurate anomaly detection becomes a necessary secondary objective. Specifically, at each round $t$, the learner will also select a set of users $\tilde{\mathcal{U}}_t$ as detected anomalies, and the objective is to make $\tilde{\mathcal{U}}_t$ as close to the ground-truth
set of anomalies $\tilde{\mathcal{U}}$ as possible. Formally, the secondary objective is two-fold: i) guarantee that all true anomalies are included, i.e., $\tilde{\mathcal{U}} \subseteq \tilde{\mathcal{U}}_t$; ii) minimize the number of normal users mistakenly identified as anomalies, i.e., bound $|\tilde{\mathcal{U}}_t \setminus \tilde{\mathcal{U}}|$.

In the following, we state some assumptions on feature and residual vectors. 

\begin{assumption}[Anomaly Residual Feature Vector]\label{assump:arfv}
The deviation of node $i$'s behavior from the expected behavior modeled by the network-based global model $\boldsymbol{\Theta}\boldsymbol{W}$ is represented by $\boldsymbol{v}_i$. In the absence of anomalies, users are expected to follow the mutual influence structure encoded by the graph, and hence their residual vectors vanish, i.e. $\norm{\boldsymbol{v}_i} = 0$. In contrast, a large residual norm indicates that the user's actions deviate significantly from what the network would predict, indicating anomalous behavior. Therefore, we assume a user to be anomalous if $\norm{\boldsymbol{v}_i} > \gamma > 0$. This formulation is inspired by \cite{li2017radar}, where anomalies are modeled as violations of low-rank structure in an offline matrix setting. Our model explicitly considers the residual deviation of feature vectors, making it compatible with online contextual bandit settings. 
\end{assumption}
\begin{remark}[Jointly Sparsity]
\label{rem:s}
Since anomalies occur less frequently than normal users, the global residual feature vector $vec(\boldsymbol{V}) \in \mathbb{R}^{nd}$ is a sparse vector. Define $\mathcal{J}(\boldsymbol{v}_i) = \{j \in [d], \boldsymbol{v}_i(j) \neq 0 \}$ as the sparsity pattern with respect to single-user residual vector $\boldsymbol{v}_i$, representing the subset of $|\mathcal{J}(\boldsymbol{v}_i)|$ non-zero components of vector $\boldsymbol{v}_i \in \mathbb{R}^d$. There is a set $\mathcal{J} \subset [nd]$ such that $\mathcal{J}(\boldsymbol{v}_i) \subset \mathcal{J}, \quad  \forall i \in \mathcal{U}$.
\end{remark}
\begin{assumption}\label{assump:bound}
Let $s_0 = |\mathcal{J}|$ denote the joint sparsity of the global residual vector $vec(\boldsymbol{V})$, where $s_0 \ll nd$. Under the sparsity assumption, the minimum nonzero entry satisfies $vec(\boldsymbol{V})_{\min} \geq \gamma/\sqrt{d}$. We further assume that the residual vector is bounded in both $\ell_1$ and $\ell_2$ norms: $\norm{vec(\boldsymbol{V})}_1 \leq s_1$ and $\norm{vec(\boldsymbol{V})} \leq s_v$. In addition, we assume the context vectors are bounded in $\ell_2$ and $\ell_{\infty}$ norms: for all $t$ and all $\boldsymbol{x} \in \mathcal{A}_{u_t,t}$, $\norm{\boldsymbol{x}}_2 \leq s_x$, and $\norm{\boldsymbol{x}}_{\infty} \leq s_A$. Finally, we assume the global parameter vector is also bounded: $\norm{vec(\boldsymbol{\Theta})} \leq s_{\theta}$.
\end{assumption}

\section{Algorithm}\label{sec:alg}
In this section, we introduce the NELA algorithm, which jointly learns users' feature vectors while simultaneously estimating their residuals to detect anomalies. Algorithm~\ref{alg:cola} presents the pseudocode. 

Accurate estimation of the global feature vector is critical for minimizing regret. Moreover, since anomalies are encoded in the residual component $\boldsymbol{V}$, accurately estimating $\boldsymbol{V}$ serves as a sufficient condition for effective anomaly detection. NELA algorithm maintains a profile $(\boldsymbol{A}_t,\boldsymbol{b}_t)$ for the global feature vector $\boldsymbol{\Theta}$ estimation and tracks the estimation of $\boldsymbol{V}$ for anomaly detections. With the expected payoff in \eqref{eq:payoff}, it is natural to deploy regularized regression to estimate the global feature matrix $\boldsymbol{\Theta}$ and residual matrix $\boldsymbol{V}$ at round $t$ as follows,
\begin{align}
    \min_{\boldsymbol{\Theta},\boldsymbol{V}} \frac{1}{2}\sum_{s=1}^t & (vec(\mathring{\boldsymbol{X}}_{a_s,u_s})^{\top} vec(\boldsymbol{\Theta}\boldsymbol{W}+\boldsymbol{V}) -  r_{a_s,u_s})^2  \notag \\
    & + \frac{\lambda_1}{2}\norm{vec(\boldsymbol{\Theta})}^2_2 + \frac{\lambda_2}{2}\norm{vec(\boldsymbol{V})}^2_2, \label{eq:ridge-reg}
\end{align}
where $\lambda_1$ and $\lambda_2$ is parameters of $l_2$ regularization in ridge regression. The closed-form solutions of \eqref{eq:ridge-reg} are easy to obtain via the alternating least square method with ridge regression. However, this can result in a term $O(\sqrt{ndT})$ in regret that scales polynomially with $vec(\boldsymbol{V})$'s dimension $nd$. In modern systems with thousands of users, this significantly increases the regret and the number of interactions required before anomalies are detected. To address the full dimensionality-dependence term in regret, we further reformulate the objective function \eqref{eq:ridge-reg} considering the sparse property of $vec(\boldsymbol{V})$ as follows
\begin{align}
    \min_{\boldsymbol{\Theta},\boldsymbol{V}} \frac{1}{2}\sum_{s=1}^t & (vec(\mathring{\boldsymbol{X}}_{a_s,u_s})^{\top} vec(\boldsymbol{\Theta}\boldsymbol{W}+\boldsymbol{V}) -  r_{a_s,u_s})^2  \notag \\
    & + \frac{\lambda_1}{2}\norm{vec(\boldsymbol{\Theta})}^2_2 + \frac{\lambda_s}{2}\norm{vec(\boldsymbol{V})}_1. \label{eq:ridge-lasso-reg}
\end{align}
Therefore, by carefully selecting parameter $\lambda_s$, $s \in [t]$ and applying lasso estimator, the regret term $O(\sqrt{ndT})$ can be reduced to $\mathcal{O}(\log (s_0)\sqrt{s_0T})$ with $s_0 \ll nd$. We solve the optimization problem \eqref{eq:ridge-lasso-reg} using a widely employed alternating method: fixing one parameter and solving for the other iteratively with closed-form solutions \cite{wang2017factorization,liu2018transferable}. The closed-form solutions are given by  $vec(\widehat{\boldsymbol{\Theta}}_t) = \boldsymbol{A}_t^{-1}\boldsymbol{b}_t$ and $vec(\widehat{\boldsymbol{V}}_t) = \arg \min_{vec(\boldsymbol{V})}\norm{\mathcal{R} - \mathcal{X}_Jvec(\boldsymbol{V})}_2^2$, where the detailed computation of $(\boldsymbol{A}_t,\boldsymbol{b}_t,\mathcal{R},\mathcal{X}_J)$ can be found in Algorithm~\ref{alg:cola} (line 8). 

At round $t$, a user $u_t$ appears to be severed and the arm set $\mathcal{A}_{u_t,t}$ arrives for the learner to choose from. The learner chooses the arm for user $u_t$ according to the following strategy
\begin{align}
    a_t = \arg \max_{a \in \mathcal{A}_{u_t,t}} & \Big (vec(\mathring{\boldsymbol{X}}_{a,u_{t}})^{\top} vec(\widehat{\boldsymbol{\Theta}}_t\boldsymbol{W} + \widehat{\boldsymbol{V}}_t) \notag \\
    &\quad  + \alpha_t \norm{vec(\mathring{\boldsymbol{X}}_{a,u_{t}}\boldsymbol{W}^{\top})}_{\boldsymbol{A}_{t-1}^{-1}} \Big ), \label{eq:action}
\end{align}
where $\alpha_t$ is a parameter that characterizes the upper bound for the estimation of $vec(\widehat{\boldsymbol{\Theta}}_t)$. See Lemma~\ref{lem:ucb} for more details. Specifically, the solution for $\widehat{\boldsymbol{\Theta}}_t$ is obtained using a ridge-regression approach with a fixed $\widehat{\boldsymbol{V}}_t$ and $\widehat{\boldsymbol{V}}_t$ is determined using a thresholded bandit algorithm \cite{ariu2022thresholded} with a fixed $\widehat{\boldsymbol{\Theta}}_t$.

To reduce false alarms in anomaly detection, NELA incorporates a two-step thresholding mechanism. After obtaining an initial estimate $\widehat{\boldsymbol{V}}_0$ of the residual matrix $\boldsymbol{V}$, each component is compared against a predefined threshold to produce a preliminary estimate of the support of $\boldsymbol{V}$. A second thresholding step with a larger threshold is then applied to further eliminate potential false positives by filtering out remaining normal users. This two-stage process enhances both the accuracy of anomaly detection and the estimation quality of the residual vector. 

In fact, anomaly detection and residual estimation in NELA are not isolated tasks, but mutually reinforcing processes. Detecting anomalies corresponds to identifying the support set of the residual vector $\boldsymbol{V}$. A more accurate estimate of this support set, denoted by $\hat{\mathcal{J}}_1^t$, leads to a sparser and more targeted estimation of the residual vector $vec(\widehat{\boldsymbol{V}}_t)$. Conversely, a more precise estimate of $\boldsymbol{V}$ provides a clearer signal for support detection, enhancing the reliability of anomaly identification in subsequent rounds. NELA explicitly makes use of this synergy, iteratively refining both the residual estimation and the detected anomaly set $\tilde{\mathcal{U}}_t$, resulting in low false alarm rates and improved learning performance.

\begin{algorithm}[!ht]
\caption{Networked LinUCB with Anomaly detection (NELA)} 
\label{alg:cola}
\begin{algorithmic}[1]
\STATE \textbf{Input}: Graph $\mathcal{G} = \{\mathcal{U}, \mathcal{E}\}$, parameters $\lambda_0$, $\lambda_1$ and $\alpha_t$. 
\STATE \textbf{Initialization} 
\begin{itemize}
    \item $\boldsymbol{A}_0 = \lambda_1 \boldsymbol{I} \in \mathbb{R}^{nd\times nd}$, $\boldsymbol{b}_0 = \boldsymbol{0} \in \mathbb{R}^{nd}$. 
    \item $vec(\widehat{\boldsymbol{\Theta}}_0) = \boldsymbol{A}_0^{-1}\boldsymbol{b}_0$ and $vec(\widehat{\boldsymbol{V}}_0) = \boldsymbol{0} \in \mathbb{R}^{nd}$
    \item $\tilde{\mathcal{U}}_t = \Phi$, $\forall t$
\end{itemize}
\FOR{$t = 1,2, \cdots, T$}
\STATE Receive user $u_t$
\STATE Observe context vectors $\boldsymbol{x}_a$ for each item $a \in \mathcal{A}_{u_t,t}$
\STATE Select item according to \eqref{eq:action}
\STATE Observe the payoff $r_{a_t,u_t}$
\STATE Update
\begingroup
\small
\begin{align*}
    &\boldsymbol{A}_t \leftarrow \boldsymbol{A}_{t-1} + vec(\mathring{\boldsymbol{X}}_{a_t,u_t}\boldsymbol{W}^{\top})vec(\mathring{\boldsymbol{X}}_{a_t,u_t}\boldsymbol{W}^{\top})^{\top} \\
    &\boldsymbol{b}_t \leftarrow \boldsymbol{b}_{t-1} + vec(\mathring{\boldsymbol{X}}_{a_t,u_t}\boldsymbol{W}^{\top}) \times \\
    &\quad \quad \quad \quad (r_{a_t,u_t} - vec(\mathring{\boldsymbol{X}}_{a_t,u_t})^{\top}vec(\widehat{\boldsymbol{V}}_t)) \\
    &vec(\widehat{\boldsymbol{\Theta}}_t) = \boldsymbol{A}_t^{-1}\boldsymbol{b}_t \\
    &\mathcal{X} \leftarrow (vec(\mathring{\boldsymbol{X}}_{a_1,u_1}), vec(\mathring{\boldsymbol{X}}_{a_2,u_2}),\ldots, vec(\mathring{\boldsymbol{X}}_{a_t,u_t}))^{\top},\\
    &\mathcal{R} \leftarrow (r_{a_1,u_1}- vec(\mathring{\boldsymbol{X}}_{a_1,u_1})^{\top}vec(\widehat{\boldsymbol{\Theta}}_1\boldsymbol{W}),\ldots, \\
    & \quad \quad  \quad r_{a_t,u_t}- vec(\mathring{\boldsymbol{X}}_{a_t,u_t})^{\top}vec(\widehat{\boldsymbol{\Theta}}_t\boldsymbol{W})) \\
    &\lambda_t \leftarrow \lambda_0 \sqrt{\frac{2\log t \log(nd)}{t}} \\
    &\widehat{\boldsymbol{V}}_0^t \leftarrow \arg \min_{\boldsymbol{V}}\frac{1}{t}\norm{\mathcal{R}- \mathcal{X}\boldsymbol{V}}_2^2 + \lambda_t\norm{\boldsymbol{V}}_1 \\
    &\hat{\mathcal{J}}_0^t \leftarrow \{j \in [nd]: |\widehat{\boldsymbol{V}}_0^t(j)| > 4\lambda_t\} \\
    &\hat{\mathcal{J}}_1^t \leftarrow \{j \in \hat{\mathcal{J}}_0^t: |\widehat{\boldsymbol{V}}_0^t(j) |> 4\lambda_t\sqrt{|\hat{\mathcal{J}}_0^t|}\} \\
    &\mathcal{X}_J \leftarrow (vec(\mathring{\boldsymbol{X}}_{a_1,u_1})(\hat{\mathcal{J}}_1^t),\ldots, vec(\mathring{\boldsymbol{X}}_{a_t,u_t})(\hat{J}_1^t))^{\top} \\
    &vec(\widehat{\boldsymbol{V}}_t) \leftarrow \arg \min_{vec(\boldsymbol{V})}\norm{\mathcal{R} - \mathcal{X}_Jvec(\boldsymbol{V})}_2^2
\end{align*}
\endgroup
\STATE Identify user $i$ as anomalous and add it to $\tilde{\mathcal{U}}_t$, if $j \in \hat{J}_1^t$ and $id \leq j \leq (i+1)d$. 
\ENDFOR
\end{algorithmic}
\end{algorithm}

\section{Theoretical Analysis}\label{sec:analysis}
In this section, we present a theoretical analysis of the proposed algorithm, NELA. We begin by outlining the key assumptions required for our analysis. Based on these assumptions, we then establish performance guarantees for NELA in terms of regret bounds and anomaly detection.

\subsection{Assumptions}
We use the following assumptions throughout the analysis. Specially, Assumptions~\ref{assump:context-d}, \ref{assump:armsd}, and \ref{assump:spd} are standard assumptions for sparse linear bandits \cite{oh2021sparsity,ariu2022thresholded}, with adaptation to a multi-user setting. Definition~\ref{def:cc} guarantees the compatibility conditions in a multi-user setting, similar to the RSC condition in \cite{cella2023multi}. 

\begin{assumption}[Context Distribution]
\label{assump:context-d}
At round $t$, the arm set $\mathcal{A}_{u_t,t} \subset \mathbb{R}^d$ consists of $M$ $d$-dimensional vectors $\boldsymbol{x}_m$. The tuples $\mathcal{A}_{u_1,1}, \ldots, \mathcal{A}_{u_t,t}$ are i.i.d. and follow a fixed unknown zero mean sub-Gaussian joint distribution $p$ on $\mathbb{R}^{Md}$ \cite{oh2021sparsity,ariu2022thresholded}. 
\end{assumption}
\begin{assumption}[Arm Distribution]
\label{assump:armsd}
There exists a constant $\nu < \infty$, such that $p(-\bar{\boldsymbol{x}})/p(\bar{\boldsymbol{x}}) \leq \nu$, $\forall \bar{\boldsymbol{x}} \in \mathbb{R}^{Md}$. Moreover, there exists a constant $\omega_{\mathcal{X}} < \infty$ such that, for any permutation $(a_1,a_2,\ldots,a_M)$ of $[M]$, any integer $m \in \{2, \ldots, M-1\}$ and any fixed $\boldsymbol{v}$, it holds \cite{oh2021sparsity,ariu2022thresholded}
\begin{align}
    \mathbb{E}&[\boldsymbol{x}_{a_m}\boldsymbol{x}_{a_m}^{\top}\mathbb{I}\{\boldsymbol{x}_{a_1}^{\top}\boldsymbol{v} < \ldots < \boldsymbol{x}_{a_M}^{\top}\boldsymbol{v}\}] \preceq \notag \\
    & \omega_{\mathcal{X}} \mathbb{E}[(\boldsymbol{x}_{a_1}\boldsymbol{x}_{a_1}^{\top}+\boldsymbol{x}_{a_M}\boldsymbol{x}_{a_M}^{\top})\mathbb{I}\{\boldsymbol{x}_{a_1}^{\top}\boldsymbol{v} < \ldots < \boldsymbol{x}_{a_M}^{\top}\boldsymbol{v}\}] \notag 
\end{align}
\end{assumption}
Parameter $\nu$ in Assumption~\ref{assump:armsd} characterizes the skewness of the arm distribution $p$; for symmetrical distributions $\nu = 1$. Assumption~\ref{assump:armsd} is satisfied for a large class of distributions, e.g., Gaussian, multi-dimensional uniform, and Bernoulli \cite{ariu2022thresholded,cella2023multi}. The value $\omega_{\mathcal{X}}$ captures dependencies between arms, the more positively correlated they are, the smaller $\omega_{\mathcal{X}}$ will be. When arms are generated i.i.d. from a multi-variate Gaussian or multi-variate uniform distribution over the sphere we have $\omega_{\mathcal{X}} = O(1)$ \cite{ariu2022thresholded,cella2023multi}. 

We further define the covariance matrices for multi-user systems. First, we define the theoretical averaged $d \times d$ covariance matrix as 
\begin{gather}
    \boldsymbol{\Sigma} = \frac{1}{M} \mathbb{E}[\sum_{m=1}^M\boldsymbol{x}_m\boldsymbol{x}_m^{\top}] = \frac{1}{M} \sum_{m=1}^M \boldsymbol{\Sigma}_m,
\end{gather}
where the expectation is over the decision set sampling distribution $p$, which is assumed to be shared between users. For every $i \in \mathcal{U}$, denote the empirical covariance matrix as
\begin{gather}
    \widehat{\boldsymbol{\Sigma}}_{i,t} = \frac{1}{|\mathcal{T}_{i,t}|}\sum_{ s \in \mathcal{T}_{i,t}}\boldsymbol{x}_{i,s}\boldsymbol{x}_{i,s}^{\top},
\end{gather}
where $\mathcal{T}_{i,t}$ is the set collects the rounds that user $i$ is served. We use the notation $\overline{\boldsymbol{\Sigma}}$ and $\widehat{\overline{\boldsymbol{\Sigma}}}_t \in \mathbb{R}^{dn \times dn}$ for the theoretical and empirical multi-user matrices respectively. They are both block diagonal and composed of the corresponding $n$ single user $d \times d$ matrices on the diagonal, that is
\begin{align}
    &\overline{\boldsymbol{\Sigma}} = diag(\boldsymbol{\Sigma}, \ldots, \boldsymbol{\Sigma}) \\
    &\widehat{\overline{\boldsymbol{\Sigma}}}_t =  diag(\widehat{\boldsymbol{\Sigma}}_{1,t}, \ldots, \widehat{\boldsymbol{\Sigma}}_{n,t}).
\end{align}
%
\begin{definition}[Compatibility Conditions]\label{def:cc}
Let $\overline{\boldsymbol{\Sigma}} \in \mathbb{R}^{nd \times nd}$ be the multi-user covariance matrix and $\mathcal{J} \subset [nd]$. We say that $\mathcal{J}$ satisfies the compatibility conditions with constant $\phi(\overline{\boldsymbol{\Sigma}}, \mathcal{J}) > 0$, if $\forall \boldsymbol{\Delta} \in \mathbb{R}^{nd}\backslash  \{\boldsymbol{0}\}$ such that $\sum_{j \in \mathcal{J}^c}\norm{\boldsymbol{\Delta}(j)} \leq 3\sum_{j \in \mathcal{J}}\norm{\boldsymbol{\Delta}(j)}$, where $\mathcal{J}^c$ is the complement of $\mathcal{J}$, it holds \cite{cella2023multi}
\begin{gather}
    \norm{\boldsymbol{\Delta}(\mathcal{J})}^2 \leq \frac{\norm{\boldsymbol{\Delta}}^2_{\overline{\boldsymbol{\Sigma}}}}{\phi^2(\overline{\boldsymbol{\Sigma}}, \mathcal{J})}.
\end{gather}
We assume that $\phi_0^2 > 0$ is the lower bound of $\phi^2(\overline{\boldsymbol{\Sigma}}, \mathcal{J})$. 
\end{definition}

\begin{assumption}[Sparse Positive Definiteness]\label{assump:spd}
Define for each $\mathcal{B} \subset [nd]$, $\overline{\boldsymbol{\Sigma}}_B = \frac{1}{M}\mathbb{E}[\sum_{m=1}^M\boldsymbol{X}_m(\mathcal{B})\boldsymbol{X}_m(\mathcal{B})^{\top}]$, where $\boldsymbol{X}_m$ is a $nd$-dimensional vector that satisfying $\overline{\boldsymbol{\Sigma}} = \frac{1}{M}\mathbb{E}[\sum_{m=1}^M\boldsymbol{X}_m\boldsymbol{X}_m^{\top}]$. All the expectations here are over the decision set sampling distribution $p$. $\boldsymbol{X}_m(\mathcal{B})$ is a $|\mathcal{B}|$-dimensional vector, which is extracted from the elements of $\boldsymbol{X}_m$ with indices in $\mathcal{B}$. There exists a positive constant $\beta > 0$ such that $\forall \mathcal{B} \subset [nd]$ \cite{ariu2022thresholded},
\begin{align}
    |\mathcal{B}| \leq &  s_0 + (4\nu\omega_{\mathcal{X}}\sqrt{s_0})/\phi_0^2 \quad \text{and} \quad \mathcal{J} \subset \mathcal{B} \notag \\
    & \rightarrow \min_{\boldsymbol{l} \in \mathbb{R}^{|B|}: \norm{\boldsymbol{l}}_2 = 1} \boldsymbol{l}^{\top} \overline{\boldsymbol{\Sigma}}_B \boldsymbol{l} \geq \beta,
\end{align}
where the parameters $\phi_0$, $\nu$, $\omega_{\mathcal{X}}$ are defined in Assumptions~\ref{assump:context-d},\ref{assump:armsd} and Definition~\ref{def:cc}. 
\end{assumption}

Assumption~\ref{assump:spd} ensures that the context distribution is sufficiently diverse to allow reliable estimation of the residual vectors corresponding to anomalies. This assumption is specifically leveraged in analyzing the regression performance over the low-dimensional subspace defined by the active residual indices $\hat{\mathcal{J}}_1^t$. We also emphasize that Assumptions~\ref{assump:arfv}-\ref{assump:spd} are introduced solely to facilitate theoretical analysis and may not strictly hold in empirical settings.

\subsection{Performance Guarantee}
First, we prove the estimation error of $\boldsymbol{\Theta}$ is bounded with high probability in the following Lemma~\ref{lem:ucb}. 
\begin{lemma}\label{lem:ucb}
With probability at least $1-\delta$, it holds
\begingroup
\small
\begin{align*}
    \norm{vec(\widehat{\boldsymbol{\Theta}}_{t}) - vec(\boldsymbol{\Theta})}_{\boldsymbol{A}_t} &\leq (\sigma + 2s_xs_v)\sqrt{\log(\frac{\det(\boldsymbol{A}_t)}{\lambda_1^{dn}\delta^2})} + \sqrt{\lambda_1}s_{\theta}.
\end{align*}
\endgroup
We use $\alpha_t$ to denote the upper bound of $\norm{vec(\widehat{\boldsymbol{\Theta}}_{t}) - vec(\boldsymbol{\Theta})}_{\boldsymbol{A}_t}$ for all $t \in [T]$. 
\end{lemma}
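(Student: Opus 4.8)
The statement is a self-normalized concentration bound in the spirit of the OFUL/LinUCB analysis, so the plan is to rewrite the estimator as a ridge regression with feature vectors $\boldsymbol{\phi}_s := vec(\mathring{\boldsymbol{X}}_{a_s,u_s}\boldsymbol{W}^{\top})$ and then split the error into a regularization bias, a residual-mismatch bias, and a martingale noise term. The first observation is that the $\boldsymbol{\Theta}$-dependent part of the mean payoff can be rewritten via $vec(\boldsymbol{A})^{\top}vec(\boldsymbol{B}) = \mathrm{tr}(\boldsymbol{A}^{\top}\boldsymbol{B})$ as $vec(\mathring{\boldsymbol{X}}_{a_s,u_s})^{\top}vec(\boldsymbol{\Theta}\boldsymbol{W}) = \boldsymbol{\phi}_s^{\top}vec(\boldsymbol{\Theta})$, which is exactly why Algorithm~\ref{alg:cola} assembles $\boldsymbol{A}_t$ and $\boldsymbol{b}_t$ from $\boldsymbol{\phi}_s$. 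Writing $\boldsymbol{\psi}_s := vec(\mathring{\boldsymbol{X}}_{a_s,u_s})$ and substituting the payoff $r_s = \boldsymbol{\phi}_s^{\top}vec(\boldsymbol{\Theta}) + \boldsymbol{\psi}_s^{\top}vec(\boldsymbol{V}) + \eta_s$ into $\boldsymbol{b}_t = \sum_{s\le t}\boldsymbol{\phi}_s(r_s - \boldsymbol{\psi}_s^{\top}vec(\widehat{\boldsymbol{V}}_s))$, together with $\boldsymbol{A}_t = \lambda_1\boldsymbol{I} + \sum_{s\le t}\boldsymbol{\phi}_s\boldsymbol{\phi}_s^{\top}$, yields the decomposition
\[
vec(\widehat{\boldsymbol{\Theta}}_t) - vec(\boldsymbol{\Theta}) = -\lambda_1 \boldsymbol{A}_t^{-1}vec(\boldsymbol{\Theta}) + \boldsymbol{A}_t^{-1}\sum_{s\le t}\boldsymbol{\phi}_s \zeta_s + \boldsymbol{A}_t^{-1}\sum_{s\le t}\boldsymbol{\phi}_s \eta_s,
\]
where $\zeta_s := \boldsymbol{\psi}_s^{\top}(vec(\boldsymbol{V}) - vec(\widehat{\boldsymbol{V}}_s))$ is the scalar residual-mismatch at round $s$.

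Applying the $\boldsymbol{A}_t$-norm and the triangle inequality separates the bound into three pieces. For the regularization bias I would use $\boldsymbol{A}_t \succeq \lambda_1\boldsymbol{I}$ (hence $\boldsymbol{A}_t^{-1}\preceq \lambda_1^{-1}\boldsymbol{I}$) and Assumption~\ref{assump:bound} to get $\lambda_1\norm{\boldsymbol{A}_t^{-1}vec(\boldsymbol{\Theta})}_{\boldsymbol{A}_t} = \lambda_1\norm{vec(\boldsymbol{\Theta})}_{\boldsymbol{A}_t^{-1}} \le \sqrt{\lambda_1}\norm{vec(\boldsymbol{\Theta})}_2 \le \sqrt{\lambda_1}\,s_{\theta}$, which is the additive $\sqrt{\lambda_1}s_{\theta}$ term.

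For the two remaining pieces I would merge them into a single self-normalized sum $\sum_{s\le t}\boldsymbol{\phi}_s(\eta_s+\zeta_s)$. The key estimate is that the mismatch is uniformly bounded: by Cauchy--Schwarz and Assumption~\ref{assump:bound}, $|\zeta_s| \le \norm{\boldsymbol{\psi}_s}_2\,\norm{vec(\boldsymbol{V}) - vec(\widehat{\boldsymbol{V}}_s)}_2 \le s_x\big(\norm{vec(\boldsymbol{V})}_2 + \norm{vec(\widehat{\boldsymbol{V}}_s)}_2\big) \le 2 s_x s_v$, using $\norm{\boldsymbol{\psi}_s}_2 = \norm{\boldsymbol{x}_{a_s,u_s}}_2 \le s_x$ and that both the true residual and its estimate are $\ell_2$-bounded by $s_v$. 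Treating $\xi_s := \eta_s + \zeta_s$ as a conditionally sub-Gaussian sequence with variance proxy $(\sigma + 2 s_x s_v)^2$ --- the genuine noise contributing $\sigma$ and the bounded mismatch contributing $2 s_x s_v$ --- and noting that $\boldsymbol{\phi}_s$ and $\widehat{\boldsymbol{V}}_s$ are $\mathcal{F}_s$-predictable so the martingale structure holds, the self-normalized inequality of Abbasi-Yadkori, P\'al and Szepesv\'ari with regularizer $\lambda_1\boldsymbol{I}$ gives, with probability at least $1-\delta$,
\[
\norm{\sum_{s\le t}\boldsymbol{\phi}_s\xi_s}_{\boldsymbol{A}_t^{-1}} \le (\sigma + 2 s_x s_v)\sqrt{\log\!\Big(\tfrac{\det(\boldsymbol{A}_t)}{\lambda_1^{dn}\delta^2}\Big)}.
\]
Summing the three contributions yields the claim.

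The main obstacle is the residual-mismatch term $\sum_{s\le t}\boldsymbol{\phi}_s\zeta_s$. A purely deterministic treatment (Cauchy--Schwarz together with the elliptical-potential inequality) only produces an $O(s_x s_v\sqrt{t})$ bound, which is far too loose to match the stated $\sqrt{\log\det\boldsymbol{A}_t}$ rate; recovering the clean form requires folding $\zeta_s$ into an effective sub-Gaussian noise of scale $\sigma + 2 s_x s_v$ and invoking the self-normalized bound on the combined quantity. The two delicate points are therefore (i) certifying the uniform bound $|\zeta_s|\le 2 s_x s_v$, which needs the Lasso estimate $\widehat{\boldsymbol{V}}_s$ to stay $\ell_2$-bounded by $s_v$ (guaranteed by the construction/projection of the residual step), and (ii) justifying that the bounded, $\mathcal{F}_s$-measurable drift $\zeta_s$ may be absorbed into the sub-Gaussian parameter so that the martingale inequality applies with the inflated proxy.
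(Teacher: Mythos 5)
Your proof is correct and takes essentially the same route as the paper: the identical three-term decomposition into a regularization bias (bounded by $\sqrt{\lambda_1}s_{\theta}$ via $\boldsymbol{A}_t \succeq \lambda_1\boldsymbol{I}$), a noise martingale, and a residual-mismatch term (the paper's $\boldsymbol{Q}_{v,t}$, which equals your $-\sum_{s\le t}\boldsymbol{\phi}_s\zeta_s$), each controlled through the self-normalized inequality of Abbasi-Yadkori et al.\ with the mismatch at scale $2s_xs_v$. The only cosmetic difference is that you fold both stochastic terms into one martingale with proxy $\sigma+2s_xs_v$ while the paper applies the self-normalized bound to $\boldsymbol{S}_t$ and $\boldsymbol{Q}_{v,t}$ separately, and your flagged delicate point (ii) --- that a non-centered predictable drift cannot literally be absorbed into a sub-Gaussian proxy --- is exactly the step the paper discharges by asserting, via Lemma~\ref{lem:als-est-err} (adapted from prior work under a ``properly initialized ALS'' condition), that $\zeta_s$ is conditionally zero-mean $2s_xs_v$-sub-Gaussian, so your proof matches the paper's at the same level of rigor.
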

\begin{proof}
    See Appendix~\ref{app:lem-ucb}.
\end{proof}
Then, we demonstrate the performance guarantee in terms of anomaly detection in the following Lemma~\ref{lem:support-recover}. 

\begin{lemma}[Estimation of the Support \& Anomaly Detection]\label{lem:support-recover}
Let $t \geq \frac{2\log(2n^2d^2)}{C_0^2}$ such that $4\Big(\frac{4\nu\omega_{\mathcal{X}}s_0}{\phi_0^2}+ \sqrt{(1+\frac{4\nu\omega_{\mathcal{X}}}{\phi_0^2})s_0}\Big)\lambda_t \leq \gamma/\sqrt{d} \leq vec(\boldsymbol{V})_{\min}$, where $vec(\boldsymbol{V})_{\min}$ refers to the smallest non-zero entry. Under Assumptions~\ref{assump:arfv}-\ref{assump:armsd}, the detected anomaly set includes all anomalous users and the false alarm is bounded with high probability, i.e., $\mathbb{P}(\tilde{\mathcal{U}} \subset \tilde{\mathcal{U}}_t$ and $|\tilde{\mathcal{U}}_t \backslash \tilde{\mathcal{U}}| \leq \frac{4\nu\omega_{\mathcal{X}}\sqrt{s_0}}{\phi_0^2}) \geq 1 - 2\exp(-\frac{t\lambda_t^2}{32\sigma^2s_A^2}+\log nd)-\exp(-\frac{tC_0^2}{2})$. 
\end{lemma}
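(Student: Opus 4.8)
The plan is to cast the residual update in line 8 of Algorithm~\ref{alg:cola} as a thresholded Lasso problem and to adapt the support-recovery argument of the thresholded Lasso bandit to the block-diagonal multi-user covariance. Substituting the reward model~\eqref{eq:payoff} into the definition of $\mathcal{R}$, I would write $\mathcal{R}_s = vec(\mathring{\boldsymbol{X}}_{a_s,u_s})^\top vec(\boldsymbol{V}) + \epsilon_s$ with effective noise $\epsilon_s = \eta_s + vec(\mathring{\boldsymbol{X}}_{a_s,u_s})^\top vec((\boldsymbol{\Theta}-\widehat{\boldsymbol{\Theta}}_s)\boldsymbol{W})$, i.e. the genuine sub-Gaussian noise plus a bias from plugging in the running estimate $\widehat{\boldsymbol{\Theta}}_s$. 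The first step is to establish, on a high-probability event, the Lasso validity condition $\norm{\tfrac1t\mathcal{X}^\top\epsilon}_\infty\le\lambda_t/2$. For the $\eta$-part I would apply a coordinate-wise sub-Gaussian tail bound, using $\norm{\boldsymbol{x}}_\infty\le s_A$ from Assumption~\ref{assump:bound} to control the per-coordinate variance proxy, and a union bound over the $nd$ coordinates; this is precisely what generates the term $2\exp(-\tfrac{t\lambda_t^2}{32\sigma^2 s_A^2}+\log nd)$. The bias part I would bound deterministically on the event of Lemma~\ref{lem:ucb}, so that the choice $\lambda_t=\lambda_0\sqrt{2\log t\,\log(nd)/t}$ dominates it for $\lambda_0$ large enough.

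Next I would transfer the compatibility condition of Definition~\ref{def:cc} from the population matrix $\overline{\boldsymbol{\Sigma}}$ to the empirical Gram matrix $\tfrac1t\mathcal{X}^\top\mathcal{X}$. Since users arrive uniformly and $\overline{\boldsymbol{\Sigma}}=\mathrm{diag}(\boldsymbol{\Sigma},\ldots,\boldsymbol{\Sigma})$ is block diagonal, each user's block is built from $\approx t/n$ i.i.d. sub-Gaussian contexts (Assumption~\ref{assump:context-d}); a matrix-concentration argument, valid once $t\ge 2\log(2n^2d^2)/C_0^2$, shows the empirical matrix inherits the compatibility constant up to a constant factor, with failure probability at most $\exp(-tC_0^2/2)$. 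On the intersection of the two events, the Lasso oracle inequality gives $\norm{\widehat{\boldsymbol{V}}_0^t-vec(\boldsymbol{V})}_1\le \tfrac{16\nu\omega_{\mathcal{X}}s_0}{\phi_0^2}\lambda_t$, which is exactly the first summand in the $vec(\boldsymbol{V})_{\min}$ hypothesis, and in particular $\norm{\widehat{\boldsymbol{V}}_0^t-vec(\boldsymbol{V})}_\infty\le \tfrac{16\nu\omega_{\mathcal{X}}s_0}{\phi_0^2}\lambda_t$.

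The third step is the two-stage thresholding analysis. For the first threshold, the $\ell_\infty$ bound together with the hypothesis $vec(\boldsymbol{V})_{\min}\ge 4(\tfrac{4\nu\omega_{\mathcal{X}}s_0}{\phi_0^2}+\sqrt{(1+\tfrac{4\nu\omega_{\mathcal{X}}}{\phi_0^2})s_0})\lambda_t$ guarantees $|\widehat{\boldsymbol{V}}_0^t(j)|>4\lambda_t$ for every $j\in\mathcal{J}$, so $\mathcal{J}\subset\hat{\mathcal{J}}_0^t$; counting off-support coordinates exceeding $4\lambda_t$ against the $\ell_1$ bound yields $|\hat{\mathcal{J}}_0^t\setminus\mathcal{J}|\le\tfrac{4\nu\omega_{\mathcal{X}}s_0}{\phi_0^2}$, hence $|\hat{\mathcal{J}}_0^t|\le(1+\tfrac{4\nu\omega_{\mathcal{X}}}{\phi_0^2})s_0$. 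The second threshold $4\lambda_t\sqrt{|\hat{\mathcal{J}}_0^t|}$ is then at most $4\lambda_t\sqrt{(1+\tfrac{4\nu\omega_{\mathcal{X}}}{\phi_0^2})s_0}$, so the same hypothesis keeps all of $\mathcal{J}$ in $\hat{\mathcal{J}}_1^t$ (no false negatives), giving $\tilde{\mathcal{U}}\subset\tilde{\mathcal{U}}_t$; meanwhile, bounding the surviving off-support coordinates through $|\hat{\mathcal{J}}_1^t\setminus\mathcal{J}|\cdot 4\lambda_t\sqrt{|\hat{\mathcal{J}}_0^t|}\le\norm{(\widehat{\boldsymbol{V}}_0^t-vec(\boldsymbol{V}))_{\mathcal{J}^c}}_1$ and using $|\hat{\mathcal{J}}_0^t|\ge s_0$ collapses the false-positive count to $\tfrac{4\nu\omega_{\mathcal{X}}\sqrt{s_0}}{\phi_0^2}$. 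Since each falsely flagged user owns at least one surviving coordinate, this also bounds $|\tilde{\mathcal{U}}_t\setminus\tilde{\mathcal{U}}|$, and a union bound over the two failure events delivers the claimed probability.

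I expect the hardest part to be the noise control of the first step: the bias $vec(\mathring{\boldsymbol{X}}_{a_s,u_s})^\top vec((\boldsymbol{\Theta}-\widehat{\boldsymbol{\Theta}}_s)\boldsymbol{W})$ is data-dependent and correlated across rounds, because $\widehat{\boldsymbol{\Theta}}_s$ is fit on the same trajectory, so cleanly separating it from the martingale sub-Gaussian part and showing it is uniformly $o(\lambda_t)$ is delicate — one must invoke Lemma~\ref{lem:ucb} carefully and ensure the resulting deterministic bound does not inflate the constant in front of $\lambda_t$. The secondary difficulty is the empirical compatibility/restricted-eigenvalue transfer: with only $\approx t/n$ samples per user it must hold uniformly over all index sets of the size permitted by Assumption~\ref{assump:spd}, which is what forces the sample threshold $t\ge 2\log(2n^2d^2)/C_0^2$.
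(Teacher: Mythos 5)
Your proposal reproduces the paper's own route almost step for step: the same Lasso oracle inequality $\norm{\widehat{\boldsymbol{V}}_0^t - vec(\boldsymbol{V})}_1 \leq 4s_0\lambda_t/\phi_t^2$ on the intersection of a noise event and a compatibility event, the same two-stage threshold counting ($|\hat{\mathcal{J}}_0^t\setminus\mathcal{J}| \leq 4\nu\omega_{\mathcal{X}}s_0/\phi_0^2$, then $|\hat{\mathcal{J}}_1^t\setminus\mathcal{J}| \leq 4\nu\omega_{\mathcal{X}}\sqrt{s_0}/\phi_0^2$ using $|\hat{\mathcal{J}}_0^t|\geq s_0$), the same substitution $\phi_t^2 = \phi_0^2/(4\nu\omega_{\mathcal{X}})$, the same coordinate-to-user mapping, and the same two failure probabilities.

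One step, however, is misdescribed in a way that would fail if carried out as written: you justify the transfer of compatibility to the empirical Gram matrix by treating each user's block as built from ``$\approx t/n$ i.i.d.\ sub-Gaussian contexts.'' The Gram matrix is formed from the \emph{selected} arms $\boldsymbol{x}_{a_s,u_s}$, whose conditional law is adaptive and biased by the UCB rule, so plain i.i.d.\ matrix concentration does not apply and, notably, would never produce the $\nu\omega_{\mathcal{X}}$ factors you (correctly) carry in your constants. The paper's actual mechanism is the comparison $\overline{\boldsymbol{\Sigma}}_t = \frac{1}{t}\sum_{s=1}^t \mathbb{E}[vec(\mathring{\boldsymbol{X}}_{a_s,u_s})vec(\mathring{\boldsymbol{X}}_{a_s,u_s})^{\top}\,|\,\overline{\mathcal{F}}_{s-1}] \succeq (2\nu\omega_{\mathcal{X}})^{-1}\overline{\boldsymbol{\Sigma}}$, which rests on Assumption~\ref{assump:armsd} (Lemma~\ref{lem:phi-sigma}), combined with the martingale Bernstein-type bound of Lemma~\ref{lem:bl-ineq} (via Lemma~\ref{lem:sigma-norm}) and Corollary~\ref{cor:68}; this is exactly where the relaxed-symmetry and balance parameters enter and where the requirement $t \geq 2\log(2n^2d^2)/C_0^2$ originates. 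Since you elsewhere invoke the thresholded-Lasso-bandit machinery and your final constants match, this is a local mischaracterization rather than a structural error, but the i.i.d.\ claim should be replaced by the adapted-covariance argument. On the other flagged difficulty, your instinct is better than the paper's: you propose to bound the plug-in bias $vec(\mathring{\boldsymbol{X}}_{a_s,u_s})^{\top}vec((\boldsymbol{\Theta}-\widehat{\boldsymbol{\Theta}}_s)\boldsymbol{W})$ deterministically on the event of Lemma~\ref{lem:ucb} and let $\lambda_t$ dominate it, whereas the paper's Lemma~\ref{lem:epsilon-v-p} makes this term vanish from the difference $\varepsilon_t(\widehat{\boldsymbol{V}}_0^t)-\varepsilon_t(vec(\boldsymbol{V}))$ and Lemma~\ref{lem:ehat-v} silently drops the corresponding cross term, so your more explicit treatment addresses a point the paper glosses over.
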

\begin{proof}
    See Appendix~\ref{app:lem-support-recover}. 
\end{proof}

Based on previous Lemmas, we can derive an upper bound on the instantaneous regret. 
\begin{lemma}\label{lem:ins-rgt}
Under Assumptions~\ref{assump:arfv}-\ref{assump:spd}, define event $\mathcal{D}_t = \{\mathcal{J} \subset \hat{\mathcal{J}}_1^t \quad  and \quad  
 |\hat{\mathcal{J}}_1^t \backslash \mathcal{J}| \leq \frac{4\nu\omega_{\mathcal{X}}\sqrt{s_0}}{\phi_0^2}\}$ and event $\mathcal{L}_t^{\frac{\beta}{4\nu\omega_{\mathcal{X}}}} = \Big\{ \lambda_{\min}(\widehat{\overline{\boldsymbol{\Sigma}}}_{t,\hat{\mathcal{J}}}) \geq \frac{\beta}{4\nu\omega_{\mathcal{X}}} \Big\}$, where $\widehat{\overline{\boldsymbol{\Sigma}}}_{t,\hat{J}} = \frac{1}{t}\sum_{s=1}^t vec(\mathring{\boldsymbol{X}}_{a_s,u_s})(\hat{\mathcal{J}})vec(\mathring{\boldsymbol{X}}_{a_s,u_s})(\hat{\mathcal{J}})^{\top}$ is the empirical Gram matrix on the estimated support with $\hat{\mathcal{J}}_1^t = \hat{\mathcal{J}}$ for simplicity. For any $t$, the instantaneous regret of NELA is bounded by
\begin{align}
    &\mathbb{E}[R_t] \leq \frac{4\nu\omega_{\mathcal{X}}}{\beta}2s_xs_{\theta}\sqrt{\frac{2}{t}\log(2d/\delta)}   \notag \\
    &\quad+ \frac{14h_0^2\sigma s_A\nu \omega_{\mathcal{X}}}{\beta}\sqrt{\frac{2(s_0+\frac{4\nu\omega_{\mathcal{X}}}{\phi_0^2})}{t-1}} \notag \\
    &\quad  + \alpha_t\norm{vec(\mathring{\boldsymbol{X}}_{a_t,u_t}\boldsymbol{W}^{\top})}_{\boldsymbol{A}_{t-1}^{-1}} + \alpha_t  \norm{vec(\mathring{\boldsymbol{X}}_{a_t^*,u_t}\boldsymbol{W}^{\top})}_{\boldsymbol{A}_{t-1}^{-1}} \notag \\
    &\quad  + 2s_1s_A\Big (\mathbb{P}(\mathcal{D}_t^c) + \mathbb{P}((\mathcal{L}_t^{\frac{\beta}{4\nu\omega_{\mathcal{X}}}})^c | \mathcal{D}_t) \Big). 
\end{align}
The probability of events $\mathcal{D}_t$ and $\mathcal{L}_t^{\frac{\beta}{4\nu\omega_{\mathcal{X}}}}$ are associated with Lemma~\ref{lem:support-recover} and Lemma~\ref{lem:min-eigen} and ~\ref{lem:eat} in Appendix~\ref{app:aux}.
\end{lemma}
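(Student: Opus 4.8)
The plan is to bound the (noiseless) instantaneous regret $R_t = r_{a_t^*,u_t}-r_{a_t,u_t}$ by a standard upper-confidence-bound optimism argument, but carefully separating the two sources of estimation error: the global-feature error $\widehat{\boldsymbol{\Theta}}_t-\boldsymbol{\Theta}$, which is compensated by the exploration bonus via Lemma~\ref{lem:ucb}, and the residual error $\widehat{\boldsymbol{V}}_t-\boldsymbol{V}$, which carries no bonus and must be controlled directly. The first step is the vectorization identity $vec(\mathring{\boldsymbol{X}}_{a,u_t})^{\top}vec(\widehat{\boldsymbol{\Theta}}_t\boldsymbol{W}) = vec(\mathring{\boldsymbol{X}}_{a,u_t}\boldsymbol{W}^{\top})^{\top}vec(\widehat{\boldsymbol{\Theta}}_t)$, which follows from $\mathrm{tr}(\mathring{\boldsymbol{X}}^{\top}\widehat{\boldsymbol{\Theta}}\boldsymbol{W})=\mathrm{tr}((\mathring{\boldsymbol{X}}\boldsymbol{W}^{\top})^{\top}\widehat{\boldsymbol{\Theta}})$. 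It lets me write the $\boldsymbol{\Theta}$-part of the predicted reward as a linear form in $vec(\mathring{\boldsymbol{X}}_{a,u_t}\boldsymbol{W}^{\top})$, matching exactly the weighted norm appearing in the action rule \eqref{eq:action} and in Lemma~\ref{lem:ucb}.

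Next I would split the predicted-minus-true reward of any arm as $E_a^{\Theta}+E_a^{V}$, where $E_a^{\Theta}=vec(\mathring{\boldsymbol{X}}_{a,u_t}\boldsymbol{W}^{\top})^{\top}(vec(\widehat{\boldsymbol{\Theta}}_t)-vec(\boldsymbol{\Theta}))$ and $E_a^{V}=vec(\mathring{\boldsymbol{X}}_{a,u_t})^{\top}(vec(\widehat{\boldsymbol{V}}_t)-vec(\boldsymbol{V}))$. On the event of Lemma~\ref{lem:ucb}, Cauchy--Schwarz in the $\boldsymbol{A}_{t-1}^{-1}$ norm gives $|E_a^{\Theta}|\le \alpha_t\norm{vec(\mathring{\boldsymbol{X}}_{a,u_t}\boldsymbol{W}^{\top})}_{\boldsymbol{A}_{t-1}^{-1}}$. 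Optimism (the true $\boldsymbol{\Theta}$-reward of $a_t^*$ lies below its UCB score, and $a_t$ maximizes the score) then yields $R_t\le \alpha_t\norm{vec(\mathring{\boldsymbol{X}}_{a_t,u_t}\boldsymbol{W}^{\top})}_{\boldsymbol{A}_{t-1}^{-1}}+\alpha_t\norm{vec(\mathring{\boldsymbol{X}}_{a_t^*,u_t}\boldsymbol{W}^{\top})}_{\boldsymbol{A}_{t-1}^{-1}}+|E_{a_t}^{V}|+|E_{a_t^*}^{V}|$, i.e.\ precisely the two bonus terms of the statement plus the uncompensated residual error evaluated at the chosen and optimal arms.

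The heart of the argument is bounding $\mathbb{E}[|E_a^{V}|]$, which I would split on the good event $\mathcal{D}_t\cap\mathcal{L}_t^{\beta/(4\nu\omega_{\mathcal{X}})}$. On $\mathcal{D}_t$ the recovered support $\hat{\mathcal{J}}_1^t$ contains $\mathcal{J}$ and has size at most $s_0+\tfrac{4\nu\omega_{\mathcal{X}}\sqrt{s_0}}{\phi_0^2}$ (Lemma~\ref{lem:support-recover}), so $\widehat{\boldsymbol{V}}_t$ is an $O(s_0)$-dimensional least-squares fit whose restricted Gram matrix has least eigenvalue at least $\beta/(4\nu\omega_{\mathcal{X}})$ on $\mathcal{L}_t$ (Lemmas~\ref{lem:min-eigen},~\ref{lem:eat}, using Assumption~\ref{assump:spd}). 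Writing the residual response $\mathcal{R}_s$ exactly shows it equals the true $\boldsymbol{V}$-signal plus noise $\eta_s$ plus the $\boldsymbol{\Theta}$-misfit $vec(\mathring{\boldsymbol{X}}_{a_s,u_s})^{\top}vec((\boldsymbol{\Theta}-\widehat{\boldsymbol{\Theta}}_s)\boldsymbol{W})$; hence the restricted error $\widehat{\boldsymbol{V}}_t-\boldsymbol{V}$ decomposes into a noise-driven part and a $\boldsymbol{\Theta}$-contamination part. A sub-Gaussian tail bound on $\mathcal{X}_J^{\top}\boldsymbol{\eta}$ over the low-dimensional support, scaled by the inverse eigenvalue, produces the $\sigma$-term $\tfrac{14h_0^2\sigma s_A\nu\omega_{\mathcal{X}}}{\beta}\sqrt{2(s_0+\tfrac{4\nu\omega_{\mathcal{X}}}{\phi_0^2})/(t-1)}$, while feeding Lemma~\ref{lem:ucb}'s bound on $\norm{\widehat{\boldsymbol{\Theta}}_s-\boldsymbol{\Theta}}$ through the same factor and averaging over rounds produces the $s_\theta$-term $\tfrac{4\nu\omega_{\mathcal{X}}}{\beta}2s_xs_\theta\sqrt{2\log(2d/\delta)/t}$. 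On the complementary event I would use Hölder, $|E_a^{V}|\le \norm{vec(\mathring{\boldsymbol{X}}_{a,u_t})}_{\infty}\norm{\widehat{\boldsymbol{V}}_t-\boldsymbol{V}}_1\le s_A\cdot 2s_1$, together with $\mathbb{P}((\mathcal{D}_t\cap\mathcal{L}_t)^c)\le \mathbb{P}(\mathcal{D}_t^c)+\mathbb{P}((\mathcal{L}_t^{\beta/(4\nu\omega_{\mathcal{X}})})^c\mid\mathcal{D}_t)$, giving the final term $2s_1s_A(\mathbb{P}(\mathcal{D}_t^c)+\mathbb{P}((\mathcal{L}_t^{\beta/(4\nu\omega_{\mathcal{X}})})^c\mid\mathcal{D}_t))$. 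Summing the pieces and taking expectation delivers the stated bound.

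I expect the main obstacle to be the coupling between the two estimators. The residual regression is run on responses formed with the plug-in $\widehat{\boldsymbol{\Theta}}_s$, so its effective ``noise'' is neither independent of the design nor sub-Gaussian in an obvious way, and the support $\hat{\mathcal{J}}_1^t$ is itself random and data-dependent. Making the restricted-eigenvalue and concentration arguments rigorous on this random support, and propagating the $\boldsymbol{\Theta}$-error cleanly so that it yields exactly the $s_\theta$ contamination term rather than a self-referential estimate, is the delicate part; this is where conditioning on $\mathcal{D}_t$ (so that $\hat{\mathcal{J}}_1^t$ is effectively fixed and of controlled size) and the sparse-positive-definiteness of Assumption~\ref{assump:spd} are indispensable.
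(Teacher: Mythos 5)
Your proposal follows essentially the same route as the paper's proof: the same three-way decomposition of the instantaneous regret into the two $\boldsymbol{A}_{t-1}^{-1}$-bonus terms (via optimism and Lemma~\ref{lem:ucb}), the $\boldsymbol{V}$-error term $E^V$, and the bad-event term; the same conditioning on $\mathcal{D}_t\cap\mathcal{L}_t^{\beta/(4\nu\omega_{\mathcal{X}})}$ with $\mathbb{P}((\mathcal{D}_t\cap\mathcal{L}_t)^c)\leq\mathbb{P}(\mathcal{D}_t^c)+\mathbb{P}((\mathcal{L}_t)^c\mid\mathcal{D}_t)$ and the Hölder bound $2s_1s_A$ on the complement; the same restricted least-squares error split into a noise part $\mathcal{X}(\hat{\mathcal{J}})^{\top}\eta$ and a $\boldsymbol{\Theta}$-contamination part, scaled by $\lambda_{\min}(\widehat{\overline{\boldsymbol{\Sigma}}}_{t,\hat{\mathcal{J}}})^{-1}\leq 4\nu\omega_{\mathcal{X}}/\beta$; and the same peeling argument via Lemma~\ref{lem:eat} yielding the $14h_0^2$ constant.

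The one substantive slip is the mechanism you cite for the $s_\theta$-term. You propose ``feeding Lemma~\ref{lem:ucb}'s bound on $\norm{\widehat{\boldsymbol{\Theta}}_s-\boldsymbol{\Theta}}$ through the same factor,'' but Lemma~\ref{lem:ucb} controls only the $\boldsymbol{A}_t$-weighted norm $\norm{vec(\widehat{\boldsymbol{\Theta}}_t)-vec(\boldsymbol{\Theta})}_{\boldsymbol{A}_t}$. Converting that to the Euclidean bound you need on the contamination vector $\boldsymbol{Q}_{\theta,t}=\sum_{s=1}^{t}vec(\mathring{\boldsymbol{X}}_{a_s,u_s})\,vec(\mathring{\boldsymbol{X}}_{a_s,u_s}\boldsymbol{W}^{\top})^{\top}(vec(\widehat{\boldsymbol{\Theta}}_s)-vec(\boldsymbol{\Theta}))$ would require $\lambda_{\min}(\boldsymbol{A}_t)$ to grow linearly in $t$, whereas only $\lambda_{\min}(\boldsymbol{A}_t)\geq\lambda_1$ is available; as written, this step would not deliver the $\sqrt{\log(2d/\delta)/t}$ rate and risks exactly the self-referential estimate you flag in your closing paragraph. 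The paper sidesteps this entirely: by Lemma~\ref{lem:als-est-err}, each scalar misfit $\epsilon_{\theta}=vec(\mathring{\boldsymbol{X}}_{a_s,u_s}\boldsymbol{W}^{\top})^{\top}(vec(\widehat{\boldsymbol{\Theta}}_s)-vec(\boldsymbol{\Theta}))$ is \emph{bounded} by $2s_xs_\theta$, hence sub-Gaussian with that proxy, and a Hoeffding-type bound then gives $\norm{\boldsymbol{Q}_{\theta,t}}_2\leq 2s_xs_\theta\sqrt{2t\log(2d/\delta)}$ directly, which divided by $\lambda t$ with $\lambda\geq\beta/(4\nu\omega_{\mathcal{X}})$ yields the stated term. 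Everything else in your outline — including the observation that conditioning on $\mathcal{D}_t$ fixes the random support's size so Assumption~\ref{assump:spd} and Lemma~\ref{lem:min-eigen} apply — matches the paper's argument.
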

\begin{proof}
    See Appendix~\ref{app:lem-ins-rgt}.
\end{proof}

Define $\tau = \max \Big(\lfloor \frac{2\log(2n^2d^2)}{C_0^2} (\log s_0)(\log \log nd) \rfloor, \\ \frac{d}{16\lambda_0^2\log(nd)C_1^2\gamma^2}\log(\frac{d}{\lambda_0^2\log(nd)C_1^2\gamma^2})\Big)$ with $C_0 = \min \{\frac{1}{2}, \frac{\phi_0^2}{512s_0s_A^2\nu\omega_{\mathcal{X}}}\}$, $C_1 = \frac{4\nu\omega_{\mathcal{X}}s_0}{\phi_0^2}+ \sqrt{(1+\frac{4\nu\omega_{\mathcal{X}}}{\phi_0^2})s_0}$. The following theorem provides a regret upper bound for NELA algorithm. 

\begin{theorem}
\label{the:rgt}
Under Assumptions~\ref{assump:arfv}-\ref{assump:spd}, for all $T \geq \tau$, with probability at least $1-\delta$, the regret of NELA algorithm satisfies
\begin{align}
    R(T) &= O\Big(dn\log(1+\frac{\sum_{t=1}^T\sum_{j=1}^n \boldsymbol{W}(u_t,j)^2}{dn \lambda_1})\sqrt{T}\Big) \notag \\
    &\quad + O(s_As_1\tau + \log s_0\sqrt{s_0T}). \label{eq:rgt}
\end{align}
\end{theorem}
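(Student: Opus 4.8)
The plan is to bound the cumulative regret $R(T) = \sum_{t=1}^T \mathbb{E}[R_t]$ by summing the per-round bound of Lemma~\ref{lem:ins-rgt} over the horizon, working throughout on the probability-$(1-\delta)$ event of Lemma~\ref{lem:ucb} that fixes the confidence radius $\alpha_t$. The five groups of terms in Lemma~\ref{lem:ins-rgt} split into three categories that I would handle separately: the two exploration/confidence-width terms carrying $\alpha_t$, the two estimation-error terms decaying polynomially in $t$, and the detection-failure probability term. Summing these and splitting the horizon at the burn-in time $\tau$ yields the two big-$O$ expressions in \eqref{eq:rgt}.

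For the confidence-width terms I would first use that the radius is nondecreasing in $t$ (it grows with $\det(\boldsymbol{A}_t)$, since $\boldsymbol{A}_t \succeq \boldsymbol{A}_{t-1}$) to replace $\alpha_t$ by $\alpha_T$, and then apply Cauchy--Schwarz, $\sum_{t=1}^T \norm{vec(\mathring{\boldsymbol{X}}_{a_t,u_t}\boldsymbol{W}^{\top})}_{\boldsymbol{A}_{t-1}^{-1}} \le \sqrt{T\sum_{t=1}^T \norm{vec(\mathring{\boldsymbol{X}}_{a_t,u_t}\boldsymbol{W}^{\top})}_{\boldsymbol{A}_{t-1}^{-1}}^2}$. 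The inner sum is controlled by the standard elliptical-potential (log-determinant) bound $\sum_{t=1}^T \norm{\cdot}_{\boldsymbol{A}_{t-1}^{-1}}^2 \le 2\log\big(\det(\boldsymbol{A}_T)/\det(\boldsymbol{A}_0)\big)$, and the determinant--trace (AM--GM) inequality gives $\log\big(\det(\boldsymbol{A}_T)/\det(\boldsymbol{A}_0)\big) \le dn\log\big(1 + \tfrac{1}{dn\lambda_1}\sum_{t=1}^T\norm{vec(\mathring{\boldsymbol{X}}_{a_t,u_t}\boldsymbol{W}^{\top})}^2\big)$, where $\norm{vec(\mathring{\boldsymbol{X}}_{a_t,u_t}\boldsymbol{W}^{\top})}^2 = \norm{\boldsymbol{x}_{a_t,u_t}}^2\sum_{j=1}^n\boldsymbol{W}(u_t,j)^2$ reproduces exactly the weighted sum appearing inside the logarithm. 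Since the same determinant--trace bound applied inside Lemma~\ref{lem:ucb} gives $\alpha_T = O\big(\sqrt{dn\log(1+\tfrac{\sum_{t=1}^T\sum_{j=1}^n\boldsymbol{W}(u_t,j)^2}{dn\lambda_1})}\big)$, multiplying the two $\sqrt{dn\log(\cdot)}$ factors against $\sqrt{T}$ produces the first term $O\big(dn\log(1+\tfrac{\sum_{t=1}^T\sum_{j=1}^n\boldsymbol{W}(u_t,j)^2}{dn\lambda_1})\sqrt{T}\big)$.

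For the two estimation-error terms, both decay like $t^{-1/2}$ up to constant and $\sqrt{s_0}$ factors, so $\sum_{t=1}^T t^{-1/2} = O(\sqrt{T})$ turns the second term of Lemma~\ref{lem:ins-rgt} into $O(\log s_0\sqrt{s_0 T})$ (the $\log s_0$ coming from the constant $h_0$) and the first into a lower-order $O(\sqrt{T})$ contribution absorbed into the above. The delicate part is the detection-failure term $2s_1s_A\big(\mathbb{P}(\mathcal{D}_t^c) + \mathbb{P}((\mathcal{L}_t^{\beta/4\nu\omega_{\mathcal{X}}})^c\mid\mathcal{D}_t)\big)$, which I would control by splitting the sum at $\tau$. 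For $t \le \tau$ I bound each probability trivially by $1$, so those rounds contribute at most $2s_1s_A\tau = O(s_As_1\tau)$. For $t > \tau$ I must verify that $\tau$ is large enough that both failure events are summably unlikely: the first branch of $\tau$ enforces $t \ge \frac{2\log(2n^2d^2)}{C_0^2}$, the precondition of Lemma~\ref{lem:support-recover}, while the second branch is precisely the value obtained by inverting $\lambda_t = \lambda_0\sqrt{2\log t\,\log(nd)/t}$ to guarantee $4C_1\lambda_t \le \gamma/\sqrt{d}$, the remaining hypothesis of Lemma~\ref{lem:support-recover}. Together these make $\mathbb{P}(\mathcal{D}_t^c)$ (via Lemma~\ref{lem:support-recover}) and $\mathbb{P}((\mathcal{L}_t^{\beta/4\nu\omega_{\mathcal{X}}})^c\mid\mathcal{D}_t)$ (via the auxiliary Lemmas~\ref{lem:min-eigen} and \ref{lem:eat}) decay like $\exp(-\Omega(t))$, whose tail sum over $t>\tau$ is $O(1)$.

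The main obstacle is exactly this last step: pinning down the burn-in length $\tau$ so that, uniformly for every $t>\tau$, the support-recovery event $\mathcal{D}_t$ and the restricted-eigenvalue event $\mathcal{L}_t^{\beta/4\nu\omega_{\mathcal{X}}}$ simultaneously hold with complement probability that is summable. This requires inverting the self-normalizing schedule $\lambda_t = \lambda_0\sqrt{2\log t\,\log(nd)/t}$ against the fixed detection margin $\gamma/\sqrt{d}$ -- which is where the $\frac{d}{16\lambda_0^2\log(nd)C_1^2\gamma^2}\log(\cdot)$ branch of $\tau$ originates -- and composing the conditional probability bounds of the auxiliary lemmas so that conditioning on $\mathcal{D}_t$ does not inflate the eigenvalue-failure probability. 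Once $\tau$ is fixed in this way, collecting the three contributions and noting that all terms other than $O\big(dn\log(1+\tfrac{\sum_{t,j}\boldsymbol{W}(u_t,j)^2}{dn\lambda_1})\sqrt{T}\big)$, $O(\log s_0\sqrt{s_0T})$, and $O(s_As_1\tau)$ are of lower order gives the claimed bound for all $T \ge \tau$.
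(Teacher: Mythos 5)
Your proposal is correct and follows essentially the same route as the paper's proof: the same three-way decomposition of the summed instantaneous regret (confidence-width terms via the elliptical-potential bound of Lemma 11 in \cite{abbasi2011improved} plus the determinant--trace inequality, the $t^{-1/2}$ estimation-error terms summing to $O(\log s_0\sqrt{s_0T})$ with the $\log s_0$ from $h_0^2$, and the detection-failure term split at $\tau$ with the tail controlled via the $\lambda_t$ schedule as in \cite{ariu2022thresholded}). Both the structure and the way each piece is bounded, including bounding the pre-$\tau$ failure probabilities trivially by one to obtain the $O(s_As_1\tau)$ term, match the paper's argument.
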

\begin{proof}
    See Appendix~\ref{app:the}. 
\end{proof}
\begin{remark}
The regret bound consists of three main terms: $O\Big(dn\log(1+\frac{\sum_{t=1}^T\sum_{j=1}^n \boldsymbol{W}(u_t,j)^2}{dn \lambda_1})\sqrt{T}\Big)$, $O(s_As_1\tau)$ and $O(\log s_0\sqrt{s_0T})$. The first term $O\Big(dn\log(1+\frac{\sum_{t=1}^T\sum_{j=1}^n \boldsymbol{W}(u_t,j)^2}{dn \lambda_1})\sqrt{T}\Big)$ captures how the graph structure influences regret. When users are isolated (i.e., no influence, $\boldsymbol{W}$ is an identity matrix), this term reduces to the standard linear bandit regret, $O(dn\log(1+\frac{T}{dn\lambda_1})\sqrt{T})$. No collaborative benefit is possible in this case. When the graph is fully connected and uniform (i.e., each user equally influences all others, $\boldsymbol{W}(i,j) = 1/n, \forall i, j \in \mathcal{U}$), the regret can improve to $O(dn\log(1+\frac{T}{n^2d\lambda_1})\sqrt{T})$ due to increased collaborative information. In general, collaboration helps when the influence weights are distributed broadly across many neighbors, i.e. low concentration in $\sum_{j=1}^n\boldsymbol{W}(u_t, j)^2$, $\forall u_t \in \mathcal{U}$, as this increases the effective sample size for each user while highly skewed or low-connectivity patterns yield smaller gains.

The second term $O(s_As_1\tau)$ characterizes the difficulty of identifying sparse anomalies. Larger $\gamma$ makes detection easier (smaller $\tau$), reducing this term. The term $O(\log s_0 \sqrt{s_0T})$ accounts for the influence of anomalous deviations, with $s_0$ being proportional to the number of anomalies and $s_0 \ll nd$. In the absence of anomalies, our setting degenerates to the classical collaborative networked bandit setting, as described in \cite{wu2016contextual}. 
\end{remark}
\begin{remark}
The regret bound in \eqref{eq:rgt} does not explicitly depend on the positions of anomalies in the network. This stems from the model formulation, where each user $i$'s feature vector is decomposed as $\boldsymbol{\theta}_i + \boldsymbol{v}_i$, with $\boldsymbol{\theta}_i$ governing collaborative behavior under the graph structure defined by  $\boldsymbol{W}$ and  $\boldsymbol{v}_i$ representing sparse anomalous deviations. As a result, the mutual influence and learning benefit across the graph are determined solely by the clean feature matrix $\boldsymbol{\Theta}$, not by the locations of anomalous nodes. 

In the regret analysis, the contribution of anomalies is captured through global sparsity parameters such as $s_0$ and the incorrect detection is controlled within a bounded time window $\tau$. Any potential effect of anomaly position, if it exists, is therefore transient and bounded in the early phase of learning, and does not dominate the overall regret.
\end{remark}
\begin{theorem}\label{the:lb}
Under Assumptions~\ref{assump:arfv}-\ref{assump:spd}, for all $T \geq 2$, and any learning algorithm $\pi$, there exists a networked linear bandit instance such that the regret is lower bounded by
\begin{align}
    R(T) \geq  \Omega\Big(\xi d\sqrt{nT}+\sqrt{s_0T}\Big),
\end{align}
where $\xi \in [1, \sqrt{n}]$ is a problem-dependent scalar that captures how much collaborative information can be shared across users. If all users share the same parameter vector $\boldsymbol{\theta}$ and the matrix $\boldsymbol{W}$ is fully mixing (each user aggregates information uniformly from others), then $\xi = 1$, while if users are completely isolated (no information sharing, $\boldsymbol{W} = \boldsymbol{I}$ ) and each user has an independent $\boldsymbol{\theta}$, then $\xi = \sqrt{n}$. 
\end{theorem}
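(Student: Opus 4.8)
The plan is to establish the two additive terms via two separate families of hard instances and then combine them through the elementary inequality $R(T) \ge \max(A,B) \ge \tfrac{1}{2}(A+B)$, so it suffices to lower bound the collaborative term $\Omega(\xi d\sqrt{nT})$ and the sparse term $\Omega(\sqrt{s_0 T})$ separately. Each bound will follow the standard information-theoretic recipe: build a finite set of instances whose reward laws are statistically close under the Gaussian noise $\eta_t \sim \mathcal{N}(0,\sigma^2)$ (so their pairwise Kullback--Leibler divergences, which for Gaussian rewards sum the squared mean gaps $\delta_t^2/(2\sigma^2)$ over the played rounds, stay below a constant) yet whose optimal arms are forced apart, and then invoke Le Cam's two-point method together with Assouad's lemma and Pinsker's inequality so that the per-coordinate costs add.

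For the collaborative term I would switch off the residuals, $\boldsymbol{V}=\boldsymbol{0}$, so every instance trivially satisfies Remark~\ref{rem:s} and Assumption~\ref{assump:bound}, and let only $\boldsymbol{\Theta}$ be unknown. Encoding each instance by a sign pattern that places the effective per-user parameters $\boldsymbol{\Theta}\boldsymbol{W}(:,i)$ on a scaled hypercube $\{\pm\epsilon\}$, and choosing the context law $p$ to be standardized Gaussian or Rademacher contexts (which meet Assumptions~\ref{assump:context-d}--\ref{assump:spd} and Definition~\ref{def:cc} with $O(1)$ values of $\nu,\omega_{\mathcal{X}},\phi_0,\beta$), makes the instantaneous regret decompose across the effective directions. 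The scalar $\xi$ enters precisely through how $\boldsymbol{W}$ distributes a single observation of the randomly drawn user $u_t$ across the shared coordinates: I would define $\xi$ as an effective-dimension functional of $\boldsymbol{W}$ that counts the number of statistically independent parameter directions, equal to $\Theta(1)$ when $\boldsymbol{W}$ is fully mixing with a shared $\boldsymbol{\theta}$ and to $\Theta(\sqrt{n})$ in the isolated case $\boldsymbol{W}=\boldsymbol{I}$. Under the Gaussian KL bound, separating two neighboring corners along one direction needs $\Omega(1/\epsilon^2)$ informative rounds; since $u_t$ is uniform each direction accrues its share of the $T$ rounds, and optimizing $\epsilon$ over the active directions and summing with Assouad produces the claimed $\xi d\sqrt{nT}$ scaling.

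For the sparse term I instead fix $\boldsymbol{\Theta}$ to be known (e.g.\ $\boldsymbol{\Theta}=\boldsymbol{0}$, so $\overline{\boldsymbol{\Theta}}=\boldsymbol{V}$) and make only the residual support informative. I fix a support $\mathcal{J}$ with $|\mathcal{J}|=s_0$ and let $vec(\boldsymbol{V})$ range over $\{\pm\epsilon\}$ on $\mathcal{J}$ with $\epsilon \ge \gamma/\sqrt{d}$, so Assumption~\ref{assump:arfv} and the minimum-entry condition in Assumption~\ref{assump:bound} hold. Assumption~\ref{assump:spd} then guarantees that the $s_0$-dimensional Gram matrix on $\mathcal{J}$ is well conditioned ($\lambda_{\min}\ge\beta$), so the active coordinates behave like a well-posed $s_0$-dimensional linear-bandit instance; applying Assouad over the $2^{s_0}$ sign patterns with the Gaussian KL bound and optimizing $\epsilon$ yields the standard sparse-bandit rate $\Omega(\sqrt{s_0 T})$.

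The hard part will be the collaborative term rather than the sparse one. Unlike the textbook single-agent linear-bandit lower bound, the matrix $\boldsymbol{W}$ couples the $n$ users, so I must (i) give a clean, instance-independent definition of $\xi$ as a functional of $\boldsymbol{W}$ that provably lies in $[1,\sqrt{n}]$ and collapses to the two stated endpoints, and (ii) prove that the Fisher information each round delivers to a given effective direction scales like $1/\xi^2$, i.e.\ that mixing genuinely lets one observation inform many users' shared parameter. Checking that the extremal instances simultaneously satisfy Assumptions~\ref{assump:arfv}--\ref{assump:spd} and Definition~\ref{def:cc} while remaining statistically indistinguishable is the delicate bookkeeping; once $\xi$ is pinned down, the sparse term and the $\max$-combination are routine.
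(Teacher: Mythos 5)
Your proposal takes a genuinely different route from the paper. The paper's proof of Theorem~\ref{the:lb} is a three-line citation argument: it invokes known endpoint lower bounds for collaborative linear bandits ($\Omega(d\sqrt{nT})$ when all users share one parameter under a fully mixing $\boldsymbol{W}$, $\Omega(dn\sqrt{T})$ when users are isolated), defines $\xi \in [1,\sqrt{n}]$ simply as the scalar interpolating between these two regimes, cites a sparse-bandit lower bound $\Omega(\sqrt{s_0T})$ from prior work for the residual component, and adds the two terms. It constructs no instances, verifies none of Assumptions~\ref{assump:arfv}--\ref{assump:spd} on them, and proves nothing about intermediate $\xi$. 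Consequently, the items you flag as the hard part --- an instance-independent definition of $\xi$ and the $1/\xi^2$ Fisher-information scaling --- are not a gap relative to the paper: the paper does strictly less, certifying only the two endpoints and treating $\xi$ as instance-defined. Your constructive Assouad/Le Cam program is self-contained where the paper leans on external results, your $\max(A,B)\geq \tfrac{1}{2}(A+B)$ combination is valid for the ``there exists an instance'' form of the statement (and matches what the paper does implicitly), and your observation that the sparse construction needs $\epsilon \geq \gamma/\sqrt{d}$ --- forcing $\gamma$ to be chosen of order $\sqrt{ds_0/T}$, which is legitimate since $\gamma$ is part of the instance but must be said explicitly --- is a genuine point the paper never addresses.

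There is, however, one concrete step where your plan, executed honestly, will not deliver the claimed rates, and it is exactly where the paper's citation shortcut hides a convention mismatch. Under the paper's protocol, one uniformly random user arrives per round and $T$ counts \emph{total} rounds, so each user accrues only about $T/n$ informative rounds. Your own accounting (``each direction accrues its share of the $T$ rounds'') then gives, at the isolated endpoint, $n$ independent $d$-dimensional bandits with horizon $T/n$, i.e.\ $\Omega(nd\sqrt{T/n}) = \Omega(d\sqrt{nT})$ --- a factor $\sqrt{n}$ below the theorem's $\xi = \sqrt{n}$ value $dn\sqrt{T}$; the fully mixed endpoint similarly yields $\Omega(d\sqrt{T})$, not $\Omega(d\sqrt{nT})$. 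The same dilution hits your sparse term: however you distribute the $s_0$ active coordinates across users, each user receives $\Theta(T/n)$ rounds and Assouad yields $\Omega(s_0\sqrt{T/n})$, which matches $\sqrt{s_0T}$ only when $s_0 = \Theta(n)$. The endpoint rates the paper quotes come from settings where every user has horizon $T$ (total $nT$ interactions), and the cited sparse bound is for a single-agent horizon-$T$ bandit; the paper imports them without adapting to its own arrival model. So you must either adopt that per-user-horizon convention explicitly (as the paper implicitly does), or accept that your construction proves the weaker $\Omega(\xi d\sqrt{T} + s_0\sqrt{T/n})$; as written, the assertion that optimizing $\epsilon$ ``produces the claimed $\xi d\sqrt{nT}$ scaling'' is the one step of your plan that does not go through under the paper's stated protocol.
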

\begin{proof}
See Appendix~\ref{app:the-lb}.
\end{proof}
\begin{remark}
The computational scalability of NELA is influenced by the number of users $n$, the dimension $d$, as the matrix $\boldsymbol{A}_t$ in Algorithm 1 is of size $nd \times nd$, potentially limiting the scalability of NELA in very large systems. However, this behavior is consistent with that of existing graph-based contextual bandit algorithms \cite{cesa2013gang,wu2016contextual}. Besides, NELA leverages the sparsity in the graph structure (i.e., sparse $\boldsymbol{W}$) and the anomaly patterns (sparse $\boldsymbol{v}_i$), which helps reduce the effective dimensionality and computational burden in practice.
\end{remark}

\section{Experiments}\label{sec:experiment}
We test the performance of our proposed algorithm on both synthetic and real-world datasets. Besides, we compare our algorithm with the following state-of-the-art algorithms:
\begin{itemize} 
    \item CoLin \cite{wu2016contextual}: Each user shares their instantaneous payoff with neighbors in the graph. The users determine the payoff jointly by using a collaborative bandit algorithm;
    \item GraphUCB \cite{yang2020laplacian}: Each user shares their instantaneous payoff with neighbors in the graph and the feature vectors of users are smoothed along the graph;  
    \item nLinUCB: The algorithm maintains $n$ independent LinUCB algorithms \cite{abbasi2011improved} for each user, where the bandit uses no graph information;
    \item LinUCB: The algorithm maintains a single LinUCB algorithm for $n$ users globally, equivalent to running GraphUCB with an identity Laplacian matrix or the CoLin algorithm with an identity weight matrix.
\end{itemize}
All the results are based on the average of ten independent runs. In all experiments, we set the confidence probability parameter $\delta = 0.001$, noise variance $\sigma = 0.01$, regularized parameter $\lambda_0 = 0.02$ and $\lambda_1 = 1$. We observed that the algorithm is not overly sensitive to moderate changes in these hyperparameters. Therefore, we select hyperparameters in consistent with \cite{yang2020laplacian,ariu2022thresholded}. To ensure a controlled and fair comparison, all experimental settings are held constant across different runs within the same experiment.

\subsection{Experiment on Toy Examples}
We first present experiments on toy examples to better explain our algorithm. Specifically, we implemt with two types of graph structures: undirected star graphs and fully connected graphs (Figure~\ref{fig:toy-example}), aiming to validate the effectiveness of collaboration among agents under different graph structures. We assume that the users are homogeneous, with each having an equal influence on its neighbors ($\boldsymbol{W}(i,j) = 1/(1+\sum_{j=1}^n\mathbbm{1}(e(i,j) \in \mathcal{E}))$) and node 1 is designated as an anomaly. At each round $t$, we sample arm set $\mathcal{A}_{u_t,t}$ from a multivariate Gaussian distribution $\mathcal{N}(\boldsymbol{0}_{M}, V)$, where $M = 50$, $V_{i,i} = 1$ for all $i \in [M]$, and $V_{i,m} = \rho^2 = 0.7$ for all $i \neq k$. We then normalize the context vectors so that $\norm{\boldsymbol{x}} \leq 1$ for all $\boldsymbol{x} \in \mathcal{A}_{u_t,t}$. We test various levels of $\gamma$ representing the degree to which the anomaly deviates from the majority. The results are presented in Figure~\ref{fig:comp-toy}. 

\begin{figure}[!ht]
\centering
\begin{subfigure}{.2\textwidth}
 \centering
  \includegraphics[width=.9\linewidth]{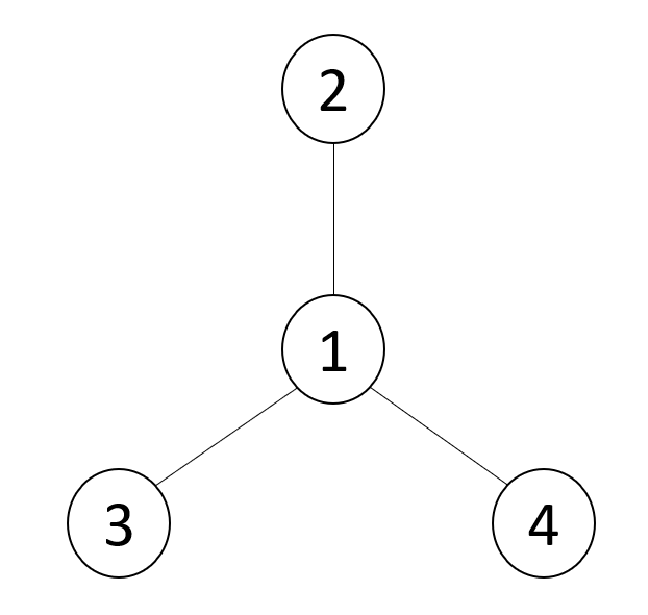}
  \caption{Star graph.}
\end{subfigure}
\begin{subfigure}{.2\textwidth}
 \centering
  \includegraphics[width=.9\linewidth]{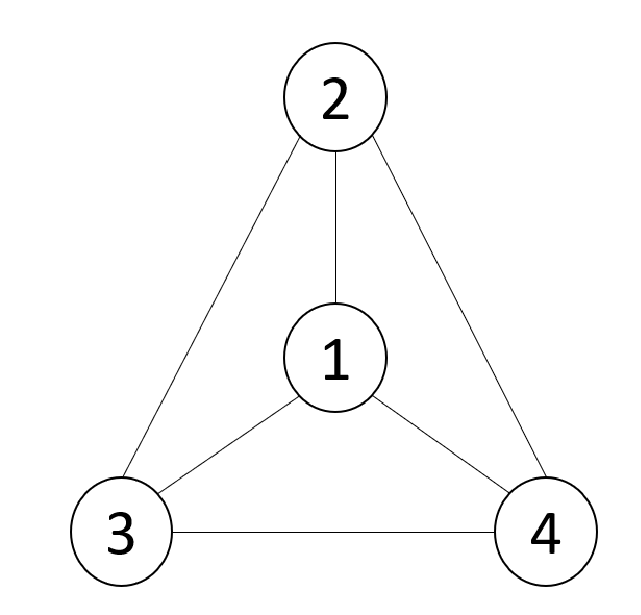}
  \caption{Fully connected graph.}
\end{subfigure}
\caption{Toy Example on Four Nodes}
\label{fig:toy-example}
\end{figure}

\begin{figure}[H]
    \centering
    \includegraphics[width=0.8\linewidth]{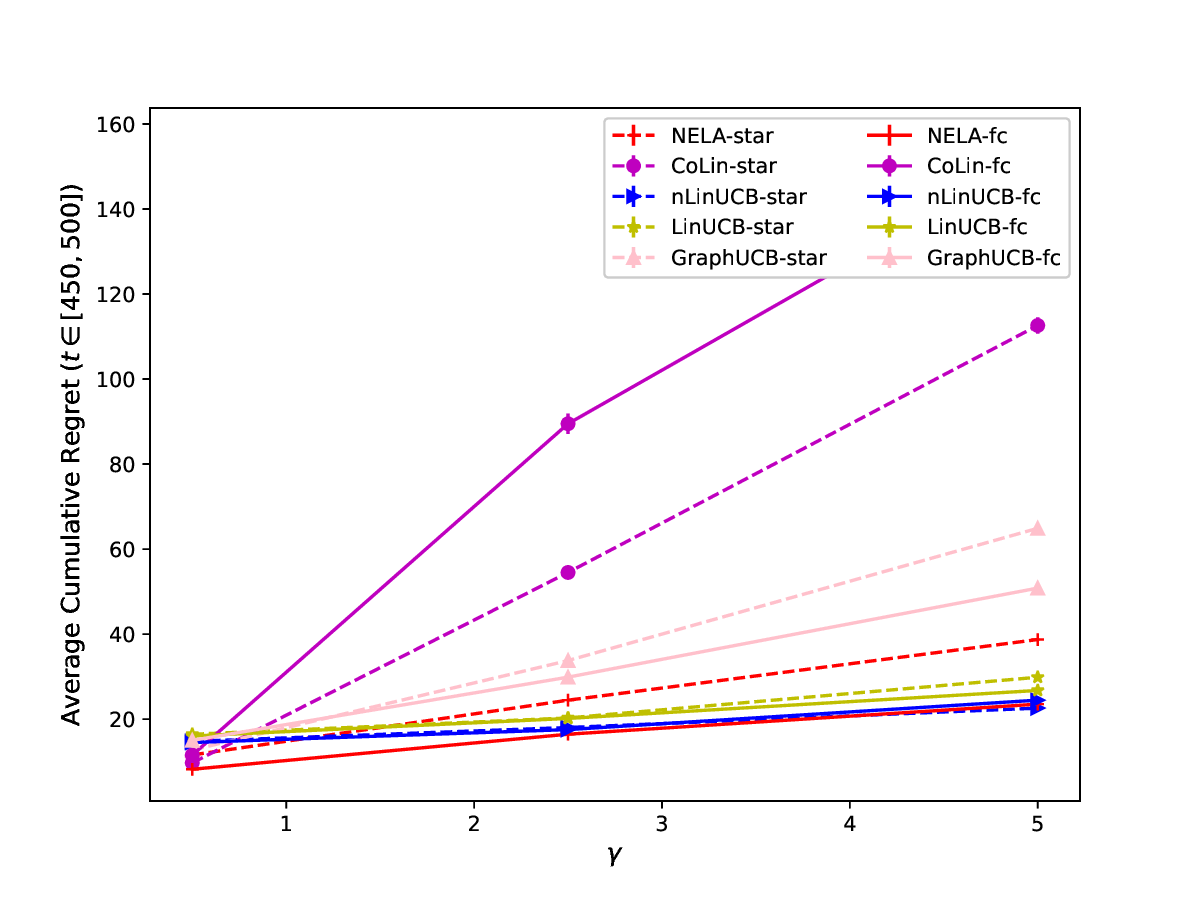}
    \caption{Cumulative regret under different settings.}
    \label{fig:comp-toy}
\end{figure}

In Figure~\ref{fig:comp-toy}, plots with dashed lines are results obtained from experiments on a star graph, while plots with solid lines are results obtained from experiments on a fully connected graph. According to Figure~\ref{fig:comp-toy}, NELA and GraphUCB demonstrate better performance on fully connected graphs compared to star graphs. This improvement arises because of effective use of the graph structure, which enables greater collaboration among agents as the number of connections increases. In contrast, the CoLin algorithm, although it also incorporates graph information, performs worse on fully connected graphs. This is because CoLin assumes fully cooperative scenarios, making it more vulnerable to performance degradation in the presence of anomalies. Meanwhile, the performance of nLinUCB and LinUCB shows little variation between the two graph structures and under varying $\gamma$, as these algorithms treat each node independently, disregarding the network structure.

When $\gamma$ is very small ($\gamma < 0.5$), all algorithms perform similarly, with NELA and CoLin achieving the best results. At these low values of $\gamma$, the influence of the anomaly is minimal and can be even ignored, allowing collaboration among agents to significantly enhance performance. However, as $\gamma$ increases, regrets also rise, with the most significant deterioration observed in the CoLin algorithm. At larger $\gamma$, the perturbations caused by anomalies become more pronounced, disrupting the ideal assumption that all nodes adhere to the same behavior pattern. This mismatch exacerbates CoLin’s reliance on collaboration, magnifying the detrimental impact of anomalies on performance.

Under the star graph, LinUCB and nLinUCB algorithms outperform NELA when $\gamma > 1$. While NELA leverages graph information to enhance performance, its advantages are less evident in extremely simple graph structures. In such cases, the additional computations required for utilizing the graph information may reduce efficiency, resulting in lower performance compared to algorithms that treat nodes independently. However, as the network size increases, even in sparsely connected scenarios, as demonstrated in experiments on synthetic and real-world datasets where NELA's advantages become more evident.

\subsection{Experiments on synthetic dataset}
We generate a synthetic dataset with $n=50$ users, each assigned a $d=10$ dimensional feature vector $\boldsymbol{\theta}_i$. Each dimension of $\boldsymbol{\theta}_i$ is drawn from a normal distribution $\mathcal{N}(0,1)$ and then normalized so that $\norm{\boldsymbol{\theta}_i}=1$. We then construct the weight matrix $\boldsymbol{W}$ for the user graph by setting $\boldsymbol{W}(i,j) \propto \langle \boldsymbol{\theta}_i, \boldsymbol{\theta}_j \rangle$ and normalizing each column of $\boldsymbol{W}$ by its $\ell_1$ norm. To control the connectivity of each node, we introduce a threshold that filters out the weakest 70\% of the weights in the network, ensuring that only the top 30\% of weights are retained. The generation of the arm set follows the same procedure as in the toy example experiments. We randomly sample three anomalies from $n =50$ users. For each anomaly, one dimension of its residual vector is sampled from the uniform distribution $U(-10,10)$, resulting in sampled residual norms of 2.095, 3.33, and 5.44, respectively. The norm of the residual feature vectors for the non-anomalous users is zero, i.e., $\norm{\boldsymbol{v}_i} = \boldsymbol{0}$, $i \in \mathcal{U}\backslash \tilde{\mathcal{U}}$. 
\begin{figure*}[ht]
\centering
\begin{subfigure}{.35\textwidth}
 \centering
  \includegraphics[width=.99\linewidth]{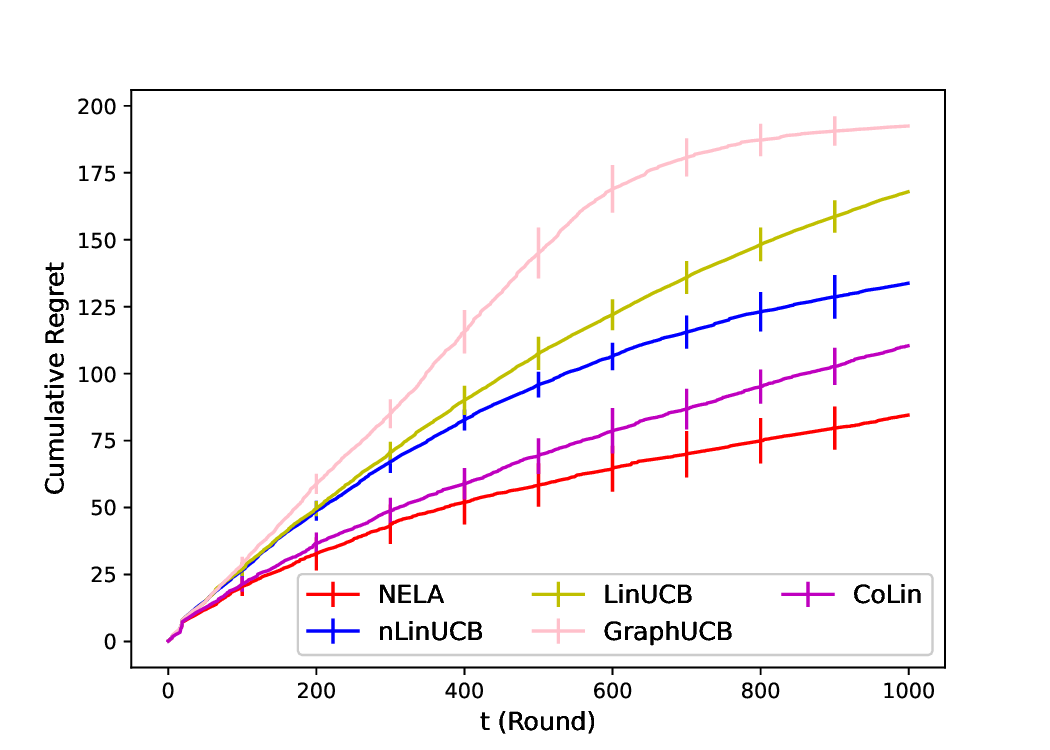}
  \caption{Cumulative Regret}
  \label{fig:e1-rgt}
\end{subfigure}
\begin{subfigure}{.3\textwidth}
 \centering
  \includegraphics[width=.99\linewidth]{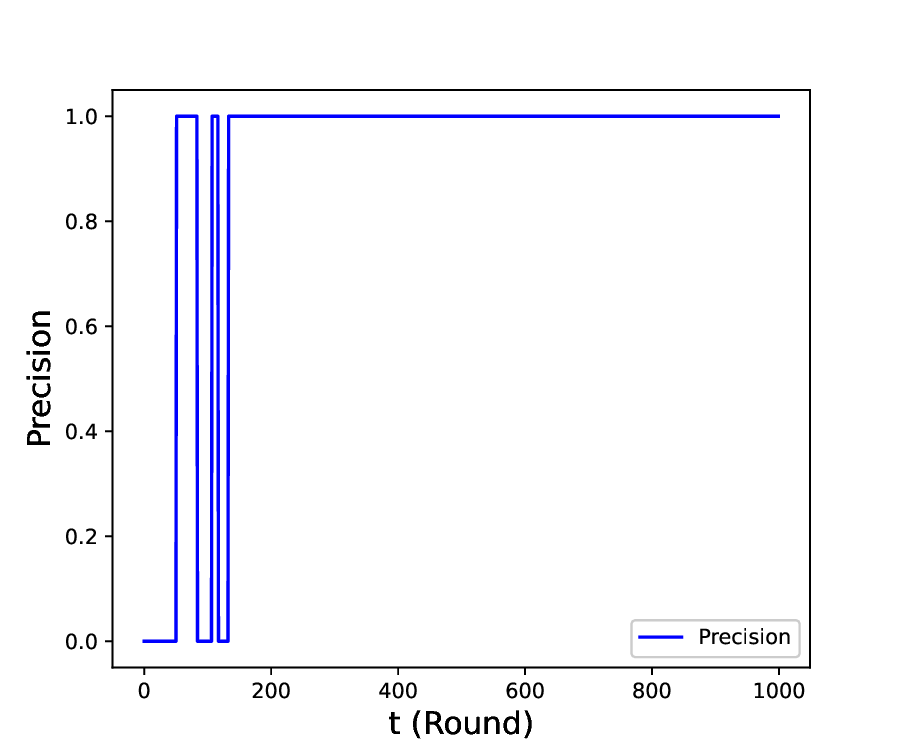}
  \caption{Precision of per-round detection}
  \label{fig:e1-p}
\end{subfigure}
\begin{subfigure}{.3\textwidth}
  \centering
  \includegraphics[width=.99\linewidth]{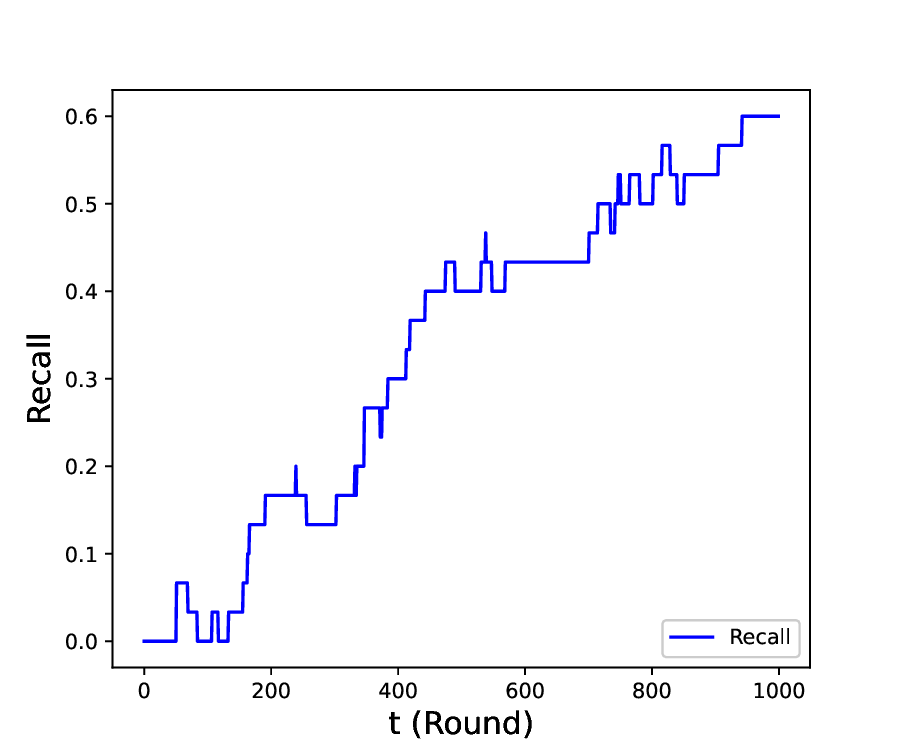} 
  \caption{Recall of per-round detection}
  \label{fig:e1-r}
\end{subfigure}
\caption{Analysis of regret and anomaly detection (Synthetic dataset with $n =50$).}
\label{fig:e1-n50}
\end{figure*}
\begin{figure*}
\centering
\begin{subfigure}{.35\textwidth}
 \centering
  \includegraphics[width=.99\linewidth]{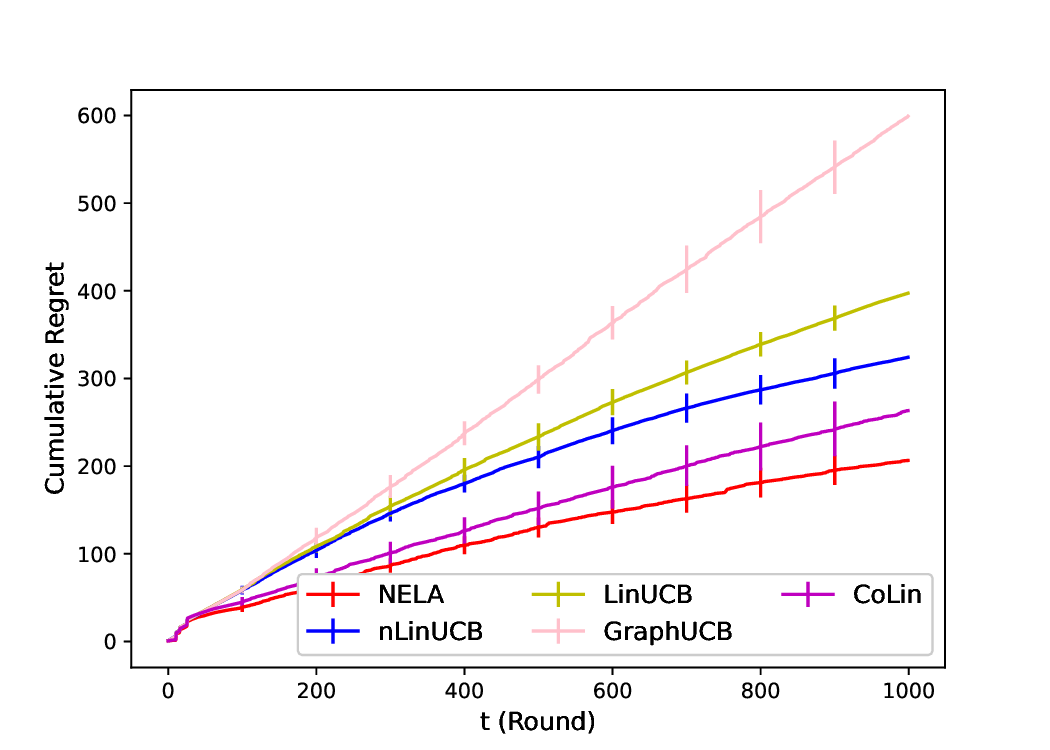}
  \caption{Cumulative Regret.}
\end{subfigure}
\begin{subfigure}{.3\textwidth}
 \centering
  \includegraphics[width=.99\linewidth]{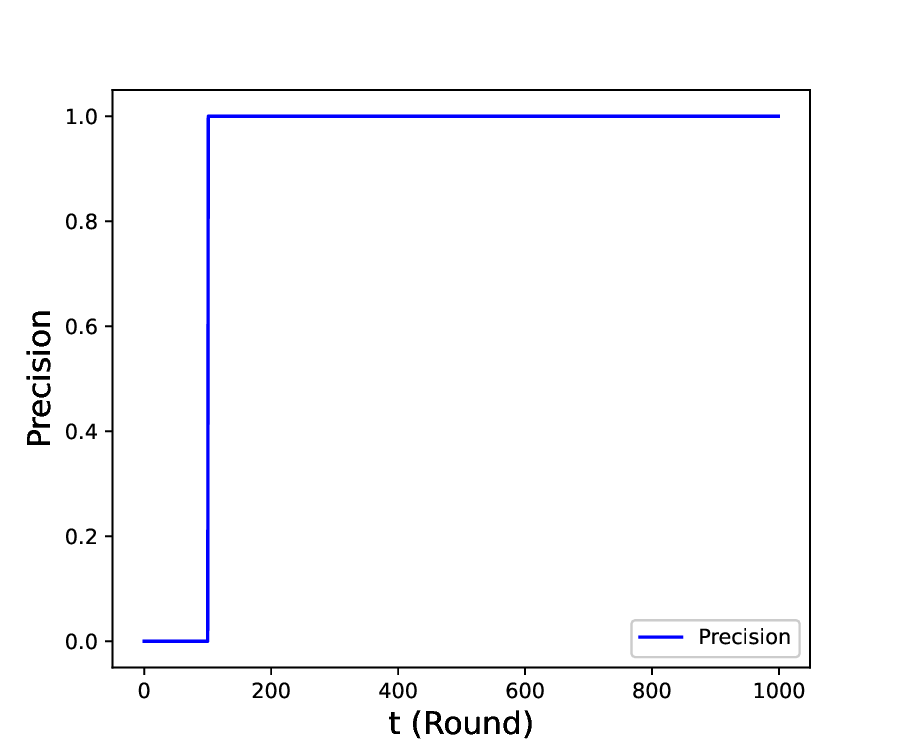}
  \caption{Precision of per-round detection.}
  \label{fig:e2-p}
\end{subfigure}
\begin{subfigure}{.3\textwidth}
  \centering
  \includegraphics[width=.99\linewidth]{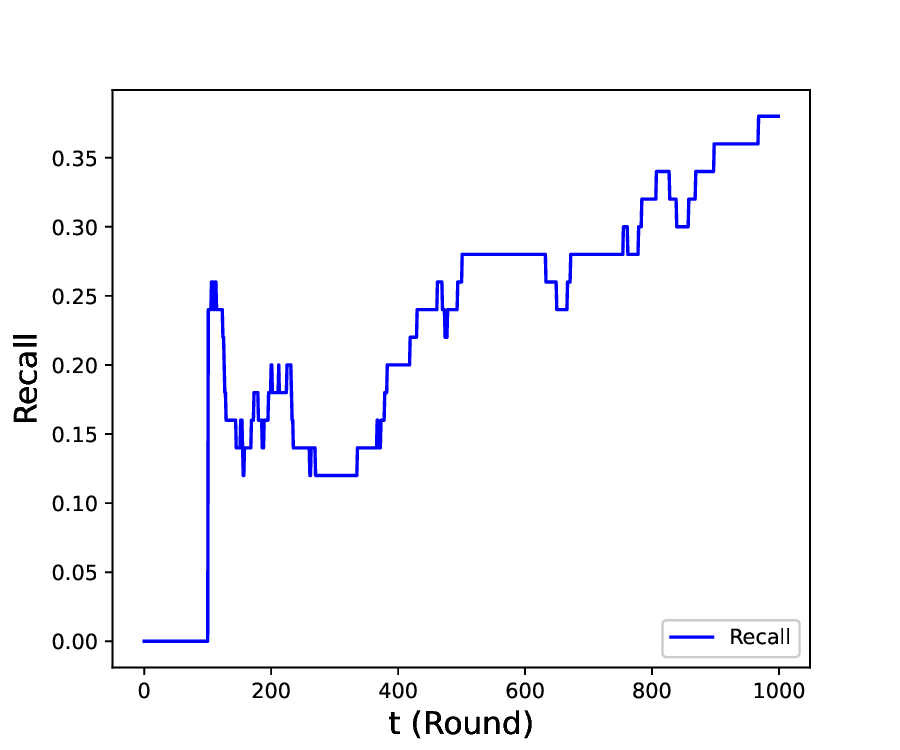} 
  \caption{Recall of per-round detection.}
  \label{fig:e2-r}
\end{subfigure}
\caption{Analysis of regret and anomaly detection (Real-world dataset with $n =100$).}
\label{fig:e2-n100}
\end{figure*}

Figure~\ref{fig:e1-n50} demonstrates the performance of our proposed algorithm NELA in terms of regret and anomaly detection. According to Figure~\ref{fig:e1-rgt}, NELA outperforms all state-of-the-art algorithms and has the lowest regret. GraphUCB, which assumes the smoothness of feature vectors along the network, is highly susceptible to anomalies. This explains its highest regret among the compared algorithms. Algorithms nLinUCB and LinUCB infer users' feature vectors without considering mutual influence and information sharing among users, leading to higher regrets than CoLin because of underscoring the significance of incorporating users' mutual influence in the algorithm design. Our proposed algorithm, NELA, not only considers these effects but also addresses the impact of potential anomalies, thus achieving the lowest regret. 

Figures~\ref{fig:e1-p} and \ref{fig:e1-r} show the anomaly detection results. As stated in section~\ref{sec:problem}, we evaluate anomaly detection performance using two key metrics: i) precision, which measures the accuracy of the detected anomalies by quantifying the proportion of true anomalies among all detections, and ii) recall, which measures the completeness by assessing the proportion of true anomalies that are successfully detected. In Figure~\ref{fig:e1-p}, the precision quickly converges to one, indicating that the NELA algorithm rarely misclassifies normal users as anomalies. Figure~\ref{fig:e1-r} depicts the proportion of detected anomalies with respect to the ground truth. Our experiment corresponds to a particularly challenging scenario, where each anomaly has only one non-zero dimension. In Figure~\ref{fig:e1-n50-level}, we provide detection results under varying $\gamma$ values and different numbers of non-zero dimensions $|\mathcal{J}(\boldsymbol{v}_i)|$, $i \in\tilde{\mathcal{U}}$, reflecting various levels of detection difficulties. Since NELA guarantees zero false detection empirically, we plot the recalls in Figure~\ref{fig:e1-n50-level}.
\begin{figure}
    \centering
    \includegraphics[width=0.99\linewidth]{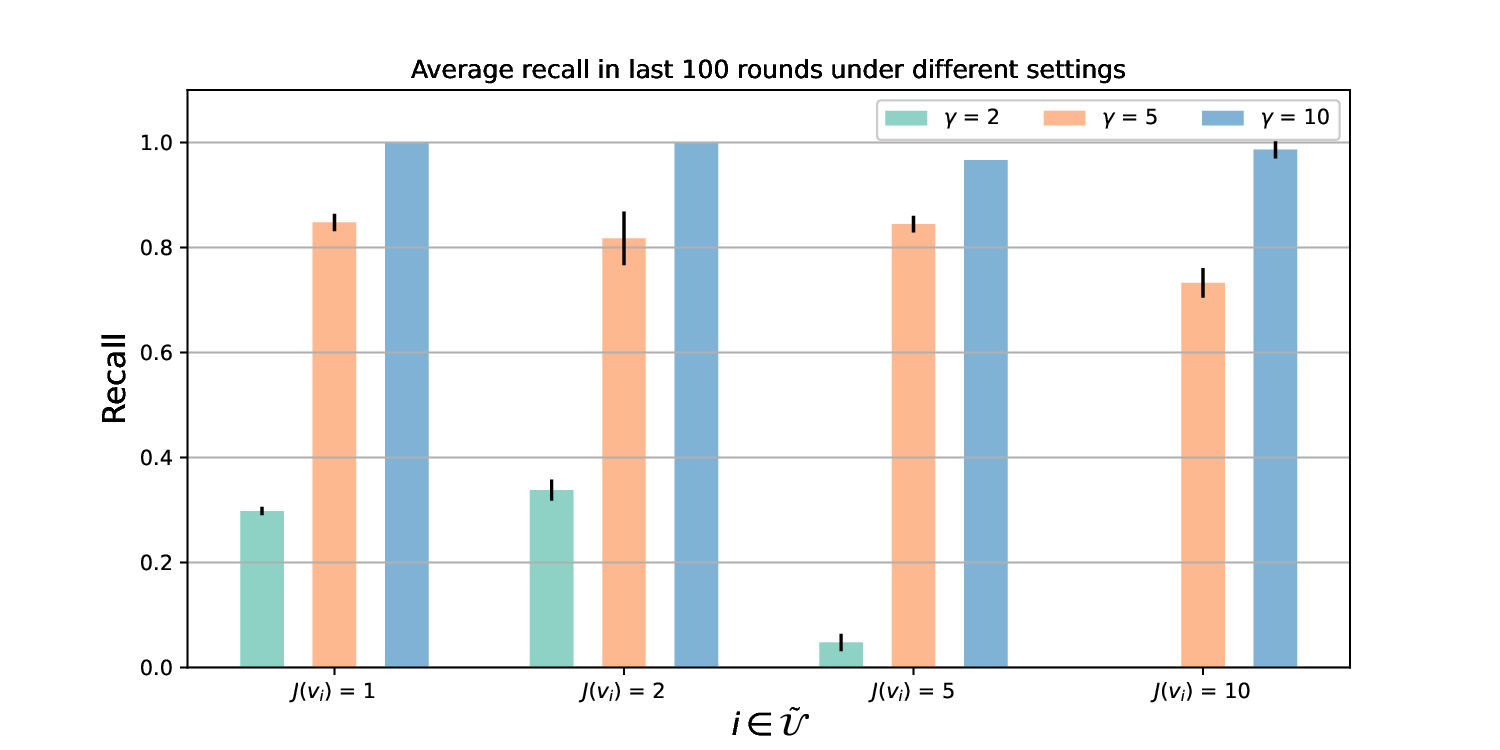}
    \caption{Anomaly detection under different settings.}
    \label{fig:e1-n50-level}
\end{figure}
According to Figure~\ref{fig:e1-n50-level}, the anomaly detection performance improves with increasing $\gamma$. That aligns with the intuition that a bigger $\gamma$ makes anomalies more detectable. However, there is a trade-off in the number of non-zero dimensions. Since NELA detects anomalies by identifying the support of the global residual vector $vec(\boldsymbol{V})$, anomalies with more non-zero dimensions are simpler to detect. Detecting even one of the non-zero dimensions in one anomaly is sufficient to identify it; nevertheless, with a fixed $\gamma$, having more non-zero dimensions results in a smaller value per dimension, making detection harder. That explains the noticeable decrease in recalls concerning $\mathcal{J}(\boldsymbol{v}_i)$ for $i \in\tilde{\mathcal{U}}$ with $\gamma = 2$ while such differences in experiments with $\gamma = 5$ and $\gamma = 10$ are not significant.
\subsection{Experiments on real-world dataset}
Netflix Dataset \footnote{Netflix Movie Rating Dataset from Netflix's ``Netflix Prize" competition on \url{https://www.kaggle.com/datasets/rishitjavia/netflix-movie-rating-dataset?resource=download}} \cite{bennett2007netflix} contains the information of more than 16 million ratings for 1350 movies by 143458 users, which has been widely used to evaluate the effectiveness of contextual bandit algorithms \cite{yang2020laplacian,cheng2023parallel,silva2023exploring}. For our experiments, we select $10^3$ movies with the most ratings and $n = 100$ most active users to construct a sparse user–item collaborative filtering matrix, where unrated pairs are assigned a value of zero. We then apply singular value decomposition (SVD) to this matrix to compute  (dimension $d = 10$) user feature vectors and movie (context) vectors, following the methodology in \cite{li2019improved, cheng2023parallel}. Before using the feature and context vectors in our experiments, we normalize the data in two steps. First, we apply min-max scaling to each dimension to map the values into the range $[-1,1]$. Then, we perform $\ell_2$-normalization on each vector to ensure unit norm.

One challenge is the absence of information about an explicit user connectivity graph or labeled anomalies, which limits the representation of realistic social network scenarios. To bridge this gap, we follow established methods in prior work \cite{wu2016contextual,yang2020laplacian} to construct a synthetic graph based on feature proximity, treating it as ground truth. 

At each step, we sample $M = 50$ items from the extracted $10^3$ movies as the action (context) set. The weight matrix construction are the same as the synthetic dataset. While the user and item (context) features are extracted from the real dataset, the reward is generated according to our payoff model in \eqref{eq:payoff}. This setting allows us to simulate bandit feedback realistically while ensuring that the environment is consistent with our theoretical model and supports evaluation of both regret minimization and anomaly detection. We sample five anomalies following the same setting as the synthetic dataset, and the sampled norms of residuals are 8.46, 1.40, 6.54, 1.26, and 8.43, respectively. 

Figure~\ref{fig:e2-n100} shows the results. Similar to the experiment on the synthetic dataset, NELA exhibits superior performance over all state-of-the-art algorithms, demonstrating the effectiveness of our proposed algorithm. However, the recall of anomaly detection in Figure~\ref{fig:e2-r} is lower in the real-world dataset compared to the synthetic dataset. The reason is that in the former, the arm set represents the features of available movies. Consequently, Assumption~\ref{assump:context-d} is not guaranteed to be satisfied, which affects the algorithm's performance. On the other hand, the precision plot of NELA in Figure~\ref{fig:e2-p} shows that it guarantees zero false detection performance as the experiments on the synthetic dataset. 

\section{Conclusion}\label{sec:conclusion}
We studied the joint action recommendation and anomaly detection problem within an online learning networked bandits setting. Our proposed algorithm, NELA, improves the state-of-the-art by integrating one module for anomaly detection to ensure robust performance. Theoretical analysis and empirical experiments on synthetic and real-world datasets validate the effectiveness of NELA in anomaly detection and demonstrate its superiority over state-of-the-art algorithms. In future work, we plan to explore more realistic and dynamic environments. One key direction is to extend our framework to settings where the user network is unknown or evolves over time, which better reflects practical systems. Additionally, although our theoretical regret bounds do not depend on the positions of anomalies, empirical performance may still be influenced by where anomalies occur, particularly if they are situated at highly connected nodes. Understanding such effects under imperfect anomaly detection is a promising line of study. Besides, certain coordinated attacks or bot networks may alter the influence structure. Therefore, extending the framework to jointly model changes in both the influence structure and the residual deviations is also an interesting direction for future work.

\appendix
\subsection{Auxiliary Lemmas}
\label{app:aux}
In this section, we provide some auxiliary theorems and lemmas that are necessary for the proof of the regret bound. 
\begin{theorem}[Matrix Chernhoff, Theorem 3.1 of \cite{tropp2011user}]\label{the:mc}
Let $\mathcal{F}_1 \subset \mathcal{F}_2 \subset \ldots \subset \mathcal{F}_t$ be a filtration and consider a finite sequence $\{X_s\}$ of positive semi-definite matrix with dimensions $d$, adapted to the filtration. Suppose $\lambda_{\max}(X_k) \leq R$ almost surely. Define finite series $Y = \sum_{s=1}^t X_s$ and $W = \sum_{s=1}^t\mathbb{E}[X_s|\mathcal{F}_{s-1}]$. Then, for all $\mu \geq 0$, for all $\delta \in [0,1)$, we have
\begin{align}
    \mathbb{P}(\lambda_{\min}(Y) \leq (1-\delta)\mu \text{ and } &\lambda_{\min}(W) \geq \mu) \notag \\
    &\leq d(\frac{e^{-\delta}}{(1-\delta)^{1-\delta}})^{\frac{\mu}{R}}.
\end{align}
\end{theorem}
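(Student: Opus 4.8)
\emph{Proof proposal.} The plan is to establish this bound via the matrix Laplace transform method adapted to a predictable (adapted, not independent) sequence. Since we target a lower tail for the smallest eigenvalue, I would work with a \emph{negative} parameter $\theta < 0$. Write $Y_k = \sum_{s=1}^k X_s$ and $W_k = \sum_{s=1}^k \mathbb{E}[X_s \mid \mathcal{F}_{s-1}]$, and set $g(\theta) = (e^{\theta R}-1)/R$, noting $g(\theta) < 0$ for $\theta < 0$ and that each $W_k$ is $\mathcal{F}_{k-1}$-measurable. The central object is the process
\[
M_k = \operatorname{tr}\exp\!\big(\theta Y_k - g(\theta) W_k\big),
\]
and the goal is to show it is a supermartingale with $M_0 = \operatorname{tr}\exp(0) = d$, so that $\mathbb{E}[M_t] \le d$.

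First I would prove the conditional matrix cumulant bound: since each $X_s$ is positive semidefinite with $\lambda_{\max}(X_s) \le R$, its eigenvalues lie in $[0,R]$, and the scalar inequality $e^{\theta x} \le 1 + g(\theta)x$ (valid on $[0,R]$ since a convex function lies below its chord) transfers to matrices to give $\mathbb{E}[e^{\theta X_s}\mid\mathcal{F}_{s-1}] \preceq I + g(\theta)\,\mathbb{E}[X_s\mid\mathcal{F}_{s-1}] \preceq \exp\!\big(g(\theta)\,\mathbb{E}[X_s\mid\mathcal{F}_{s-1}]\big)$, using $I + A \preceq e^{A}$ in the last step. The supermartingale property then follows by peeling off one term at a time: writing $H = \theta Y_{k-1} - g(\theta) W_k$, which is fixed given $\mathcal{F}_{k-1}$, and invoking Lieb's concavity theorem (that $A \mapsto \operatorname{tr}\exp(H + \log A)$ is concave) together with the conditional Jensen inequality, I can bound $\mathbb{E}[M_k\mid\mathcal{F}_{k-1}]$ by $\operatorname{tr}\exp\!\big(H + g(\theta)\,\mathbb{E}[X_k\mid\mathcal{F}_{k-1}]\big)$; the predictable decomposition $W_k = W_{k-1} + \mathbb{E}[X_k\mid\mathcal{F}_{k-1}]$ collapses this exactly to $M_{k-1}$. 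Iterating down to $k=0$ yields $\mathbb{E}[M_t] \le d$.

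Next I would convert $\mathbb{E}[M_t] \le d$ into the stated probability bound, which is where the \emph{joint} event is essential. On the event $\{\lambda_{\min}(W) \ge \mu\}$ we have $W \succeq \mu I$, and since $-g(\theta) > 0$ this gives $\theta Y - g(\theta)W \succeq \theta Y - g(\theta)\mu I$; monotonicity of $\operatorname{tr}\exp$ then yields $M_t \ge e^{-g(\theta)\mu}\operatorname{tr}\exp(\theta Y) \ge e^{-g(\theta)\mu}\exp(\theta\lambda_{\min}(Y))$, where the last step uses $\operatorname{tr}\exp(\theta Y) \ge \lambda_{\max}(\exp(\theta Y)) = \exp(\theta\lambda_{\min}(Y))$ because $\theta<0$. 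On the event $\{\lambda_{\min}(Y) \le (1-\delta)\mu\}$ this is at least $\exp\!\big(\mu[\,-g(\theta) + \theta(1-\delta)\,]\big)$. Markov's inequality together with $\mathbb{E}[M_t]\le d$ then gives $\mathbb{P}(\text{joint event}) \le d\exp\!\big(\mu[\,g(\theta) - \theta(1-\delta)\,]\big)$, and minimizing the exponent over $\theta<0$ at $\theta = \log(1-\delta)/R$ (so $e^{\theta R} = 1-\delta$ and $g(\theta) = -\delta/R$) produces exactly $d\,(e^{-\delta}/(1-\delta)^{1-\delta})^{\mu/R}$.

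I expect the main obstacle to be the supermartingale step, i.e., rigorously exchanging the conditional expectation with the trace exponential despite the non-commutativity of the $X_s$; this is precisely what Lieb's concavity theorem is needed for, and care is required to justify the conditional Jensen step and the operator-monotone manipulations. The scalar optimization and the transfer-rule computations are routine by comparison.
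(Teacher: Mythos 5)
The paper offers no proof of this statement at all: it is imported verbatim as Theorem~3.1 of the cited Tropp reference, so there is nothing internal to compare against. Your proposal is a correct reconstruction of Tropp's own argument, and every step checks out: the chord bound $e^{\theta x} \le 1 + g(\theta)x$ on $[0,R]$ with $g(\theta) = (e^{\theta R}-1)/R < 0$ for $\theta<0$, the supermartingale $M_k = \operatorname{tr}\exp(\theta Y_k - g(\theta)W_k)$ via Lieb's concavity theorem and conditional Jensen (where the key measurability point --- that $H = \theta Y_{k-1} - g(\theta)W_k$ is $\mathcal{F}_{k-1}$-measurable --- holds precisely because $W_k$ is predictable, so the compensator term $g(\theta)\,\mathbb{E}[X_k\mid\mathcal{F}_{k-1}]$ collapses $\mathbb{E}[M_k\mid\mathcal{F}_{k-1}]$ to $M_{k-1}$), the joint-event Markov step using $W \succeq \mu I$ together with $\operatorname{tr}e^{\theta Y} \ge e^{\theta\lambda_{\min}(Y)}$ for $\theta<0$, and the choice $\theta = \log(1-\delta)/R$, whose exponent $\frac{\mu}{R}\bigl(-\delta-(1-\delta)\log(1-\delta)\bigr)$ matches $\bigl(e^{-\delta}/(1-\delta)^{1-\delta}\bigr)^{\mu/R}$ exactly. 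In short: correct, and it is the canonical proof from the source the paper cites, not a new route.
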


\begin{corollary}[Corollary 6.8 in \cite{buhlmann2011statistics}]\label{cor:68}
Suppose $\boldsymbol{\Sigma}_0$ satisfies the compatibility condition for the set $S$ with $|S| = s_0$ and with the compatibility constant $\phi^2(\boldsymbol{\Sigma}_0, S) > 0$ and that $\norm{\boldsymbol{\Sigma}_0 - \boldsymbol{\Sigma}_1} \leq \lambda$ where $\frac{32\lambda s_0}{\phi^2(\boldsymbol{\Sigma}_0, S)} \leq 1$. Then the compatibility condition also holds for $\boldsymbol{\Sigma}_1$ with the compatibility constant $\frac{\phi^2(\boldsymbol{\Sigma}_0, S)}{2}$, i.e. $\phi^2(\boldsymbol{\Sigma}_1, S) \geq \frac{\phi^2(\boldsymbol{\Sigma}_0, S)}{2}$. 
\end{corollary}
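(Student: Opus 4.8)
The plan is to work directly from the definition of the compatibility condition and treat $\boldsymbol{\Sigma}_1$ as an additive perturbation of $\boldsymbol{\Sigma}_0$. Recall that $\boldsymbol{\Sigma}_0$ satisfying the compatibility condition for $S$ means that every $\boldsymbol{\beta}$ in the cone $\mathcal{C}_S = \{\boldsymbol{\beta} : \norm{\boldsymbol{\beta}_{S^c}}_1 \leq 3\norm{\boldsymbol{\beta}_S}_1\}$ obeys $\norm{\boldsymbol{\beta}_S}_1^2 \leq s_0\, \boldsymbol{\beta}^\top \boldsymbol{\Sigma}_0 \boldsymbol{\beta} / \phi^2(\boldsymbol{\Sigma}_0, S)$. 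I would fix an arbitrary $\boldsymbol{\beta} \in \mathcal{C}_S \setminus \{\boldsymbol{0}\}$ and lower-bound the quadratic form $\boldsymbol{\beta}^\top \boldsymbol{\Sigma}_1 \boldsymbol{\beta}$ by splitting it as $\boldsymbol{\beta}^\top \boldsymbol{\Sigma}_0 \boldsymbol{\beta} + \boldsymbol{\beta}^\top(\boldsymbol{\Sigma}_1 - \boldsymbol{\Sigma}_0)\boldsymbol{\beta}$, so that the compatibility constant for $\boldsymbol{\Sigma}_1$ can be read off once both pieces are controlled.

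First I would apply the compatibility condition for $\boldsymbol{\Sigma}_0$ to obtain $\boldsymbol{\beta}^\top \boldsymbol{\Sigma}_0 \boldsymbol{\beta} \geq \phi^2(\boldsymbol{\Sigma}_0, S)\norm{\boldsymbol{\beta}_S}_1^2 / s_0$. Next I would bound the perturbation term by Hölder's inequality in its $\ell_1/\ell_\infty$ form: since $\norm{\cdot}$ in the hypothesis is the entrywise sup-norm on matrices, one has $|\boldsymbol{\beta}^\top(\boldsymbol{\Sigma}_1 - \boldsymbol{\Sigma}_0)\boldsymbol{\beta}| \leq \norm{\boldsymbol{\Sigma}_1 - \boldsymbol{\Sigma}_0}\,\norm{\boldsymbol{\beta}}_1^2 \leq \lambda \norm{\boldsymbol{\beta}}_1^2$. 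The crucial reduction is then to replace the full $\ell_1$ norm by the restricted one using cone membership: $\norm{\boldsymbol{\beta}}_1 = \norm{\boldsymbol{\beta}_S}_1 + \norm{\boldsymbol{\beta}_{S^c}}_1 \leq 4\norm{\boldsymbol{\beta}_S}_1$, hence $\norm{\boldsymbol{\beta}}_1^2 \leq 16\norm{\boldsymbol{\beta}_S}_1^2$.

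Combining the three estimates gives $\boldsymbol{\beta}^\top \boldsymbol{\Sigma}_1 \boldsymbol{\beta} \geq \big(\phi^2(\boldsymbol{\Sigma}_0, S)/s_0 - 16\lambda\big)\norm{\boldsymbol{\beta}_S}_1^2$. The final step is to invoke the hypothesis $32\lambda s_0 / \phi^2(\boldsymbol{\Sigma}_0, S) \leq 1$, which is exactly equivalent to $16\lambda \leq \phi^2(\boldsymbol{\Sigma}_0, S)/(2s_0)$, so the bracketed factor is at least $\phi^2(\boldsymbol{\Sigma}_0, S)/(2s_0)$. Rearranging yields $\norm{\boldsymbol{\beta}_S}_1^2 \leq s_0\, \boldsymbol{\beta}^\top \boldsymbol{\Sigma}_1 \boldsymbol{\beta} / (\phi^2(\boldsymbol{\Sigma}_0, S)/2)$ for every $\boldsymbol{\beta} \in \mathcal{C}_S$, which is precisely the compatibility condition for $\boldsymbol{\Sigma}_1$ with constant $\phi^2(\boldsymbol{\Sigma}_0, S)/2$. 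The only genuine subtlety — and the step I would double-check most carefully — is the constant bookkeeping: the factor $16$ arising from squaring the cone inequality, together with the target halving of the constant, is exactly what forces the threshold $32\lambda s_0 \leq \phi^2(\boldsymbol{\Sigma}_0, S)$, so any looser handling of the cone bound would break the claimed constant. Since the statement concerns the scalar compatibility condition, no group-norm machinery is needed; the identical argument transfers to the grouped form of Definition~\ref{def:cc} by replacing the $\ell_1$ norms with the mixed group norm and the matrix sup-norm with its grouped analogue.
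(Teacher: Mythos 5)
Your proof is correct: the paper gives no proof of this statement, simply citing it as Corollary~6.8 of \cite{buhlmann2011statistics}, and your argument reproduces exactly the standard one in that reference (the transfer-principle step bounding $|\boldsymbol{\beta}^{\top}(\boldsymbol{\Sigma}_1-\boldsymbol{\Sigma}_0)\boldsymbol{\beta}|$ by $\lambda\norm{\boldsymbol{\beta}}_1^2\leq 16\lambda\norm{\boldsymbol{\beta}_S}_1^2$ via the cone condition, then absorbing the perturbation using $32\lambda s_0\leq\phi^2(\boldsymbol{\Sigma}_0,S)$). Your constant bookkeeping, including the factor $16=(1+3)^2$ and the equivalence $32\lambda s_0/\phi^2(\boldsymbol{\Sigma}_0,S)\leq 1 \Leftrightarrow 16\lambda\leq\phi^2(\boldsymbol{\Sigma}_0,S)/(2s_0)$, is exactly right, as is your closing remark that the same argument carries over to the grouped form of Definition~\ref{def:cc}.
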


\begin{theorem}[Theorem 2.19 in \cite{wainwright2019high}]\label{the:219}
Let $(Z_t, \tilde{\mathcal{F}}_t)_{t = 1}^{\infty}$ be a martingale difference sequence, and assume that for all $\alpha \in \mathbb{R}$, $\mathbb{E}[\exp(\alpha Z_s)|\tilde{\mathcal{F}}_{s-1}] \leq \exp(\frac{\alpha^2 \sigma^2}{2})$. Then for all $x \geq 0$, we have
\begin{gather*}
    \mathbb{P}(|\sum_{s=1}^t Z_s| \geq x) \leq 2\exp(-\frac{x^2}{2t\sigma^2}).
\end{gather*}
\end{theorem}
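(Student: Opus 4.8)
The plan is to apply the Chernoff (exponential Markov) method together with an iterated-conditioning argument that exploits the martingale difference structure and the conditionally sub-Gaussian hypothesis. Write $S_t = \sum_{s=1}^t Z_s$ for the partial sum. First I would fix $\alpha > 0$ and apply Markov's inequality to the monotone transform $z \mapsto e^{\alpha z}$, yielding
$$\mathbb{P}(S_t \geq x) \leq e^{-\alpha x}\, \mathbb{E}[\exp(\alpha S_t)].$$
The crux of the argument is then to control the moment generating function $\mathbb{E}[\exp(\alpha S_t)]$ uniformly in $\alpha$.

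To bound the MGF, I would peel off one term at a time. Since $S_{t-1}$ is $\tilde{\mathcal{F}}_{t-1}$-measurable, conditioning on $\tilde{\mathcal{F}}_{t-1}$ lets me factor out $\exp(\alpha S_{t-1})$ as a constant and invoke the hypothesis $\mathbb{E}[\exp(\alpha Z_t) \mid \tilde{\mathcal{F}}_{t-1}] \leq \exp(\alpha^2\sigma^2/2)$, giving
$$\mathbb{E}[\exp(\alpha S_t) \mid \tilde{\mathcal{F}}_{t-1}] = \exp(\alpha S_{t-1})\, \mathbb{E}[\exp(\alpha Z_t) \mid \tilde{\mathcal{F}}_{t-1}] \leq \exp(\alpha S_{t-1}) \exp\left(\frac{\alpha^2\sigma^2}{2}\right).$$
Taking total expectations via the tower property and iterating this estimate $t$ times collapses the product into $\mathbb{E}[\exp(\alpha S_t)] \leq \exp(t\alpha^2\sigma^2/2)$.

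Substituting back into the Chernoff bound produces $\mathbb{P}(S_t \geq x) \leq \exp(-\alpha x + t\alpha^2\sigma^2/2)$, valid for every $\alpha > 0$. Optimizing the quadratic exponent by choosing $\alpha = x/(t\sigma^2)$ yields the one-sided tail $\mathbb{P}(S_t \geq x) \leq \exp(-x^2/(2t\sigma^2))$. Since the hypothesis is assumed for all real $\alpha$, the sequence $(-Z_s, \tilde{\mathcal{F}}_s)$ is again a martingale difference sequence satisfying the identical conditional sub-Gaussian bound, so the same argument gives $\mathbb{P}(S_t \leq -x) \leq \exp(-x^2/(2t\sigma^2))$. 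A union bound over the two tails then delivers the stated two-sided inequality.

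Because each step is a textbook-level inequality, there is no genuine obstacle here; the only point requiring care is the measurability bookkeeping in the iterated conditioning. Specifically, one must verify at each stage that $S_{s-1}$ is $\tilde{\mathcal{F}}_{s-1}$-measurable so that it factors out of the conditional expectation, which holds since every $Z_r$ with $r \leq s-1$ is $\tilde{\mathcal{F}}_{s-1}$-measurable by the adaptedness implicit in the martingale difference definition.
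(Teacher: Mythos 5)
Your proof is correct and is essentially the standard argument: the paper itself does not prove this statement but cites it as Theorem~2.19 of Wainwright's book, whose proof is exactly your Chernoff-plus-iterated-conditioning (Azuma--Hoeffding style) derivation, including the tower-property peeling of the conditional moment generating function, the optimization $\alpha = x/(t\sigma^2)$, and the symmetric treatment of $-Z_s$ for the two-sided bound. Nothing is missing; your measurability remark about $S_{s-1}$ being $\tilde{\mathcal{F}}_{s-1}$-measurable is the only point of care and you handle it correctly.
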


\begin{lemma}[Bernstein-like inequality for the adapted data (Lemma 8 in \cite{oh2021sparsity})]\label{lem:bl-ineq}
Suppose for all $t \geq 1$, for all $1 \leq i \leq j \leq nd$, $\mathbb{E}[\gamma_t^{ij}(vec(\mathring{\boldsymbol{X}}_{a_s,u_s}))|\mathcal{F}_{t-1}] = 0$ and $\mathbb{E}[|\gamma_t^{ij}(vec(\mathring{\boldsymbol{X}}_{a_s,u_s}))|^m|\mathcal{F}_{t-1}] \leq m!$ for all integer $m \geq 2$. Then for all $x > 0$ and for all integer $t \geq 1$, we have:
\begin{align*}
  \mathbb{P}\Big (& \max_{1 \leq i \leq j \leq d} \Big |\frac{1}{t}\sum_{s=1}^t \gamma_t^{ij}(vec(\mathring{\boldsymbol{X}}_{a_s,u_s})) \Big| \geq x + \sqrt{2x} \notag \\
  & + \sqrt{\frac{4\log (2n^2d^2)}{t}} + \frac{2\log(2n^2d^2)}{t} \Big) \leq \exp(-\frac{tx}{2}). 
\end{align*}
\end{lemma}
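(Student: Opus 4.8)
The plan is to prove the one-sided tail for a single index pair and then recover the uniform (maximum) bound by a union bound over the relevant index pairs. Fix $1 \le i \le j \le nd$ and abbreviate $Z_s = \gamma_s^{ij}(vec(\mathring{\boldsymbol{X}}_{a_s,u_s}))$, i.e. the $s$-th summand. By hypothesis $\mathbb{E}[Z_s \mid \mathcal{F}_{s-1}] = 0$, so $(Z_s)_{s\ge 1}$ is a martingale difference sequence adapted to $(\mathcal{F}_s)$ and the partial sums $S_t = \sum_{s=1}^t Z_s$ form a martingale. The whole argument is a Chernoff bound on $S_t$; crucially, the non-independence of the $Z_s$ is handled throughout by the tower property rather than by any product-of-independent-variables formula.

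First I would convert the moment hypothesis $\mathbb{E}[|Z_s|^m \mid \mathcal{F}_{s-1}] \le m!$ into a conditional exponential-moment (sub-exponential) bound. Expanding the exponential and using $\mathbb{E}[Z_s\mid\mathcal{F}_{s-1}]=0$ together with $|\mathbb{E}[Z_s^m\mid\mathcal{F}_{s-1}]|\le m!$ gives, for every $|\lambda|<1$, $\mathbb{E}[e^{\lambda Z_s}\mid\mathcal{F}_{s-1}] \le 1 + \sum_{m\ge 2}|\lambda|^m = 1 + \tfrac{\lambda^2}{1-|\lambda|} \le \exp\!\big(\tfrac{\lambda^2}{1-|\lambda|}\big)$, where the geometric series converges precisely because the $m!$ in the moment bound cancels the $1/m!$ Taylor coefficients. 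This is the key structural input: the moment condition is exactly calibrated to yield a Bernstein-type MGF with a linear denominator.

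Next I would build the MGF of the full sum by peeling off the last term via iterated conditioning: writing $\mathbb{E}[e^{\lambda S_t}] = \mathbb{E}\big[e^{\lambda S_{t-1}}\,\mathbb{E}[e^{\lambda Z_t}\mid\mathcal{F}_{t-1}]\big]$ and inserting the per-step bound repeatedly yields $\mathbb{E}[e^{\lambda S_t}] \le \exp\!\big(\tfrac{t\lambda^2}{1-|\lambda|}\big)$ for all $|\lambda|<1$. A Markov/Chernoff bound on $\mathbb{P}(S_t \ge ta)$ then gives $\inf_{0<\lambda<1}\exp\!\big(t(\tfrac{\lambda^2}{1-\lambda}-\lambda a)\big)$, and optimizing $\lambda$ produces the familiar two-regime Bernstein estimate (sub-Gaussian for small $a$, sub-exponential for large $a$). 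Applying the identical argument to $-Z_s$ handles the absolute value at the cost of a factor $2$.

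Finally I would union-bound over the index set $\{1\le i\le j\le nd\}$, whose cardinality is at most $n^2d^2$; combined with the two-sided factor this is what produces the $2n^2d^2$ inside the logarithm. The remaining work is to choose the per-coordinate deviation level so the tail is exactly $\exp(-tx/2)$ and read off the resulting threshold: the sub-Gaussian piece contributes the $\sqrt{2x}$ and $\sqrt{4\log(2n^2d^2)/t}$ terms, while the sub-exponential piece contributes the $x$ and $2\log(2n^2d^2)/t$ terms. I expect the main obstacle to be exactly this last bookkeeping — carefully tracking how the optimization over $\lambda$ together with the union-bound log factor splits across the two regimes so as to reproduce the precise additive form $x+\sqrt{2x}+\sqrt{4\log(2n^2d^2)/t}+2\log(2n^2d^2)/t$, rather than merely a bound of the correct order; by contrast the probabilistic core (conditional MGF plus tower property) is routine. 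This recovers Lemma~8 of \cite{oh2021sparsity}, adapted here to the $nd$-dimensional multi-user vectorization.
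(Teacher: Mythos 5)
The paper never proves this lemma---it is imported verbatim as Lemma~8 of \cite{oh2021sparsity} and used as a black box in the proof of Lemma~7 (\ref{lem:sigma-norm})---so there is no in-paper proof to compare against; what you have written is a reconstruction of the cited source's argument, and it is correct. For the record, the bookkeeping you flag as the main obstacle does close exactly as you predict. With $Z_s=\gamma_s^{ij}$ and $S_t=\sum_{s=1}^t Z_s$, your per-step bound $\mathbb{E}[e^{\lambda Z_s}\mid\mathcal{F}_{s-1}]\le\exp\big(\tfrac{\lambda^2}{1-|\lambda|}\big)$ plus the tower property gives $\mathbb{P}(S_t/t\ge u)\le\inf_{0<\lambda<1}\exp\big(t(\tfrac{\lambda^2}{1-\lambda}-\lambda u)\big)$; the infimum is attained at $\lambda=1-(1+u)^{-1/2}$ and equals $\exp\big(-t(\sqrt{1+u}-1)^2\big)$. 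Choosing $u=x'+\sqrt{2x'}$ gives $u\ge x'/2+\sqrt{2x'}$, hence $(\sqrt{1+u}-1)^2\ge x'/2$ and a one-sided tail $e^{-tx'/2}$. Doubling for the absolute value, union bounding over the at most $n^2d^2$ pairs $1\le i\le j\le nd$, and setting $x'=x+\tfrac{2}{t}\log(2n^2d^2)$ yields total failure probability $2n^2d^2\,e^{-tx'/2}=e^{-tx/2}$, while subadditivity $\sqrt{2x'}\le\sqrt{2x}+\sqrt{4\log(2n^2d^2)/t}$ turns the threshold $x'+\sqrt{2x'}$ into exactly the four additive terms of the statement. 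Two cosmetic points: the interchange of conditional expectation and the Taylor series deserves one line of justification (dominated convergence, using $\mathbb{E}[|Z_s|^m\mid\mathcal{F}_{s-1}]\le m!$ and $|\lambda|<1$), and you silently---and rightly---repair two typos in the paper's statement of the lemma (the maximum should run over $1\le i\le j\le nd$, not $d$, and the summand should be $\gamma_s^{ij}$, not $\gamma_t^{ij}$).
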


Define the loss function
\begin{align*}
l_t(vec(\boldsymbol{V})) = & \frac{1}{t}\sum_{s=1}^t (r_{a_s,u_s} - vec(\mathring{\boldsymbol{X}}_{a_s,u_s})^{\top}vec(\widehat{\boldsymbol{\Theta}}_s\boldsymbol{W}) \notag \\
&- vec(\mathring{\boldsymbol{X}}_{a_s,u_s})^{\top}vec(\boldsymbol{V}))^2. \label{eq:loss}
\end{align*}
The initial LASSO estimate is given by
\begin{gather*}
   \widehat{\boldsymbol{V}}_0^t = \arg \min_{\boldsymbol{V}}\{l_t(vec(\boldsymbol{V})) + \lambda_t\norm{vec(\boldsymbol{V})}_1\}.
\end{gather*}
Define $\boldsymbol{\Delta} = \widehat{\boldsymbol{V}}_0^t - vec(\boldsymbol{V})$. We analyze the performance of the initial Lasso estimate. 
\begin{lemma}
\label{lem:ehat-v}
Denoting $\mathcal{E}(vec(\boldsymbol{V}')) = \mathbb{E}[l_t(vec(\boldsymbol{V}'))] - \mathbb{E}[l_t(vec(\boldsymbol{V}))]$, it is satisfied that
\begin{gather*}
    \mathcal{E}(\widehat{\boldsymbol{V}}_0^t) \geq \frac{1}{2}(\widehat{\boldsymbol{V}}_0^t-vec(\boldsymbol{V}))^{\top}\widehat{\overline{\boldsymbol{\Sigma}}}_t(\widehat{\boldsymbol{V}}_0^t-vec(\boldsymbol{V}))
\end{gather*}
\end{lemma}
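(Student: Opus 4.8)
The plan is to exploit the fact that, as a function of its argument, the loss $l_t$ is a convex quadratic whose Hessian is the deterministic empirical Gram matrix $\frac{2}{t}\sum_{s=1}^t vec(\mathring{\boldsymbol{X}}_{a_s,u_s})vec(\mathring{\boldsymbol{X}}_{a_s,u_s})^{\top} = 2\widehat{\overline{\boldsymbol{\Sigma}}}_t$. This Hessian depends neither on the evaluation point nor on the noise, so passing to the expectation $\mathbb{E}[l_t(\cdot)]$ (over the sub-Gaussian noise, with the contexts held fixed) leaves a quadratic with exactly the same Hessian. First I would write the exact second-order Taylor identity of $\mathbb{E}[l_t(\cdot)]$ about the true residual vector $vec(\boldsymbol{V})$. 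Recalling $\boldsymbol{\Delta} = \widehat{\boldsymbol{V}}_0^t - vec(\boldsymbol{V})$, this yields
\begin{align*}
\mathcal{E}(\widehat{\boldsymbol{V}}_0^t) = \langle \nabla \mathbb{E}[l_t(vec(\boldsymbol{V}))], \boldsymbol{\Delta}\rangle + \boldsymbol{\Delta}^{\top}\widehat{\overline{\boldsymbol{\Sigma}}}_t\boldsymbol{\Delta},
\end{align*}
so the quadratic remainder already supplies the term $\norm{\boldsymbol{\Delta}}^2_{\widehat{\overline{\boldsymbol{\Sigma}}}_t}$, and the whole argument reduces to controlling the first-order term.

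Next I would make the gradient explicit. Substituting the payoff model \eqref{eq:payoff} into $l_t$, the residual evaluated at the true $vec(\boldsymbol{V})$ is $a_s = vec(\mathring{\boldsymbol{X}}_{a_s,u_s})^{\top}vec((\boldsymbol{\Theta}-\widehat{\boldsymbol{\Theta}}_s)\boldsymbol{W}) + \eta_s$, i.e.\ the sum of the $\boldsymbol{\Theta}$-estimation bias and the zero-mean sub-Gaussian noise. Since $\mathbb{E}[\eta_s]=0$, the noise contributes nothing to the gradient and
\begin{align*}
\nabla\mathbb{E}[l_t(vec(\boldsymbol{V}))] = -\frac{2}{t}\sum_{s=1}^t \big(vec(\mathring{\boldsymbol{X}}_{a_s,u_s})^{\top}vec((\boldsymbol{\Theta}-\widehat{\boldsymbol{\Theta}}_s)\boldsymbol{W})\big)\,vec(\mathring{\boldsymbol{X}}_{a_s,u_s}) =: -2\boldsymbol{g},
\end{align*}
so the entire first-order term is driven by how well $\widehat{\boldsymbol{\Theta}}_s$ tracks $\boldsymbol{\Theta}$ along the sampled contexts.

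The hard part will be showing that this first-order term cannot consume more than half of the curvature, which is exactly what produces the factor $\frac{1}{2}$. I would control it by a weighted Cauchy--Schwarz/Young step, $2|\langle\boldsymbol{g},\boldsymbol{\Delta}\rangle| \le \frac{1}{2}\norm{\boldsymbol{\Delta}}^2_{\widehat{\overline{\boldsymbol{\Sigma}}}_t} + 2\norm{\boldsymbol{g}}^2_{\widehat{\overline{\boldsymbol{\Sigma}}}_t^{-1}}$, and then argue that the residual $\boldsymbol{g}$-term is negligible in the operating regime of the lemma: each bias factor $vec(\mathring{\boldsymbol{X}}_{a_s,u_s})^{\top}vec((\boldsymbol{\Theta}-\widehat{\boldsymbol{\Theta}}_s)\boldsymbol{W})$ is squeezed by the high-probability estimation-error bound of Lemma~\ref{lem:ucb}, together with the near-orthogonality between the ridge residuals and the features encoded in the normal equations $vec(\widehat{\boldsymbol{\Theta}}_s)=\boldsymbol{A}_s^{-1}\boldsymbol{b}_s$, both of which force $\boldsymbol{g}$ to vanish as $s$ grows. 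Combining the Taylor identity with this bound gives $\mathcal{E}(\widehat{\boldsymbol{V}}_0^t) \ge \frac{1}{2}\norm{\boldsymbol{\Delta}}^2_{\widehat{\overline{\boldsymbol{\Sigma}}}_t}$, which is the claim. I expect the delicate point to be making the control of the $\boldsymbol{g}$-term uniform in $\boldsymbol{\Delta}$ and tying it to the $t \ge \tau$ regime, since $\boldsymbol{\Delta}$ is itself the random Lasso deviation and is not independent of the very contexts that enter $\widehat{\overline{\boldsymbol{\Sigma}}}_t$.
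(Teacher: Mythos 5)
Your setup is sound and, up to the first-order term, coincides with what the paper does: the exact quadratic expansion of $\mathbb{E}[l_t(\cdot)]$ about $vec(\boldsymbol{V})$, with Hessian $2\widehat{\overline{\boldsymbol{\Sigma}}}_t$, is just the expansion of the squared loss, and your identification of the gradient with $-2\boldsymbol{g}$, driven by the bias terms $b_s = vec(\mathring{\boldsymbol{X}}_{a_s,u_s})^{\top}vec((\boldsymbol{\Theta}-\widehat{\boldsymbol{\Theta}}_s)\boldsymbol{W})$, is correct. The problem is that your plan never actually closes: the lemma is stated as a clean, unconditional inequality with no additive remainder and no probabilistic qualifier, yet your Young-inequality step $2|\langle\boldsymbol{g},\boldsymbol{\Delta}\rangle| \le \frac{1}{2}\norm{\boldsymbol{\Delta}}^2_{\widehat{\overline{\boldsymbol{\Sigma}}}_t} + 2\norm{\boldsymbol{g}}^2_{\widehat{\overline{\boldsymbol{\Sigma}}}_t^{-1}}$ necessarily leaves the term $2\norm{\boldsymbol{g}}^2_{\widehat{\overline{\boldsymbol{\Sigma}}}_t^{-1}}$, and the claim that it is ``negligible'' would fail as stated. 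Lemma~\ref{lem:ucb} bounds $\norm{vec(\widehat{\boldsymbol{\Theta}}_s)-vec(\boldsymbol{\Theta})}_{\boldsymbol{A}_s}$ only with high probability, by a quantity that \emph{grows} like $\sqrt{\log\det(\boldsymbol{A}_s)}$, so the biases $b_s$ do not vanish; the ridge normal equations $vec(\widehat{\boldsymbol{\Theta}}_s)=\boldsymbol{A}_s^{-1}\boldsymbol{b}_s$ orthogonalize residuals against the directions $vec(\mathring{\boldsymbol{X}}\boldsymbol{W}^{\top})$, not against the $vec(\mathring{\boldsymbol{X}}_{a_s,u_s})$ appearing in $\boldsymbol{g}$, so there is no near-orthogonality identity forcing $\boldsymbol{g}$ toward zero; and $\widehat{\overline{\boldsymbol{\Sigma}}}_t$ may be singular in early rounds. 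At best this route yields $\mathcal{E}(\widehat{\boldsymbol{V}}_0^t) \ge \frac{1}{2}\norm{\boldsymbol{\Delta}}^2_{\widehat{\overline{\boldsymbol{\Sigma}}}_t}$ minus a nonzero high-probability error term, i.e., a genuinely weaker statement than the lemma, and the extra term would have to be propagated into the basic inequality of Lemma~\ref{lem:delta-bound} and the downstream events.

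The paper avoids this entirely by a different accounting of the $\boldsymbol{\Theta}$-estimation error. It expands $\mathbb{E}[l_t(\widehat{\boldsymbol{V}}_0^t)]-\mathbb{E}[l_t(vec(\boldsymbol{V}))]$ so that, under its expectation, the noise cross terms vanish and the $\boldsymbol{\Theta}$-error contributes only through the quadratic form $(vec(\widehat{\boldsymbol{\Theta}}_t\boldsymbol{W})-vec(\boldsymbol{\Theta}\boldsymbol{W}))^{\top}\widehat{\overline{\boldsymbol{\Sigma}}}_t\,(vec(\widehat{\boldsymbol{\Theta}}_t\boldsymbol{W})-vec(\boldsymbol{\Theta}\boldsymbol{W}))$, which is then simply dropped by positive semidefiniteness of $\widehat{\overline{\boldsymbol{\Sigma}}}_t$; the resulting bound actually holds with constant $1$, and the factor $\frac{1}{2}$ in the statement is pure slack rather than the output of a Young step. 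In other words, where you keep the first-order term and try to dominate it by curvature, the paper's expectation structure packages the mixed $b_s$--$\boldsymbol{\Delta}$ contribution into a nonnegative term discarded for free (a step the paper itself states tersely, without spelling out why the cross term organizes this way). The concrete gap in your proposal is therefore the unproven --- and, with the tools available here, unprovable in the required deterministic form --- claim that $\norm{\boldsymbol{g}}^2_{\widehat{\overline{\boldsymbol{\Sigma}}}_t^{-1}}$ can be neglected.
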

\begin{proof}
Denote $\widehat{\boldsymbol{V}} = \widehat{\boldsymbol{V}}_0^t$ for simplicity. From the definition of $\mathcal{E}(vec(\boldsymbol{V}'))$, we have
\begin{align}
\mathcal{E}(\widehat{\boldsymbol{V}}_0^t) &= \mathbb{E}[l_t(\widehat{\boldsymbol{V}}_0^t)] - \mathbb{E}[l_t(vec(\boldsymbol{V}))] \notag \\
&= (\widehat{\boldsymbol{V}} - vec(\boldsymbol{V}))^{\top}\widehat{\overline{\boldsymbol{\Sigma}}}_t(\widehat{\boldsymbol{V}} - vec(\boldsymbol{V})) +(vec(\widehat{\boldsymbol{\Theta}}_t\boldsymbol{W}) \notag \\
& \quad - vec(\boldsymbol{\Theta}\boldsymbol{W}))^{\top}\widehat{\overline{\boldsymbol{\Sigma}}}_t(vec(\widehat{\boldsymbol{\Theta}}_t\boldsymbol{W})-vec(\boldsymbol{\Theta}\boldsymbol{W})) \notag \\
&\geq \frac{1}{2}(\widehat{\boldsymbol{V}} - vec(\boldsymbol{V}))^{\top}\widehat{\overline{\boldsymbol{\Sigma}}}_t(\widehat{\boldsymbol{V}} - vec(\boldsymbol{V})),
\end{align}
where for the inequalities we used the positive semi-definiteness of $\widehat{\overline{\boldsymbol{\Sigma}}}_t$.   
\end{proof}
\begin{lemma}
\label{lem:epsilon-v-p}
Denoting $\varepsilon_t(vec(\boldsymbol{V})) = l_t(vec(\boldsymbol{V})) -  \mathbb{E}[l_t(vec(\boldsymbol{V}))]$, we have: 
\begin{align}
    \mathbb{P} &\Big( | \varepsilon_t(\widehat{\boldsymbol{V}}_0^t)) - \varepsilon_t(vec(\boldsymbol{V}))| \leq \frac{1}{2}\lambda_t \norm{\widehat{\boldsymbol{V}}_0^t - vec(\boldsymbol{V})}_1 \Big) \notag \\
    &\geq 1 - 2\exp\Big(-\frac{t\lambda_t^2}{32\sigma^2s_A^2} + \log nd\Big)
\end{align}
\end{lemma}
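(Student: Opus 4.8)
The plan is to show that the centered-loss increment $\varepsilon_t(\widehat{\boldsymbol{V}}_0^t) - \varepsilon_t(vec(\boldsymbol{V}))$ collapses to a single noise-driven term that is \emph{linear} in $\boldsymbol{\Delta} = \widehat{\boldsymbol{V}}_0^t - vec(\boldsymbol{V})$, and then to control that term coordinate-wise via a sub-Gaussian martingale tail bound followed by a union bound over the $nd$ coordinates. The target denominator $32\sigma^2 s_A^2$ and the additive $\log nd$ in the stated probability are exactly the fingerprints of this strategy.

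First I would rewrite the per-round residual. Abbreviating $\boldsymbol{z}_s = vec(\mathring{\boldsymbol{X}}_{a_s,u_s})$ and $e_s = r_{a_s,u_s} - \boldsymbol{z}_s^{\top}vec(\widehat{\boldsymbol{\Theta}}_s\boldsymbol{W})$, the payoff model \eqref{eq:payoff} gives $e_s - \boldsymbol{z}_s^{\top}vec(\boldsymbol{V}) = \eta_s - \boldsymbol{z}_s^{\top}(vec(\widehat{\boldsymbol{\Theta}}_s\boldsymbol{W}) - vec(\boldsymbol{\Theta}\boldsymbol{W}))$, so $e_s = a_s + \eta_s$ with $a_s$ deterministic given the contexts and the fitted estimates. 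Expanding the squared-loss difference coordinate by coordinate yields $(e_s - \boldsymbol{z}_s^{\top}\widehat{\boldsymbol{V}}_0^t)^2 - (e_s - \boldsymbol{z}_s^{\top}vec(\boldsymbol{V}))^2 = \big[(a_s-\boldsymbol{z}_s^{\top}\widehat{\boldsymbol{V}}_0^t)^2 - (a_s-\boldsymbol{z}_s^{\top}vec(\boldsymbol{V}))^2\big] - 2\eta_s\boldsymbol{z}_s^{\top}\boldsymbol{\Delta}$. The bracketed part is deterministic given the contexts and therefore appears identically in $l_t$ and in its expectation; it cancels when we form $\varepsilon_t(\widehat{\boldsymbol{V}}_0^t)-\varepsilon_t(vec(\boldsymbol{V}))$, leaving
\[
\varepsilon_t(\widehat{\boldsymbol{V}}_0^t) - \varepsilon_t(vec(\boldsymbol{V})) = -\frac{2}{t}\sum_{s=1}^t \eta_s\, \boldsymbol{z}_s^{\top}\boldsymbol{\Delta}.
\]
The key structural observation is that every quadratic-in-context and every $\widehat{\boldsymbol{\Theta}}_s$-dependent contribution drops out of this difference, so the residual term is a clean linear functional of the noise.

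Next I would decouple $\boldsymbol{\Delta}$ from the noise by H\"older's inequality, $\big|\varepsilon_t(\widehat{\boldsymbol{V}}_0^t) - \varepsilon_t(vec(\boldsymbol{V}))\big| \leq 2\,\big\|\tfrac{1}{t}\sum_{s=1}^t \eta_s\boldsymbol{z}_s\big\|_{\infty}\,\norm{\boldsymbol{\Delta}}_1$, which reduces the claim to the deterministic target $\big\|\tfrac{1}{t}\sum_{s=1}^t \eta_s\boldsymbol{z}_s\big\|_{\infty} \leq \lambda_t/4$. For each fixed coordinate $j \in [nd]$, the sequence $Z_s^{(j)} = \eta_s\boldsymbol{z}_s(j)$ is a martingale difference with respect to $\mathcal{F}_s$, since $\boldsymbol{z}_s$ is $\mathcal{F}_s$-measurable and $\mathbb{E}[\eta_s\mid\mathcal{F}_s]=0$; moreover it is conditionally sub-Gaussian with parameter $s_A^2\sigma^2$, because $|\boldsymbol{z}_s(j)| = |vec(\mathring{\boldsymbol{X}}_{a_s,u_s})(j)| \leq s_A$ and $\eta_s$ is conditionally $\sigma$-sub-Gaussian. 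Applying Theorem~\ref{the:219} with $x = t\lambda_t/4$ gives $\mathbb{P}\big(|\tfrac{1}{t}\sum_s Z_s^{(j)}| \geq \lambda_t/4\big) \leq 2\exp(-\tfrac{t\lambda_t^2}{32\sigma^2 s_A^2})$. A union bound over the $nd$ coordinates then yields probability at most $2nd\exp(-\tfrac{t\lambda_t^2}{32\sigma^2 s_A^2}) = 2\exp(-\tfrac{t\lambda_t^2}{32\sigma^2 s_A^2} + \log nd)$, and on the complementary event the H\"older bound delivers $|\varepsilon_t(\widehat{\boldsymbol{V}}_0^t) - \varepsilon_t(vec(\boldsymbol{V}))| \leq 2\cdot\tfrac{\lambda_t}{4}\norm{\boldsymbol{\Delta}}_1 = \tfrac{\lambda_t}{2}\norm{\boldsymbol{\Delta}}_1$, as claimed.

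I expect the main obstacle to be the cancellation step rather than the tail bound. One must fix the conditioning precisely so that all deterministic squared terms annihilate in the difference, and, more delicately, argue that the data-dependence of $\widehat{\boldsymbol{\Theta}}_s$ on past noise does not corrupt the martingale structure: the point is that those entangled terms live entirely inside $a_s$ and hence disappear from the difference, so the surviving sum $\sum_s \eta_s\boldsymbol{z}_s(j)$ remains a genuine $\mathcal{F}_s$-martingale difference to which Theorem~\ref{the:219} applies. Once this cancellation is justified and the conditional sub-Gaussianity of $\eta_s\boldsymbol{z}_s(j)$ is verified, the remaining steps are routine. Note also that $\boldsymbol{\Delta}$ itself is random and data-dependent, but H\"older isolates it as the deterministic factor $\norm{\boldsymbol{\Delta}}_1$, so the concentration argument need only bound the noise term uniformly and is unaffected by the randomness of $\widehat{\boldsymbol{V}}_0^t$.
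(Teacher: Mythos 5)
Your proposal is correct and follows essentially the same route as the paper's proof: the same cancellation reducing the centered-loss difference to $\frac{2}{t}\sum_s \eta_s\,vec(\mathring{\boldsymbol{X}}_{a_s,u_s})^{\top}(vec(\boldsymbol{V})-\widehat{\boldsymbol{V}}_0^t)$, the same H\"older decoupling, the same coordinate-wise martingale sub-Gaussian tail bound via Theorem~\ref{the:219} at threshold $t\lambda_t/4$, and the same union bound over the $nd$ coordinates yielding the $\log nd$ term. If anything, your explicit justification that the $\widehat{\boldsymbol{\Theta}}_s$-dependent terms cancel (so the martingale structure survives) and that the randomness of $\boldsymbol{\Delta}$ is harmless is spelled out more carefully than in the paper, which asserts these steps implicitly.
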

%
\begin{proof}
Denote $\widehat{\boldsymbol{V}} = \widehat{\boldsymbol{V}}_0^t$. We compute $\varepsilon_t(\widehat{\boldsymbol{V}})$ as
\begin{align}
    \varepsilon_t(\widehat{\boldsymbol{V}}) &= l_t(\boldsymbol{V}) -  \mathbb{E}[l_t(vec(\boldsymbol{V}))] \notag 
\end{align}
Besides,
\begin{align}
    \varepsilon_t(vec(\boldsymbol{V})) &= \frac{1}{t}\sum_{s=1}^t \Big( 2\eta_s vec(\mathring{\boldsymbol{X}}_{a_s,u_s})^{\top} [vec(\boldsymbol{\Theta}\boldsymbol{W})  \notag \\
    &\quad - vec(\widehat{\boldsymbol{\Theta}}_s\boldsymbol{W})] + \eta_s^2 - \mathbb{E}[\eta_s^2] \Big). 
\end{align}
Thus, we can compute
\begin{align*}
    \varepsilon_t(\widehat{\boldsymbol{V}}) - \varepsilon_t&(vec(\boldsymbol{V})) = \frac{1}{t}\sum_{s=1}^t 2\eta_s vec(\mathring{\boldsymbol{X}}_{a_t,u_t})^{\top} [vec(\boldsymbol{V})-\widehat{\boldsymbol{V}}] \notag \\
    &\leq \frac{2}{t}\norm{\sum_{s=1}^t\eta_svec(\mathring{\boldsymbol{X}}_{a_s,u_s})}_{\infty}\norm{vec(\boldsymbol{V})-\widehat{\boldsymbol{V}}}_1.
\end{align*}
Using H\"oder's inequality we have
\begin{align*}
    \mathbb{P} &\Big(\frac{2}{t}\norm{\sum_{s=1}^t\eta_svec(\mathring{\boldsymbol{X}}_{a_s,u_s})}_{\infty} \leq \lambda \Big) \notag \\
    &\geq 1 - \sum_{i=1}^{nd}\mathbb{P}\Big (\frac{2}{t}|\sum_{s=1}^t\eta_svec(\mathring{\boldsymbol{X}}_{a_s,u_s})(i)| > \lambda \Big),
\end{align*}
where $\cdot(i)$ is an indicator of the $i-$th element. Similar to $\boldsymbol{S}_t$, $\eta_svec(\mathring{\boldsymbol{X}}_{a_t,u_t})(i)$ is also a martingale. Specially, define $\tilde{\mathcal{F}}_t$ as the $\sigma-$algebra generated by $(vec(\mathring{\boldsymbol{X}}_{a_1,u_1}),\mathcal{A}_1,\eta_1, \ldots, vec(\mathring{\boldsymbol{X}}_{a_t,u_t}),\mathcal{A}_t,\eta_t,\mathcal{A}_{t+1})$. For each dimension $i$ of $vec(\mathring{\boldsymbol{X}}_{a_t,u_t})$, we have $\mathbb{E}[\eta_s vec(\mathring{\boldsymbol{X}}_{a_s,u_s})(i)|\tilde{\mathcal{F}}_{s-1}] = vec(\mathring{\boldsymbol{X}}_{a_s,u_s})(i)\mathbb{E}[\eta_s|\tilde{\mathcal{F}}_{s-1}] = 0$. By using the assumption that $|vec(\mathring{\boldsymbol{X}}_{a_t,u_t})(i)| \leq s_A$, we can obtain, for each $\alpha \in \mathbb{R}$, 
\begin{align*}
    \mathbb{E}\Big[\exp(\alpha \eta_s vec(\mathring{\boldsymbol{X}}_{a_s,u_s})(i))|\tilde{\mathcal{F}}_{s-1} \Big] &\leq \mathbb{E}[\exp(\alpha \eta_s s_A)|\tilde{\mathcal{F}}_{s-1} ] \notag \\
    &\leq \exp(\frac{\alpha^2 s_A^2 \sigma^2}{2}),
\end{align*}
which indicates that $\eta_s vec(\mathring{\boldsymbol{X}}_{a_s,u_s})(i)$ is also a sub-Gaussian random variable with variance proxy $(s_A\sigma)^2$.

Using Theorem~\ref{the:219}, we can obtain
\begin{gather*}
    \mathbb{P}(|\sum_{s=1}^t \eta_s vec(\mathring{\boldsymbol{X}}_{a_s,u_s})(i))| \geq \frac{t\lambda}{2}) \leq 2\exp(-\frac{t \lambda^2}{8s_A^2\sigma^2}).
\end{gather*}
Denote $\lambda = \frac{1}{2}\lambda_t$, the inequality becomes
\begingroup
\footnotesize
\begin{align}
     \mathbb{P}\Big(\frac{2}{t}\norm{\sum_{s=1}^t \eta_s vec(\mathring{\boldsymbol{X}}_{a_s,u_s})}_{\infty} &\leq \frac{\lambda_t}{2}\Big) \geq 1- 2nd\exp(-\frac{t \lambda_t^2}{32s_A^2\sigma^2}) \notag \\
     = 1 - 2\exp&(-\frac{t \lambda_t^2}{32s_A^2\sigma^2}+\log nd).
\end{align}
\endgroup
That completes the proof. 
\end{proof}
\begin{lemma}
\label{lem:delta-bound}
Let $\widehat{\overline{\boldsymbol{\Sigma}}}_t$ be the empirical covariance matrix of the selected context. Suppose $\widehat{\overline{\boldsymbol{\Sigma}}}_t$ satisfies the compatibility condition with the support $J(vec(\boldsymbol{V}))$ with the compatibility constant $\phi_t$. Then we have
\begin{gather*}
    \mathbb{P} \Big( \norm{\boldsymbol{\Delta}}_1 \leq \frac{4\lambda_t}{\phi_t^2} \Big) \geq 1 - 2\exp \Big(-\frac{t\lambda_t^2}{32\sigma^2s_A^2} + \log (nd)\Big).
\end{gather*}
\end{lemma}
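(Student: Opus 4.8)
The plan is to run the standard Lasso oracle argument, adapted to the present multi-user setting, treating the two preparatory lemmas as black boxes. The starting point is the optimality (basic inequality) of the initial Lasso estimate: since $\widehat{\boldsymbol{V}}_0^t$ minimizes $l_t(vec(\boldsymbol{V})) + \lambda_t\norm{vec(\boldsymbol{V})}_1$, using the truth $vec(\boldsymbol{V})$ as a competitor gives $l_t(\widehat{\boldsymbol{V}}_0^t) - l_t(vec(\boldsymbol{V})) \leq \lambda_t(\norm{vec(\boldsymbol{V})}_1 - \norm{\widehat{\boldsymbol{V}}_0^t}_1)$. First I would split the left-hand side into the excess risk and the empirical fluctuation, writing $l_t(\widehat{\boldsymbol{V}}_0^t) - l_t(vec(\boldsymbol{V})) = \mathcal{E}(\widehat{\boldsymbol{V}}_0^t) + \big(\varepsilon_t(\widehat{\boldsymbol{V}}_0^t) - \varepsilon_t(vec(\boldsymbol{V}))\big)$, which are exactly the quantities controlled in Lemma~\ref{lem:ehat-v} and Lemma~\ref{lem:epsilon-v-p}.

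Next I would condition on the high-probability event of Lemma~\ref{lem:epsilon-v-p}, on which $|\varepsilon_t(\widehat{\boldsymbol{V}}_0^t) - \varepsilon_t(vec(\boldsymbol{V}))| \leq \tfrac12\lambda_t\norm{\boldsymbol{\Delta}}_1$; this is the step that fixes both the final probability $1 - 2\exp(-\tfrac{t\lambda_t^2}{32\sigma^2 s_A^2} + \log(nd))$ and the requirement that $\lambda_t$ be large enough to dominate the noise. Combining this with the lower bound $\mathcal{E}(\widehat{\boldsymbol{V}}_0^t) \geq \tfrac12\norm{\boldsymbol{\Delta}}_{\widehat{\overline{\boldsymbol{\Sigma}}}_t}^2 \geq 0$ from Lemma~\ref{lem:ehat-v}, the basic inequality becomes $\tfrac12\norm{\boldsymbol{\Delta}}_{\widehat{\overline{\boldsymbol{\Sigma}}}_t}^2 \leq \lambda_t(\norm{vec(\boldsymbol{V})}_1 - \norm{\widehat{\boldsymbol{V}}_0^t}_1) + \tfrac12\lambda_t\norm{\boldsymbol{\Delta}}_1$.

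The crux is then to turn this into a one-sided $\ell_1$ statement. Writing $\mathcal{J}$ for the support of $vec(\boldsymbol{V})$ and using $vec(\boldsymbol{V})_{\mathcal{J}^c} = \boldsymbol{0}$, the triangle inequality gives $\norm{vec(\boldsymbol{V})}_1 - \norm{\widehat{\boldsymbol{V}}_0^t}_1 \leq \norm{\boldsymbol{\Delta}_{\mathcal{J}}}_1 - \norm{\boldsymbol{\Delta}_{\mathcal{J}^c}}_1$; substituting $\norm{\boldsymbol{\Delta}}_1 = \norm{\boldsymbol{\Delta}_{\mathcal{J}}}_1 + \norm{\boldsymbol{\Delta}_{\mathcal{J}^c}}_1$ and using $\norm{\boldsymbol{\Delta}}_{\widehat{\overline{\boldsymbol{\Sigma}}}_t}^2 \geq 0$ yields the cone condition $\norm{\boldsymbol{\Delta}_{\mathcal{J}^c}}_1 \leq 3\norm{\boldsymbol{\Delta}_{\mathcal{J}}}_1$, which is precisely the admissibility requirement of Definition~\ref{def:cc}. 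With $\boldsymbol{\Delta}$ now in the cone, the assumed compatibility condition for $\widehat{\overline{\boldsymbol{\Sigma}}}_t$ with constant $\phi_t$ converts the quadratic-form bound $\norm{\boldsymbol{\Delta}}_{\widehat{\overline{\boldsymbol{\Sigma}}}_t}^2 \lesssim \lambda_t\norm{\boldsymbol{\Delta}_{\mathcal{J}}}_1$ into $\norm{\boldsymbol{\Delta}_{\mathcal{J}}}_1 \lesssim \lambda_t/\phi_t^2$, and feeding this back through the cone condition ($\norm{\boldsymbol{\Delta}}_1 \leq 4\norm{\boldsymbol{\Delta}_{\mathcal{J}}}_1$) bounds the full $\norm{\boldsymbol{\Delta}}_1$ by a constant multiple of $\lambda_t/\phi_t^2$; carrying the $3/2$ and $1/2$ prefactors through these steps pins the constant down to the claimed $4\lambda_t/\phi_t^2$.

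I expect the main obstacle to be bookkeeping rather than conceptual: the delicate point is keeping the signs of the support and off-support $\ell_1$ terms straight so that the cone condition emerges with exactly the constant $3$ matching Definition~\ref{def:cc}, and ensuring the compatibility condition is invoked for the \emph{empirical} matrix $\widehat{\overline{\boldsymbol{\Sigma}}}_t$ rather than the population $\overline{\boldsymbol{\Sigma}}$. Here this is granted by hypothesis, but within the surrounding analysis it would require transferring compatibility from $\overline{\boldsymbol{\Sigma}}$ to $\widehat{\overline{\boldsymbol{\Sigma}}}_t$ via Corollary~\ref{cor:68}, once $\norm{\widehat{\overline{\boldsymbol{\Sigma}}}_t - \overline{\boldsymbol{\Sigma}}}$ is shown to be small enough that the compatibility constant only halves.
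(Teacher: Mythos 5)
Your proposal follows essentially the same route as the paper's proof: the Lasso basic inequality with $vec(\boldsymbol{V})$ as competitor, the decomposition into excess risk $\mathcal{E}$ and empirical fluctuation $\varepsilon_t$, conditioning on the event of Lemma~\ref{lem:epsilon-v-p} (which is indeed what fixes the stated probability $1 - 2\exp(-\tfrac{t\lambda_t^2}{32\sigma^2 s_A^2} + \log(nd))$), the triangle-inequality bookkeeping that produces the cone condition $\norm{\boldsymbol{\Delta}(\mathcal{J}^c)}_1 \leq 3\norm{\boldsymbol{\Delta}(\mathcal{J})}_1$ matching Definition~\ref{def:cc}, and the compatibility condition invoked for the empirical matrix $\widehat{\overline{\boldsymbol{\Sigma}}}_t$ (with the transfer from $\overline{\boldsymbol{\Sigma}}$ handled elsewhere via Corollary~\ref{cor:68}, exactly as you note). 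All the ingredients are correctly identified and correctly sequenced.

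The one place your sketch does not deliver the lemma as stated is the final chaining. From your intermediate inequality $\tfrac12\norm{\boldsymbol{\Delta}}^2_{\widehat{\overline{\boldsymbol{\Sigma}}}_t} + \tfrac{\lambda_t}{2}\norm{\boldsymbol{\Delta}(\mathcal{J}^c)}_1 \leq \tfrac{3\lambda_t}{2}\norm{\boldsymbol{\Delta}(\mathcal{J})}_1$, the two-stage route you describe --- first extract $\norm{\boldsymbol{\Delta}}_{\widehat{\overline{\boldsymbol{\Sigma}}}_t} \leq 3\lambda_t/\phi_t$, hence $\norm{\boldsymbol{\Delta}(\mathcal{J})}_1 \leq 3\lambda_t/\phi_t^2$, then feed back through $\norm{\boldsymbol{\Delta}}_1 \leq 4\norm{\boldsymbol{\Delta}(\mathcal{J})}_1$ --- yields $\norm{\boldsymbol{\Delta}}_1 \leq 12\lambda_t/\phi_t^2$, not the claimed $4\lambda_t/\phi_t^2$; the $3/2$ and $1/2$ prefactors do not ``pin down to $4$'' along that path. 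The paper avoids this by never decoupling the two pieces: it keeps the full term $\lambda_t\norm{\boldsymbol{\Delta}}_1$ on the left, obtaining $2\mathcal{E}(\widehat{\boldsymbol{V}}_0^t) + \lambda_t\norm{\boldsymbol{\Delta}}_1 \leq 4\lambda_t\norm{\boldsymbol{\Delta}(\mathcal{J})}_1 \leq \tfrac{4\lambda_t}{\phi_t}\norm{\boldsymbol{\Delta}}_{\widehat{\overline{\boldsymbol{\Sigma}}}_t}$, then applies the Young-type inequality $4uv \leq u^2 + 4v^2$ to get $\norm{\boldsymbol{\Delta}}^2_{\widehat{\overline{\boldsymbol{\Sigma}}}_t} + \tfrac{4\lambda_t^2}{\phi_t^2}$, and absorbs the quadratic form back into the left-hand side via Lemma~\ref{lem:ehat-v} ($\norm{\boldsymbol{\Delta}}^2_{\widehat{\overline{\boldsymbol{\Sigma}}}_t} \leq 2\mathcal{E}(\widehat{\boldsymbol{V}}_0^t)$); cancelling $2\mathcal{E}$ and dividing by $\lambda_t$ gives exactly $\norm{\boldsymbol{\Delta}}_1 \leq 4\lambda_t/\phi_t^2$. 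This is a local, easily repaired misstep rather than a conceptual gap: your intermediate inequality already contains everything needed, and the fix is to use the absorption trick in place of the feed-back-through-the-cone step.
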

\begin{proof}
From the definition of loss function in \eqref{eq:loss} and the update step in Algorithm~\ref{alg:cola}, we have
\begin{gather}
    l_t(\widehat{\boldsymbol{V}}_0^t) + \lambda_t\norm{\widehat{\boldsymbol{V}}_0^t}_1 \leq l_t(vec(\boldsymbol{V})) + \lambda_t\norm{vec(\boldsymbol{V})}_1.
\end{gather}
By taking the expectation over $r_{a_t,u_t} - vec(\mathring{\boldsymbol{X}}_{a_t,u_t})^Tvec(\widehat{\boldsymbol{\Theta}}_t\boldsymbol{W})$, we obtain
\begin{align*}
    l_t(\widehat{\boldsymbol{V}}_0^t) &- \mathbb{E}[l_t(\widehat{\boldsymbol{V}}_0^t)] + \mathbb{E}[l_t(\widehat{\boldsymbol{V}}_0^t)] - \mathbb{E}[l_t(vec(\boldsymbol{V}))] + \lambda_t\norm{\widehat{\boldsymbol{V}}_0^t}_1 \notag \\
    &\leq l_t(vec(\boldsymbol{V})) - \mathbb{E}[l_t(vec(\boldsymbol{V}))] + \lambda_t\norm{vec(\boldsymbol{V})}_1.
\end{align*}
Denoting $\varepsilon_t(vec(\boldsymbol{V})) = l_t(vec(\boldsymbol{V})) -  \mathbb{E}[l_t(vec(\boldsymbol{V}))]$ and $\mathcal{E}(vec(\boldsymbol{V}')) = \mathbb{E}[l_t(vec(\boldsymbol{V}'))] - \mathbb{E}[l_t(vec(\boldsymbol{V}))]$, the previous inequality becomes
\begin{gather}
    \varepsilon_t(\widehat{\boldsymbol{V}}_0^t) + \mathcal{E}(\widehat{\boldsymbol{V}}_0^t)+ \lambda_t\norm{\widehat{\boldsymbol{V}}_0^t}_1 \leq \varepsilon_t(vec(\boldsymbol{V})) + \lambda_t\norm{vec(\boldsymbol{V})}_1. \label{eq:mathcal-e}
\end{gather}
Define the event $\xi_t$ as
\begin{gather*}
    \xi_t = \Big \{| \varepsilon_t(\widehat{\boldsymbol{V}}_0^t) - \varepsilon_t(vec(\boldsymbol{V}))| \leq \frac{1}{2}\lambda_t \norm{\widehat{\boldsymbol{V}}_0^t - vec(\boldsymbol{V})}_1 \Big \}. 
\end{gather*}
We condition on this event in the rest of the proof.

Given the event $\xi_t$ and inequality~\eqref{eq:mathcal-e}, we have 
\begin{gather}
    2\mathcal{E}(\widehat{\boldsymbol{V}}_0^t) \leq 2\lambda_t(\norm{vec(\boldsymbol{V})}_1 - \norm{\widehat{\boldsymbol{V}}_0^t}_1) + \lambda_t \norm{\widehat{\boldsymbol{V}}_0^t - vec(\boldsymbol{V})}_1. \label{eq:mathcal-e-event}
\end{gather}
According to the triangle inequality,
\begin{align}
    &\norm{\widehat{\boldsymbol{V}}_0^t}_1 = \norm{\widehat{\boldsymbol{V}}_0^t(\mathcal{J})}_1 + \norm{\widehat{\boldsymbol{V}}_0^t(\mathcal{J}^c)}_1 \notag \\
    &\geq \norm{vec(\boldsymbol{V})(\mathcal{J})}_1 - \norm{\widehat{\boldsymbol{V}}_0^t(\mathcal{J})-vec(\boldsymbol{V})(\mathcal{\mathcal{J}})}_1 + \norm{\widehat{\boldsymbol{V}}_0^t(\mathcal{J}^c)}_1, \label{eq:triangle}
\end{align}
where $\mathcal{J}^c$ is the complement of $\mathcal{J}$. In addition, we have
\begin{align}
    \norm{\widehat{\boldsymbol{V}}_0^t - vec(\boldsymbol{V})}_1 
    &= \norm{(\widehat{\boldsymbol{V}}_0^t - vec(\boldsymbol{V}))(\mathcal{J})}_1 + \norm{\widehat{\boldsymbol{V}}_0^t(\mathcal{J}^c)}_1. \label{eq:hat-v-diff}
\end{align}
Combining inequalities~\eqref{eq:mathcal-e-event}, \eqref{eq:triangle} and \eqref{eq:hat-v-diff}, we can obtain
\begin{align}
    2\mathcal{E}(\widehat{\boldsymbol{V}}_0^t) 
    \leq - \lambda_t \norm{\widehat{\boldsymbol{V}}_0^t(\mathcal{J}^c)}_1 + 3\lambda_t \norm{\widehat{\boldsymbol{V}}_0^t(\mathcal{J})-vec(\boldsymbol{V})(\mathcal{J})}_1 . \label{eq:e-vecv}
\end{align}
According to the compatibility condition in Definition~\ref{def:cc}, we can obtain
\begin{align*}
   &\norm{\widehat{\boldsymbol{V}}_0^t(\mathcal{J})-vec(\boldsymbol{V})(\mathcal{J})}_1^2 \notag \\
   &\quad \quad \leq \frac{(\widehat{\boldsymbol{V}}_0^t-vec(\boldsymbol{V}))^T\widehat{\overline{\boldsymbol{\Sigma}}}_t(\widehat{\boldsymbol{V}}_0^t-vec(\boldsymbol{V}))}{\phi_t^2}.
\end{align*}
Thus, with \eqref{eq:e-vecv}, we arrive at 
\begin{align}
    &2\mathcal{E}(\widehat{\boldsymbol{V}}_0^t) + \lambda_t \norm{\widehat{\boldsymbol{V}}_0^t - vec(\boldsymbol{V})}_1 \notag \\
    &= 2\mathcal{E}(\widehat{\boldsymbol{V}}_0^t) + \lambda_t \norm{(\widehat{\boldsymbol{V}}_0^t - vec(\boldsymbol{V}))(\mathcal{J})}_1 + \lambda_t \norm{\widehat{\boldsymbol{V}}_0^t(\mathcal{J}^c)}_1 \notag \\
    &\leq 4 \lambda_t \norm{\widehat{\boldsymbol{V}}_0^t(\mathcal{J})-vec(\boldsymbol{V})(\mathcal{J})}_1 \notag \\
    &\leq \frac{4\lambda_t}{\phi_t}\sqrt{(\widehat{\boldsymbol{V}}_0^t-vec(\boldsymbol{V}))^{\top}\widehat{\overline{\boldsymbol{\Sigma}}}_t(\widehat{\boldsymbol{V}}_0^t-vec(\boldsymbol{V}))} \notag \\
    &\leq (\widehat{\boldsymbol{V}}_0^t-vec(\boldsymbol{V}))^{\top}\widehat{\overline{\boldsymbol{\Sigma}}}_t(\widehat{\boldsymbol{V}}_0^t-vec(\boldsymbol{V})) + \frac{4\lambda_t^2}{\phi_t^2} \notag \\
    &\leq 2 \mathcal{E}(\widehat{\boldsymbol{V}}_0^t) + \frac{4\lambda_t^2}{\phi_t^2},
\end{align}
where the first inequality is based on \eqref{eq:hat-v-diff}, the second inequality is based on the compatibility condition, the third inequality uses $4uv \leq u^2+4v^2$, and the last inequality is due to the following Lemma~\ref{lem:ehat-v}.

Thus, using Lemma~\ref{lem:epsilon-v-p} yields
\begin{gather}
    \norm{\widehat{\boldsymbol{V}}_0^t - vec(\boldsymbol{V})}_1 \leq \frac{4\lambda_t}{\phi_t^2},
\end{gather}
with probability at least $1 - 2\exp \Big(-\frac{t\lambda_t^2}{32\sigma^2s_A^2} + \log nd \Big)$. That completes the proof of Lemma~\ref{lem:delta-bound}.
\end{proof}
\begin{lemma}
\label{lem:sigma-norm}
Let $C_0 = \min \{\frac{1}{2},\frac{\phi_0^2}{512s_0s_A^2\nu\omega_{\mathcal{X}}}\}$. We have, for all $t \geq \frac{2\log(2d^2)}{C_0^2}$,
\begin{gather}
\mathbb{P}(\frac{1}{2s_A^2}\norm{\widehat{\overline{\boldsymbol{\Sigma}}}_t- \overline{\boldsymbol{\Sigma}}_t}_{\infty} \geq \frac{\phi^2(\overline{\boldsymbol{\Sigma}}_t,\mathcal{J})}{64s_0s_A^2}) \leq \exp(-\frac{tC_0^2}{2})   
\end{gather}
\end{lemma}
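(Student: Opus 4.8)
The plan is to read each entry of $\widehat{\overline{\boldsymbol{\Sigma}}}_t - \overline{\boldsymbol{\Sigma}}_t$ as a normalized sum of a bounded martingale difference sequence and then invoke the Bernstein-type concentration of Lemma~\ref{lem:bl-ineq}. Here I take $\overline{\boldsymbol{\Sigma}}_t = \frac{1}{t}\sum_{s=1}^t \mathbb{E}[vec(\mathring{\boldsymbol{X}}_{a_s,u_s})vec(\mathring{\boldsymbol{X}}_{a_s,u_s})^{\top}\mid\mathcal{F}_{s-1}]$, the adapted averaged covariance, so that $\widehat{\overline{\boldsymbol{\Sigma}}}_t-\overline{\boldsymbol{\Sigma}}_t$ is a genuine martingale average. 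First I would write $\norm{\widehat{\overline{\boldsymbol{\Sigma}}}_t-\overline{\boldsymbol{\Sigma}}_t}_{\infty}=\max_{i,j}\big|\frac{1}{t}\sum_{s=1}^t (vec(\mathring{\boldsymbol{X}}_{a_s,u_s})(i)\,vec(\mathring{\boldsymbol{X}}_{a_s,u_s})(j)-\mathbb{E}[\cdot\mid\mathcal{F}_{s-1}])\big|$. Since $vec(\mathring{\boldsymbol{X}}_{a_s,u_s})$ is supported only on the block of the served user, every off-block entry vanishes identically for both matrices, so the maximum effectively ranges over the nonzero within-block coordinates.

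To meet the hypotheses of Lemma~\ref{lem:bl-ineq}, I would define the normalized increments $\gamma_s^{ij}:=\frac{1}{2s_A^2}\big(vec(\mathring{\boldsymbol{X}}_{a_s,u_s})(i)\,vec(\mathring{\boldsymbol{X}}_{a_s,u_s})(j)-\mathbb{E}[\cdot\mid\mathcal{F}_{s-1}]\big)$. By Assumption~\ref{assump:bound} each coordinate obeys $|vec(\mathring{\boldsymbol{X}}_{a_s,u_s})(i)|\leq s_A$, so the product lies in $[-s_A^2,s_A^2]$ and its centered version in $[-2s_A^2,2s_A^2]$; hence $|\gamma_s^{ij}|\leq 1$, giving $\mathbb{E}[|\gamma_s^{ij}|^m\mid\mathcal{F}_{s-1}]\leq 1\leq m!$ for all $m\geq 2$, while $\mathbb{E}[\gamma_s^{ij}\mid\mathcal{F}_{s-1}]=0$ by construction. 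This is precisely why the factor $\frac{1}{2s_A^2}$ appears on the left-hand side: it turns the raw deviation into $\max_{i,j}\big|\frac{1}{t}\sum_{s=1}^t\gamma_s^{ij}\big|$, the exact quantity controlled by Lemma~\ref{lem:bl-ineq}.

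Applying Lemma~\ref{lem:bl-ineq} then yields, for every $x>0$,
\[
\mathbb{P}\!\left(\tfrac{1}{2s_A^2}\norm{\widehat{\overline{\boldsymbol{\Sigma}}}_t-\overline{\boldsymbol{\Sigma}}_t}_{\infty}\geq x+\sqrt{2x}+\sqrt{\tfrac{4\log(2n^2d^2)}{t}}+\tfrac{2\log(2n^2d^2)}{t}\right)\leq \exp\!\left(-\tfrac{tx}{2}\right).
\]
I would then set $x=C_0^2$, so the tail probability becomes exactly $\exp(-tC_0^2/2)$, matching the claim. It remains to show the deterministic threshold does not exceed $\frac{\phi^2(\overline{\boldsymbol{\Sigma}}_t,\mathcal{J})}{64s_0s_A^2}$. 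The lower bound $t\geq\frac{2\log(2n^2d^2)}{C_0^2}$ forces $\frac{2\log(2n^2d^2)}{t}\leq C_0^2$ and $\sqrt{\frac{4\log(2n^2d^2)}{t}}\leq\sqrt{2}\,C_0$; combined with $x+\sqrt{2x}=C_0^2+\sqrt{2}\,C_0$ and $C_0\leq\tfrac12$ (so $2C_0^2\leq C_0$), the four terms collapse to at most $(1+2\sqrt{2})C_0\leq 4C_0$. Invoking $C_0\leq\frac{\phi_0^2}{512s_0s_A^2\nu\omega_{\mathcal{X}}}$ together with $\phi_0^2\leq\phi^2(\overline{\boldsymbol{\Sigma}}_t,\mathcal{J})$ then collapses $4C_0$ into $\frac{\phi^2(\overline{\boldsymbol{\Sigma}}_t,\mathcal{J})}{64s_0s_A^2}$, the slack in the constant $512$ absorbing the remaining $\nu,\omega_{\mathcal{X}}$ factors.

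The main obstacle is this last step: carefully tracking the numerical constants so that the sum of the four threshold terms of Lemma~\ref{lem:bl-ineq} is dominated by $\frac{\phi^2}{64s_0s_A^2}$. The definition of $C_0$ (notably the factor $512$ and the $\nu\omega_{\mathcal{X}}$ dependence) and the lower bound on $t$ are engineered exactly for this collapse. The purpose of the lemma is then to certify, via Corollary~\ref{cor:68} applied with $\lambda=\frac{\phi^2}{32s_0}$ (so that $2s_A^2\cdot\frac{\phi^2}{64s_0s_A^2}=\lambda$), that on the complementary event the empirical Gram matrix inherits the compatibility condition of $\overline{\boldsymbol{\Sigma}}_t$ with high probability.
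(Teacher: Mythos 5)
Your proposal reproduces the paper's proof in all substantive steps: the same normalized martingale increments $\gamma_s^{ij}$ with the $\frac{1}{2s_A^2}$ factor chosen precisely so that $\abs{\gamma_s^{ij}}\leq 1$ and the moment conditions $\mathbb{E}[\abs{\gamma_s^{ij}}^m\mid\mathcal{F}'_{s-1}]\leq 1\leq m!$ hold, the same application of the Bernstein-like inequality (Lemma~\ref{lem:bl-ineq}) with $x=C_0^2$, and the same collapse of the four threshold terms to $4C_0$ using $t\geq \frac{2\log(2n^2d^2)}{C_0^2}$ and $C_0\leq \frac12$. Your closing remark about the lemma's purpose (feeding Corollary~\ref{cor:68} with $\lambda=\phi^2/(32s_0)$ to transfer compatibility to $\widehat{\overline{\boldsymbol{\Sigma}}}_t$) also matches how the paper uses this result inside Lemma~\ref{lem:phi-sigma}.

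The one flaw is the justification of the final inequality. You assert $\phi_0^2\leq\phi^2(\overline{\boldsymbol{\Sigma}}_t,\mathcal{J})$, but $\phi_0^2$ lower-bounds the compatibility constant of the \emph{theoretical} covariance $\overline{\boldsymbol{\Sigma}}$, not of the adapted matrix $\overline{\boldsymbol{\Sigma}}_t$; since the arms are selected greedily by the algorithm, the conditional covariance is distorted, and nothing guarantees that inequality. Your fallback that ``the slack in $512$ absorbs the remaining $\nu,\omega_{\mathcal{X}}$ factors'' would additionally require $\nu\omega_{\mathcal{X}}\geq\frac12$, which is not assumed. The correct step, which the paper establishes in the proof of Lemma~\ref{lem:phi-sigma} via Assumption~\ref{assump:armsd} (Lemma 10 of \cite{oh2021sparsity}), is the semidefinite domination $\overline{\boldsymbol{\Sigma}}_t\succeq(2\nu\omega_{\mathcal{X}})^{-1}\overline{\boldsymbol{\Sigma}}$, which yields $\phi^2(\overline{\boldsymbol{\Sigma}}_t,\mathcal{J})\geq\frac{\phi_0^2}{2\nu\omega_{\mathcal{X}}}$. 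With this the constants cancel exactly rather than by slack:
\begin{align}
4C_0\;\leq\;\frac{\phi_0^2}{128\,s_0 s_A^2\,\nu\omega_{\mathcal{X}}}\;=\;\frac{1}{64\,s_0 s_A^2}\cdot\frac{\phi_0^2}{2\nu\omega_{\mathcal{X}}}\;\leq\;\frac{\phi^2(\overline{\boldsymbol{\Sigma}}_t,\mathcal{J})}{64\,s_0 s_A^2}, \notag
\end{align}
so $512=4\times 64\times 2$ is engineered for this factor of $2\nu\omega_{\mathcal{X}}$, not as spare headroom. This is a local, fully repairable misstep; with that substitution your argument coincides with the paper's.
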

\begin{proof}
Let us define $\gamma_t^{ij}(vec(\mathring{\boldsymbol{X}}_{a_s,u_s}))$ as
\begin{align*}
    \gamma_t^{ij}(vec(\mathring{\boldsymbol{X}}_{a_s,u_s})) &= \frac{1}{2s_A^2}\Big (vec(\mathring{\boldsymbol{X}}_{a_s,u_s})(i)vec(\mathring{\boldsymbol{X}}_{a_s,u_s})(j) \notag \\
    &- \mathbb{E}[vec(\mathring{\boldsymbol{X}}_{a_s,u_s})(i)vec(\mathring{\boldsymbol{X}}_{a_s,u_s})(j) | \mathcal{F}'_{t-1}] \Big),
\end{align*}
where $\mathcal{F}'_t$ is the $\sigma$-algebra generated by random variables $(\mathcal{A}_{u_1,1},vec(\mathring{\boldsymbol{X}}_{a_1,u_1}),r_{a_1,u_1},\ldots,\mathcal{A}_{t-1},vec(\mathring{\boldsymbol{X}}_{a_{t-1},u_{t-1}}),\\
r_{a_{t-1},u_{t-1}},\mathcal{A}_{u_t,t})$. We have 
\begin{align*}
\frac{1}{2s_A^2}\norm{\widehat{\overline{\boldsymbol{\Sigma}}}_t - \overline{\boldsymbol{\Sigma}}_t}_{\infty} = \max_{1\leq i \leq j \leq nd}|\frac{1}{t}\sum_{s=1}^t \gamma_s^{ij}(vec(\mathring{\boldsymbol{X}}_{a_s,u_s}))|,
\end{align*}
$\mathbb{E}[\gamma_t^{ij}(vec(\mathring{\boldsymbol{X}}_{a_s,u_s}))|\mathcal{F}'_{t-1}] = 0$ and $\mathbb{E}[|\gamma_t^{ij}(vec(\mathring{\boldsymbol{X}}_{a_s,u_s}))|^m | \mathcal{F}'_{t-1}] \leq 1$ for all integer $m \geq 2$. Therefore, we can apply Lemma~\ref{lem:bl-ineq}:
\begin{align*}
\mathbb{P}\Big( \frac{1}{2s_A^2} \norm{\widehat{\overline{\boldsymbol{\Sigma}}}_t - \overline{\boldsymbol{\Sigma}}_t}_{\infty} \geq x& + \sqrt{2x} + \sqrt{\frac{4\log(2n^2d^2)}{t}} \notag \\
&+ \frac{2\log(2n^2d^2)}{t} \Big) \leq \exp(-\frac{tx}{2}).     
\end{align*}
For all $t \geq \frac{2\log(2n^2d^2)}{C_0^2}$ with $C_0 = \min \{\frac{1}{2}, \frac{\phi_0^2}{256s_0s_A^2\nu \omega_{\mathcal{X}}}\}$, taking $x = C_0^2$,
\begin{align}
x +& \sqrt{2x} + \sqrt{\frac{4\log(2n^2d^2)}{t}} \leq 2C_0^2 + 2\sqrt{2}C_0 \notag  \\
&\quad \leq 4C_0 \leq \frac{\phi_0^2}{128s_0s_A^2\nu \omega_{\mathcal{X}}} \leq \frac{\phi^2(\overline{\boldsymbol{\Sigma}}_t,J)}{64s_0s_A^2}
\end{align}
In summary, for all $ t \geq \frac{2\log(2n^2d^2)}{C_0^2}$, we have
\begin{align}
\mathbb{P}\Big( \frac{1}{2s_A^2} \norm{\widehat{\overline{\boldsymbol{\Sigma}}}_t - \overline{\boldsymbol{\Sigma}}_t}_{\infty}  \geq \frac{\phi^2(\overline{\boldsymbol{\Sigma}}_t,\mathcal{J})}{64s_0s_A^2}\ \Big) 
&\leq \exp(-\frac{tC_0^2}{2})
\end{align}
That completes the proof.
\end{proof}
\begin{lemma}
\label{lem:phi-sigma}
Let $C_0 = \min \{\frac{1}{2},\frac{\phi_0^2}{512s_0s_A^2\nu \omega_{\mathcal{X}}}\}$. For all $t \geq \frac{2\log(2n^2d^2)}{C_0^2}$, we have 
\begin{gather}
    \mathbb{P}(\phi^2(\widehat{\overline{\boldsymbol{\Sigma}}}_t, \mathcal{J}) \geq \frac{\phi_0^2}{4\nu \omega_{\mathcal{X}}}) \geq 1 - \exp\Big(-\frac{tC_0^2}{2} \Big)
\end{gather}
\end{lemma}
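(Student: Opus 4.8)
The plan is to derive the empirical compatibility bound from two already-established tools: the deviation bound of Lemma~\ref{lem:sigma-norm}, which controls $\norm{\widehat{\overline{\boldsymbol{\Sigma}}}_t - \overline{\boldsymbol{\Sigma}}_t}_{\infty}$, and the compatibility-transfer result of Corollary~\ref{cor:68}, which says that the compatibility constant degrades by at most a factor $2$ under a small $\ell_\infty$ perturbation of the covariance matrix. The only genuinely new ingredient needed beyond these is a population-level comparison between the conditional (selected-arm) covariance $\overline{\boldsymbol{\Sigma}}_t$ and the marginal covariance $\overline{\boldsymbol{\Sigma}}$, which is where the constants $\nu$ and $\omega_{\mathcal{X}}$ enter.

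First I would condition on the high-probability event of Lemma~\ref{lem:sigma-norm}. For $t \geq 2\log(2n^2d^2)/C_0^2$ this event holds with probability at least $1 - \exp(-tC_0^2/2)$ and gives $\frac{1}{2s_A^2}\norm{\widehat{\overline{\boldsymbol{\Sigma}}}_t - \overline{\boldsymbol{\Sigma}}_t}_{\infty} < \frac{\phi^2(\overline{\boldsymbol{\Sigma}}_t,\mathcal{J})}{64 s_0 s_A^2}$, i.e. $\norm{\widehat{\overline{\boldsymbol{\Sigma}}}_t - \overline{\boldsymbol{\Sigma}}_t}_{\infty} < \frac{\phi^2(\overline{\boldsymbol{\Sigma}}_t,\mathcal{J})}{32 s_0}$. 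On this event I would apply Corollary~\ref{cor:68} with $\boldsymbol{\Sigma}_0 = \overline{\boldsymbol{\Sigma}}_t$, $\boldsymbol{\Sigma}_1 = \widehat{\overline{\boldsymbol{\Sigma}}}_t$, $S = \mathcal{J}$ with $|S| = s_0$, and $\lambda = \norm{\widehat{\overline{\boldsymbol{\Sigma}}}_t - \overline{\boldsymbol{\Sigma}}_t}_{\infty}$. The hypothesis $32\lambda s_0/\phi^2(\overline{\boldsymbol{\Sigma}}_t,\mathcal{J}) \leq 1$ is precisely the inequality just obtained, so the corollary yields $\phi^2(\widehat{\overline{\boldsymbol{\Sigma}}}_t,\mathcal{J}) \geq \tfrac12\,\phi^2(\overline{\boldsymbol{\Sigma}}_t,\mathcal{J})$. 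Note that this step introduces no further randomness, so the success probability is inherited unchanged from Lemma~\ref{lem:sigma-norm}, and no union bound is required.

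The remaining task is to lower-bound $\phi^2(\overline{\boldsymbol{\Sigma}}_t,\mathcal{J})$ by $\frac{\phi_0^2}{2\nu\omega_{\mathcal{X}}}$, after which chaining with the factor $\tfrac12$ above gives exactly $\phi^2(\widehat{\overline{\boldsymbol{\Sigma}}}_t,\mathcal{J}) \geq \frac{\phi_0^2}{4\nu\omega_{\mathcal{X}}}$. Here $\overline{\boldsymbol{\Sigma}}_t$ is the time-averaged conditional second-moment matrix of the served contexts, whose conditional distribution differs from the marginal $p$ because arms are chosen adaptively. I would invoke Assumption~\ref{assump:armsd}: the skewness bound $\nu$ and the arm-correlation bound $\omega_{\mathcal{X}}$ together control this selection bias and yield the positive-semidefinite comparison $\overline{\boldsymbol{\Sigma}}_t \succeq \frac{1}{2\nu\omega_{\mathcal{X}}}\overline{\boldsymbol{\Sigma}}$. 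Since the compatibility constant is defined through the quadratic form $\norm{\boldsymbol{\Delta}}^2_{\overline{\boldsymbol{\Sigma}}_t}$ (Definition~\ref{def:cc}), this PSD inequality propagates directly, giving $\phi^2(\overline{\boldsymbol{\Sigma}}_t,\mathcal{J}) \geq \frac{1}{2\nu\omega_{\mathcal{X}}}\phi^2(\overline{\boldsymbol{\Sigma}},\mathcal{J}) \geq \frac{\phi_0^2}{2\nu\omega_{\mathcal{X}}}$, where the last step uses that $\phi_0^2$ is by assumption a lower bound of $\phi^2(\overline{\boldsymbol{\Sigma}},\mathcal{J})$.

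The main obstacle is the PSD comparison $\overline{\boldsymbol{\Sigma}}_t \succeq \frac{1}{2\nu\omega_{\mathcal{X}}}\overline{\boldsymbol{\Sigma}}$. The difficulty is that the served context is selected by the adaptive rule~\eqref{eq:action}, so its conditional law is reweighted by the (unknown, history-dependent) ordering event; one must show this reweighting cannot make the second-moment matrix degenerate in any direction, which is exactly the role of $\nu$ and $\omega_{\mathcal{X}}$ in Assumption~\ref{assump:armsd}. I expect the argument to reduce, via the block-diagonal structure of $\overline{\boldsymbol{\Sigma}}$ and $\overline{\boldsymbol{\Sigma}}_t$, to the per-user $d\times d$ lower bound already established for single-user sparse bandits in \cite{oh2021sparsity,ariu2022thresholded}; the only extra care is verifying that the uniform-at-random user arrivals preserve this block structure and do not interfere with the per-block bound.
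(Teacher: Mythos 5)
Your proposal matches the paper's proof essentially step for step: the paper likewise conditions on the deviation event of Lemma~\ref{lem:sigma-norm}, applies Corollary~\ref{cor:68} with $\boldsymbol{\Sigma}_0 = \overline{\boldsymbol{\Sigma}}_t$ and $\boldsymbol{\Sigma}_1 = \widehat{\overline{\boldsymbol{\Sigma}}}_t$ to lose only a factor $\tfrac12$, and obtains $\phi^2(\overline{\boldsymbol{\Sigma}}_t,\mathcal{J}) \geq \frac{\phi_0^2}{2\nu\omega_{\mathcal{X}}}$ from the PSD comparison $\overline{\boldsymbol{\Sigma}}_t \succeq (2\nu\omega_{\mathcal{X}})^{-1}\overline{\boldsymbol{\Sigma}}$, which it imports (as you anticipated) from Lemma 10 of \cite{oh2021sparsity} together with Assumption~\ref{assump:armsd}, exploiting the block-diagonal structure. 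The step you flagged as the main obstacle is thus handled exactly as you predicted, so your argument is correct and is the same route as the paper's.
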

\begin{proof}
In Definition~\ref{def:cc}, the considered covariance matrix $\overline{\boldsymbol{\Sigma}} \in \mathbb{R}^{nd \times nd}$ is block-diagonal consisting of $n$ blocks, one for each different user. Define the filtration $(\overline{\mathcal{F}}_t)_{t \geq 0}$ as $\overline{\mathcal{F}}_t = \sigma(\mathring{\boldsymbol{X}}_{a_1,u_1},\eta_1, \ldots, \mathring{\boldsymbol{X}}_{a_t,u_t})$. First, consider the single-user adapted matrix
\begin{gather}
 \boldsymbol{\Sigma}_{i,t} = \frac{1}{|\mathcal{T}_{i,t}|}\sum_{s \in \mathcal{T}_{i,t}} \mathbb{E}[\boldsymbol{x}_{i,s}\boldsymbol{x}_{i,s}^{\top}|\overline{\mathcal{F}}_{s-1}], \quad i \in \mathcal{U}
\end{gather}
where $\mathcal{T}_{i,t}$ collects the round that user $i$ is selected. The multi-user matrix is then $\overline{\boldsymbol{\Sigma}}_t = diag(\boldsymbol{\Sigma}_{1,t}, \ldots, \boldsymbol{\Sigma}_{n,t}) \in \mathbb{R}^{nd \times nd}$. According to Lemma 10 of \cite{oh2021sparsity} and Assumption~\ref{assump:armsd}, we have
\begin{align}
    \overline{\boldsymbol{\Sigma}}_t &= \frac{1}{t}\sum_{s=1}^t\mathbb{E}[vec(\mathring{\boldsymbol{X}}_{a_s,u_s})vec(\mathring{\boldsymbol{X}}_{a_s,u_s})^{\top}|\overline{\mathcal{F}}_{s-1}] \notag \\
    & \succeq (2\nu \omega_{\mathcal{X}})^{-1}\overline{\boldsymbol{\Sigma}},
    \label{eq:sigma-cc}
\end{align}
where $vec(\mathring{\boldsymbol{X}}_{a_s,u_s}) \in \mathbb{R}^{nd}$. According to Definition~\ref{def:cc},
\begin{gather}    \frac{\norm{\boldsymbol{\Delta}}^2_{\overline{\boldsymbol{\Sigma}}_t}}{\norm{\boldsymbol{\Delta}(\mathcal{J})}^2} \geq \frac{\norm{\boldsymbol{\Delta}}^2_{\overline{\boldsymbol{\Sigma}}}}{2\nu \omega_{\mathcal{X}}\norm{\boldsymbol{\Delta}(\mathcal{J})}^2}  \geq \frac{\phi^2(\overline{\boldsymbol{\Sigma}}, \mathcal{J})}{2\nu \omega_{\mathcal{X}}} \geq \frac{\phi_0^2}{2\nu \omega_{\mathcal{X}}}
\end{gather}
Hence, $\overline{\boldsymbol{\Sigma}}_t$ satisfies the compatibility condition with $\phi^2(\overline{\boldsymbol{\Sigma}}_t,\mathcal{J}) = \frac{\phi_0^2}{2\nu \omega_{\mathcal{X}}}$. Furthermore, we have the following adaptive matrix concentration results for $\widehat{\overline{\boldsymbol{\Sigma}}}_t$:

By using Lemma~\ref{lem:sigma-norm} and Corollary~\ref{cor:68}, for all $ t \geq \frac{2\log(2n^2d^2)}{C_0^2}$, 
\begin{align}
\phi^2(\widehat{\overline{\boldsymbol{\Sigma}}}_t,\mathcal{J}) \geq \frac{\phi^2(\overline{\boldsymbol{\Sigma}}_t,\mathcal{J})}{2} \geq \frac{\phi_0^2}{4\nu\omega_{\mathcal{X}}},
\end{align}
with probability at least $1 - \exp(-\frac{tC_0^2}{2})$. That completes the proof. 
\end{proof}
\begin{lemma}[Minimal Eigenvalue of the Empirical Gram Matrix]\label{lem:min-eigen}
Under Assumption~\ref{assump:spd}, we have $\mathbb{P}(\lambda_{\min}(\widehat{\overline{\boldsymbol{\Sigma}}}_{t,\hat{J}}) \geq \frac{\beta}{4\nu\omega_{\mathcal{X}}}|\mathcal{D}_t) \geq 1 - \exp(\log(s_0+ \frac{4\nu\omega_{\mathcal{X}}\sqrt{s_0}}{\phi_0^2}) - \frac{t\beta}{20s_A^2\nu \omega_{\mathcal{X}}(s_0+(4\nu\omega_{\mathcal{X}}\sqrt{s_0})/\phi_0^2)})$.  
\end{lemma}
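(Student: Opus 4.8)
The plan is to condition on the event $\mathcal{D}_t$ and apply the matrix Chernoff inequality (Theorem~\ref{the:mc}) to the empirical Gram matrix restricted to the estimated support. Write $\hat{\mathcal{J}} := \hat{\mathcal{J}}_1^t$ and $k_0 := s_0 + \frac{4\nu\omega_{\mathcal{X}}\sqrt{s_0}}{\phi_0^2}$. On $\mathcal{D}_t$ we have both the containment $\mathcal{J} \subseteq \hat{\mathcal{J}}$ and the cardinality control $|\hat{\mathcal{J}}| \leq k_0$, so $\widehat{\overline{\boldsymbol{\Sigma}}}_{t,\hat{\mathcal{J}}}$ is a principal submatrix of dimension at most $k_0$; this cardinality bound is exactly what will produce the logarithmic dimension factor $\log k_0$ in the final probability rather than a $\log(nd)$ factor.

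First I would lower-bound the conditional (adapted) mean of the restricted Gram matrix. Setting $\overline{\boldsymbol{\Sigma}}_{t,\hat{\mathcal{J}}} = \frac{1}{t}\sum_{s=1}^t \mathbb{E}[vec(\mathring{\boldsymbol{X}}_{a_s,u_s})(\hat{\mathcal{J}})vec(\mathring{\boldsymbol{X}}_{a_s,u_s})(\hat{\mathcal{J}})^{\top}\mid\overline{\mathcal{F}}_{s-1}]$, the same argument used in Lemma~\ref{lem:phi-sigma} (Lemma 10 of \cite{oh2021sparsity} combined with Assumption~\ref{assump:armsd}) gives the Loewner domination $\overline{\boldsymbol{\Sigma}}_{t} \succeq (2\nu\omega_{\mathcal{X}})^{-1}\overline{\boldsymbol{\Sigma}}$, and passing to the principal submatrix indexed by $\hat{\mathcal{J}}$ preserves this order, so $\overline{\boldsymbol{\Sigma}}_{t,\hat{\mathcal{J}}} \succeq (2\nu\omega_{\mathcal{X}})^{-1}\overline{\boldsymbol{\Sigma}}_{\hat{\mathcal{J}}}$. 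Since $\mathcal{J}\subseteq\hat{\mathcal{J}}$ and $|\hat{\mathcal{J}}|\leq k_0$, Assumption~\ref{assump:spd} applies with $\mathcal{B}=\hat{\mathcal{J}}$ and yields $\lambda_{\min}(\overline{\boldsymbol{\Sigma}}_{\hat{\mathcal{J}}})\geq\beta$, whence $\lambda_{\min}(\overline{\boldsymbol{\Sigma}}_{t,\hat{\mathcal{J}}})\geq \mu := \frac{\beta}{2\nu\omega_{\mathcal{X}}}$.

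Next I would take $X_s = \frac{1}{t}vec(\mathring{\boldsymbol{X}}_{a_s,u_s})(\hat{\mathcal{J}})vec(\mathring{\boldsymbol{X}}_{a_s,u_s})(\hat{\mathcal{J}})^{\top}$, so that $Y = \sum_s X_s = \widehat{\overline{\boldsymbol{\Sigma}}}_{t,\hat{\mathcal{J}}}$ and $W = \sum_s \mathbb{E}[X_s\mid\overline{\mathcal{F}}_{s-1}] = \overline{\boldsymbol{\Sigma}}_{t,\hat{\mathcal{J}}}$ with $\lambda_{\min}(W)\geq\mu$ on $\mathcal{D}_t$. Each $X_s$ is positive semi-definite with $\lambda_{\max}(X_s) = \frac{1}{t}\norm{vec(\mathring{\boldsymbol{X}}_{a_s,u_s})(\hat{\mathcal{J}})}_2^2 \leq \frac{|\hat{\mathcal{J}}|s_A^2}{t} \leq R := \frac{k_0 s_A^2}{t}$, using $\norm{vec(\mathring{\boldsymbol{X}}_{a_s,u_s})}_{\infty}\leq s_A$ from Assumption~\ref{assump:bound}. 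Choosing $\delta = \tfrac{1}{2}$ makes the Chernoff threshold $(1-\delta)\mu = \frac{\beta}{4\nu\omega_{\mathcal{X}}}$ coincide exactly with the target level, so Theorem~\ref{the:mc} gives $\mathbb{P}(\lambda_{\min}(\widehat{\overline{\boldsymbol{\Sigma}}}_{t,\hat{\mathcal{J}}}) \leq \frac{\beta}{4\nu\omega_{\mathcal{X}}}\mid\mathcal{D}_t) \leq |\hat{\mathcal{J}}|\,(e^{-1/2}/(1/2)^{1/2})^{\mu/R}$. Finally I would simplify the base using $-\log(e^{-1/2}/(1/2)^{1/2}) = \tfrac12 - \tfrac12\log 2 \geq \tfrac{1}{10}$, so that $(e^{-1/2}/(1/2)^{1/2})^{\mu/R} \leq \exp(-\tfrac{1}{10}\cdot\frac{t\beta}{2 s_A^2\nu\omega_{\mathcal{X}}k_0}) = \exp(-\frac{t\beta}{20 s_A^2\nu\omega_{\mathcal{X}}k_0})$, and then absorb the prefactor $|\hat{\mathcal{J}}|\leq k_0$ into the exponent as $\exp(\log k_0 - \cdots)$, which is precisely the claimed bound with $k_0 = s_0 + \frac{4\nu\omega_{\mathcal{X}}\sqrt{s_0}}{\phi_0^2}$.

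The main obstacle I anticipate is the data dependence of the support $\hat{\mathcal{J}}=\hat{\mathcal{J}}_1^t$: it is computed from the LASSO fit using all observations through round $t$, so the matrices $X_s$ above are $\mathcal{F}_t$-measurable rather than $\overline{\mathcal{F}}_s$-measurable, and Theorem~\ref{the:mc} does not literally apply to this sequence, nor is passing from the joint Chernoff event to the clean conditional probability entirely free. Making the step rigorous requires working on the realized support under $\mathcal{D}_t$ and treating it as the fixed restriction set, so that the ambient dimension entering the Chernoff prefactor is the submatrix size $|\hat{\mathcal{J}}|\leq k_0$ rather than $nd$; this is exactly the mechanism that keeps the final factor at $\log k_0$. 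The adapted-covariance domination established above is what guarantees that the $\lambda_{\min}(W)\geq\mu$ side of the Chernoff event is retained after this conditioning.
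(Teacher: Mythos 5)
Your proposal is correct and follows essentially the same route as the paper's proof: the adapted-mean Loewner domination $\overline{\boldsymbol{\Sigma}}_{t,\hat{\mathcal{J}}} \succeq (2\nu\omega_{\mathcal{X}})^{-1}\overline{\boldsymbol{\Sigma}}_{\hat{\mathcal{J}}}$ (via the same argument as in Lemma~\ref{lem:phi-sigma}) combined with Assumption~\ref{assump:spd} under $\mathcal{D}_t$ to get $\lambda_{\min} \geq \beta/(2\nu\omega_{\mathcal{X}})$, the bound $\lambda_{\max} \leq |\hat{\mathcal{J}}|s_A^2 \leq (s_0 + 4\nu\omega_{\mathcal{X}}\sqrt{s_0}/\phi_0^2)s_A^2$ on each rank-one summand, and matrix Chernoff (Theorem~\ref{the:mc}) with $\delta = \tfrac{1}{2}$, $\mu = t\beta/(2\nu\omega_{\mathcal{X}})$, and the identical numerical simplification $\tfrac{1}{2} - \tfrac{1}{2}\log 2 \geq \tfrac{1}{10}$ yielding the $\exp(\log k_0 - t\beta/(20 s_A^2 \nu\omega_{\mathcal{X}} k_0))$ bound. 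The data-dependence of $\hat{\mathcal{J}}_1^t$ that you flag is handled the same way (and equally informally) in the paper, which simply fixes $\hat{\mathcal{J}}$ under the conditioning on $\mathcal{D}_t$.
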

\begin{proof}
For a fixed set $\hat{\mathcal{J}}$, we first define the adapted Gram matrix on the estimated support as
\begin{gather}
\overline{\boldsymbol{\Sigma}}_{t,\hat{\mathcal{J}}} = \frac{1}{t}\sum_{s=1}^t\mathbb{E}[(vec(\mathring{\boldsymbol{X}}_{a_s,u_s})(\hat{\mathcal{J}}))(vec(\mathring{\boldsymbol{X}}_{a_s,u_s})(\hat{\mathcal{J}}))^{\top}|\mathcal{F}'_{s-1}]  
\end{gather}
The following Lemma characterizes the expected Gram matrix generated by the algorithm.
\begin{lemma}\label{lem:sigma-support}
Fix $\hat{\mathcal{J}}$ such that $\mathcal{J} \subset \hat{\mathcal{J}}$ and $|\hat{\mathcal{J}}| \leq s_0+(4\nu\omega_{\mathcal{X}}\sqrt{s_0})/\phi_0^2$. Under Assumption~\ref{assump:armsd} and Definition~\ref{def:cc}, we have
\begin{gather}
    \overline{\boldsymbol{\Sigma}}_{t,\hat{J}} \succeq (2\nu \omega_{\mathcal{X}})^{-1}\overline{\boldsymbol{\Sigma}}_{\hat{J}}.
\end{gather}
\end{lemma}
The proof of Lemma~\ref{lem:sigma-support} is almost the same as the proof of \eqref{eq:sigma-cc} in Lemma~\ref{lem:phi-sigma}, thus we omit the proof here.

By Assumption~\ref{assump:spd} and the construction of our algorithm, under the event $\mathcal{D}_t$
\begin{align}
\lambda_{\min}(\overline{\boldsymbol{\Sigma}}_{t,\hat{\mathcal{J}}}) 
&\geq \frac{\beta}{2\nu\omega_{\mathcal{X}}},
\end{align}
where the first inequality uses the concavity of $\lambda_{\min}(\cdot)$ over the positive semi-definite matrices. Next, we prove the upper bound on the largest eigenvalue of $(vec(\mathring{\boldsymbol{X}}_{a_s,u_s})(\hat{\mathcal{J}}))(vec(\mathring{\boldsymbol{X}}_{a_s,u_s})(\hat{\mathcal{J}}))^{\top}$
\begin{align}
   & \lambda_{\max}((vec(\mathring{\boldsymbol{X}}_{a_s,u_s})(\hat{\mathcal{J}}))(vec(\mathring{\boldsymbol{X}}_{a_s,u_s})(\hat{\mathcal{J}}))^{\top}) \notag \\
   &= \max_{\norm{\boldsymbol{l}}_2 = 1} \boldsymbol{l}^{\top} (vec(\mathring{\boldsymbol{X}}_{a_s,u_s})(\hat{\mathcal{J}}))(vec(\mathring{\boldsymbol{X}}_{a_s,u_s})(\hat{\mathcal{J}}))^{\top} \boldsymbol{l} \notag \\
    &\leq \max_{\norm{\boldsymbol{l}}_2 = 1} \norm{\boldsymbol{l}}_1^2 \norm{(vec(\mathring{\boldsymbol{X}}_{a_s,u_s})(\hat{\mathcal{J}}))}_{\infty}^2 \notag \\
    &\leq |\hat{\mathcal{J}}|s_A^2 \leq (s_0 + (4\nu\omega_{\mathcal{X}}\sqrt{s_0})/\phi_0^2)s_A^2,
\end{align}
where the first inequality uses H\"older's inequality. 

Using Theorem~\ref{the:mc} with $R= (s_0 + (4\nu\omega_{\mathcal{X}}\sqrt{s_0})/\phi_0^2)s_A^2$, $X_s = (vec(\mathring{\boldsymbol{X}}_{a_s,u_s})(\hat{\mathcal{J}}))(vec(\mathring{\boldsymbol{X}}_{a_s,u_s})(\hat{\mathcal{J}}))^{\top}$, $Y = t\widehat{\overline{\boldsymbol{\Sigma}}}_{t,\hat{J}}$, $W = t\overline{\boldsymbol{\Sigma}}_{t,\hat{J}}$, $\delta = \frac{1}{2}$ and $\mu = t \frac{\beta}{2\nu \omega_{\mathcal{X}}}$.
\begin{align*}
    &\mathbb{P}(\lambda_{\min}(t\widehat{\overline{\boldsymbol{\Sigma}}}_{t,\hat{J}}) \leq \frac{t}{2}\frac{\beta}{2\nu\omega_{\mathcal{X}}} \text{ and } \lambda_{\min}(t\overline{\boldsymbol{\Sigma}}_{t,\hat{J}}) \geq \frac{\beta t}{2\nu\omega_{\mathcal{X}}}) \leq \notag \\
    & \exp(\log(s_0 + \frac{4\nu\omega_{\mathcal{X}}\sqrt{s_0}}{\phi_0^2}) - \frac{t\beta}{20s_A^2\nu\omega_{\mathcal{X}}(s_0 + \frac{4\nu\omega_{\mathcal{X}}\sqrt{s_0}}{\phi_0^2})}),
\end{align*}
where the last inequality uses $-0.5-0.5\log(0.5) \leq -\frac{1}{10}$. That completes the proof. 
\end{proof}
\begin{lemma}[Estimation after Thresholding]\label{lem:eat}
Let $s' = s_0+ \frac{4\nu\omega_{\mathcal{X}}\sqrt{s_0}}{\phi_0^2}$. Under Assumption~\ref{assump:arfv} and Remark~\ref{rem:s}, we have, for all $x,\lambda >0$: $\mathbb{P}\Big(\frac{1}{\lambda t} \norm{\mathcal{X}(\hat{\mathcal{J}})^{\top}\eta}_2 \geq x \text{ and } \lambda_{\min}(\widehat{\overline{\boldsymbol{\Sigma}}}_{t,\hat{J}}) \geq \lambda | \mathcal{D}_t \Big) \leq 2s'\exp(-\frac{\lambda^2tx^2}{2\sigma^2s_A^2s'})$. 
\end{lemma}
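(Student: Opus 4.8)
The quantity $\frac{1}{\lambda t}\norm{\mathcal{X}(\hat{\mathcal{J}})^{\top}\eta}_2$ is precisely the bound needed for the least-squares error of $vec(\widehat{\boldsymbol{V}}_t)$ restricted to $\hat{\mathcal{J}}$: since $\mathcal{X}(\hat{\mathcal{J}})^{\top}\mathcal{X}(\hat{\mathcal{J}}) = t\,\widehat{\overline{\boldsymbol{\Sigma}}}_{t,\hat{J}}$, the operator norm of its inverse equals $1/(t\,\lambda_{\min}(\widehat{\overline{\boldsymbol{\Sigma}}}_{t,\hat{J}}))$, which is at most $1/(\lambda t)$ on the event $\{\lambda_{\min}(\widehat{\overline{\boldsymbol{\Sigma}}}_{t,\hat{J}})\geq\lambda\}$. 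The plan is therefore to (i) discard the eigenvalue event, reducing the joint event to a pure noise tail, (ii) pass from the $\ell_2$ norm of the $|\hat{\mathcal{J}}|$-dimensional vector $\mathcal{X}(\hat{\mathcal{J}})^{\top}\eta$ to its largest coordinate using $|\hat{\mathcal{J}}|\leq s'$ on $\mathcal{D}_t$, and (iii) control each coordinate with the same martingale sub-Gaussian tail invoked in Lemma~\ref{lem:epsilon-v-p}.

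For the reduction, the event $\{\frac{1}{\lambda t}\norm{\mathcal{X}(\hat{\mathcal{J}})^{\top}\eta}_2\geq x\}$ coincides with $\{\norm{\mathcal{X}(\hat{\mathcal{J}})^{\top}\eta}_2\geq\lambda t x\}$, so its intersection with $\{\lambda_{\min}(\widehat{\overline{\boldsymbol{\Sigma}}}_{t,\hat{J}})\geq\lambda\}$ is contained in the latter, and I may keep only the noise event for the upper bound. Under $\mathcal{D}_t$ we have $|\hat{\mathcal{J}}|\leq s_0+\frac{4\nu\omega_{\mathcal{X}}\sqrt{s_0}}{\phi_0^2}=s'$, hence $\norm{\mathcal{X}(\hat{\mathcal{J}})^{\top}\eta}_2\leq\sqrt{s'}\,\max_{i\in\hat{\mathcal{J}}}|\mathcal{X}(:,i)^{\top}\eta|$, and the event forces $\max_{i\in\hat{\mathcal{J}}}|\mathcal{X}(:,i)^{\top}\eta|\geq \lambda t x/\sqrt{s'}$. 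A union bound over the at most $s'$ coordinates of $\hat{\mathcal{J}}$ then reduces everything to bounding, for a fixed index $i$, the probability that $|\sum_{s=1}^t vec(\mathring{\boldsymbol{X}}_{a_s,u_s})(i)\,\eta_s|\geq \lambda t x/\sqrt{s'}$.

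For each fixed $i$, the summands $Z_s=\eta_s\,vec(\mathring{\boldsymbol{X}}_{a_s,u_s})(i)$ form a martingale difference sequence: the context and action are $\tilde{\mathcal{F}}_{s-1}$-measurable before $\eta_s$ is drawn and $\mathbb{E}[\eta_s\mid\tilde{\mathcal{F}}_{s-1}]=0$, while $|vec(\mathring{\boldsymbol{X}}_{a_s,u_s})(i)|\leq s_A$ (Assumption~\ref{assump:bound}) makes $Z_s$ sub-Gaussian with variance proxy $(s_A\sigma)^2$, exactly the computation in Lemma~\ref{lem:epsilon-v-p}. Applying Theorem~\ref{the:219} with $y=\lambda t x/\sqrt{s'}$ gives each coordinate probability at most $2\exp(-\frac{\lambda^2 t x^2}{2\sigma^2 s_A^2 s'})$, and summing over the $s'$ coordinates produces the claimed $2s'\exp(-\frac{\lambda^2 t x^2}{2\sigma^2 s_A^2 s'})$.

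The one genuinely delicate point, which I expect to be the main obstacle, is that $\hat{\mathcal{J}}$ is data-dependent: it is produced by the LASSO step and the two thresholds and is therefore correlated with the noise, so the union bound over ``the coordinates of $\hat{\mathcal{J}}$'' is not literally a union over a fixed index set. I would resolve this exactly as in Lemma~\ref{lem:min-eigen}, namely by conditioning on $\mathcal{D}_t$ and treating $\hat{\mathcal{J}}$ as a fixed realized support of cardinality at most $s'$; this is the fixed-support convention underlying the sparse-bandit analyses we follow. This is the step that must be stated with care, whereas the $\ell_2$-to-$\ell_\infty$ passage and the sub-Gaussian martingale estimate above are routine.
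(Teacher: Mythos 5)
Your proposal is correct and follows essentially the same route as the paper's proof: discard the eigenvalue event, reduce $\norm{\mathcal{X}(\hat{\mathcal{J}})^{\top}\eta}_2 \geq \lambda t x$ to a per-coordinate bound $\lambda t x/\sqrt{s'}$ via a union bound over the at most $s'$ indices in $\hat{\mathcal{J}}$, and apply the sub-Gaussian martingale tail of Theorem~\ref{the:219} with variance proxy $(s_A\sigma)^2$ to each coordinate. The data-dependence of $\hat{\mathcal{J}}$ that you rightly flag is handled in the paper by exactly the convention you propose, namely conditioning on a fixed realized support of cardinality at most $s'$ under $\mathcal{D}_t$.
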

\begin{proof}
Denote $\hat{\mathcal{J}} = \hat{\mathcal{J}}_1^t$ and $\eta = (\eta_1, \eta_2, \ldots, \eta_t)$, $\mathcal{X}(\hat{\mathcal{J}})$ denotes the $t\times |\hat{\mathcal{J}}|$ submatrix of $\mathcal{X} \in \mathbb{R}^{t \times nd}$. Conditioning on a fixed $\hat{\mathcal{J}}$, based on Theorem~\ref{the:219}, we can derive 
\begin{align}
&\mathbb{P}\Big(\frac{1}{\lambda t} \norm{\mathcal{X}(\hat{\mathcal{J}})^{\top}\eta}_2 \geq x \text{ and } \lambda_{\min}(\widehat{\overline{\boldsymbol{\Sigma}}}_{t,\hat{\mathcal{J}}}) \geq \lambda \Big) \notag \\
&\leq \mathbb{P}\Big(\norm{\mathcal{X}(\hat{\mathcal{J}})^{\top}\eta}_2  \geq \lambda tx | \lambda_{\min}(\widehat{\overline{\boldsymbol{\Sigma}}}_{t,\hat{J}}) \geq \lambda \Big)\mathbb{P}\Big( \lambda_{\min}(\widehat{\overline{\boldsymbol{\Sigma}}}_{t,\hat{J}}) \geq \lambda \Big) \notag \\
&\leq \mathbb{P}\Big(\norm{\mathcal{X}(\hat{\mathcal{J}})^{\top}\eta}_2  \geq \lambda tx \Big) \notag \\
&\leq \sum_{i \in \hat{J}}\mathbb{P}\Big ( |\sum_{s=1}^t \eta_s (vec(\mathring{\boldsymbol{X}}_{a_s,u_s})(i))| \geq \frac{\lambda tx}{\sqrt{s_0+\frac{4\nu\omega_{\mathcal{X}}\sqrt{s_0}}{\phi_0^2}}}\Big) \notag \\
&\leq 2 s_0+\frac{4\nu\omega_{\mathcal{X}}\sqrt{s_0}}{\phi_0^2}\exp(-\frac{\lambda^2t x^2}{2\sigma^2s_A^2 s_0+\frac{4\nu\omega_{\mathcal{X}}\sqrt{s_0}}{\phi_0^2}})
\end{align}
That completes the proof. 
\end{proof}
\begin{lemma}\label{lem:als-est-err}
   When the ALS step is properly initialized, $\epsilon_v = vec(\mathring{\boldsymbol{X}}_{a_s,u_s})^{\top}(vec(\widehat{\boldsymbol{V}}_s) - vec(\boldsymbol{V}))$ is zero mean $2s_xs_v$-sub-Gaussian, conditioned on $\mathcal{F}_t$. Similarly, $\epsilon_{\theta} = vec(\mathring{\boldsymbol{X}}_{a_s,u_s}\boldsymbol{W}^{\top})^{\top}(vec(\widehat{\boldsymbol{\Theta}}_s) - vec(\boldsymbol{\Theta}))$ is zero-mean $2s_xs_{\theta}$-sub-Gaussian, conditioned on $\mathcal{F}_t$. 
\end{lemma}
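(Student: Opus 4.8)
The plan is to treat each of $\epsilon_v$ and $\epsilon_\theta$ as a bounded bilinear form between a context-derived vector and an estimation-error vector, and to establish the two claimed properties separately: the vanishing conditional mean (the martingale-difference property) and the stated sub-Gaussian parameter (the variance proxy). First I would fix the relevant filtration $\mathcal{F}_s$: conditioned on $\mathcal{F}_s$ the context $vec(\mathring{\boldsymbol{X}}_{a_s,u_s})$ is measurable, while the round-$s$ noise $\eta_s$ that feeds the ALS update is not, so that $vec(\widehat{\boldsymbol{V}}_s)-vec(\boldsymbol{V})$ retains genuine randomness. This framing is precisely what makes the lemma the right ingredient for the self-normalized bound of Lemma~\ref{lem:ucb}, where the effective regression noise is $\eta_s-\epsilon_v$ and must be conditionally zero-mean and $(\sigma+2s_xs_v)$-sub-Gaussian.

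For the variance proxy I would argue by deterministic boundedness followed by Hoeffding's lemma. By Cauchy--Schwarz, $|\epsilon_v|\le \norm{vec(\mathring{\boldsymbol{X}}_{a_s,u_s})}_2\,\norm{vec(\widehat{\boldsymbol{V}}_s)-vec(\boldsymbol{V})}_2$; Assumption~\ref{assump:bound} bounds the context by $\norm{\boldsymbol{x}_{a_s,u_s}}_2\le s_x$ and, via the triangle inequality with $\norm{vec(\boldsymbol{V})}\le s_v$ (together with the boundedness of $\widehat{\boldsymbol{V}}_s$ guaranteed by the properly initialized update), the error by $2s_v$, so that $|\epsilon_v|\le 2s_xs_v$ almost surely. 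For $\epsilon_\theta$ the only additional step is $\norm{vec(\mathring{\boldsymbol{X}}_{a_s,u_s}\boldsymbol{W}^\top)}_2\le s_x$: since the only nonzero column of $\mathring{\boldsymbol{X}}_{a_s,u_s}$ is $\boldsymbol{x}_{a_s,u_s}$, this Frobenius norm equals $\norm{\boldsymbol{x}_{a_s,u_s}}_2(\sum_{j}\boldsymbol{W}(j,u_s)^2)^{1/2}$, and column-normalization $\sum_j\boldsymbol{W}(j,u_s)=1$ with nonnegativity yields $\sum_j\boldsymbol{W}(j,u_s)^2\le 1$; combined with $\norm{vec(\boldsymbol{\Theta})}\le s_\theta$ this gives $|\epsilon_\theta|\le 2s_xs_\theta$. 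A bounded, conditionally mean-zero random variable lying in $[-B,B]$ is $B$-sub-Gaussian by Hoeffding's lemma, so with $B=2s_xs_v$ (resp.\ $2s_xs_\theta$) the claimed proxies follow once the mean-zero property is secured.

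The vanishing conditional mean is where the hypothesis ``properly initialized ALS'' does its work, and I expect this to be the main obstacle. The idea is that the alternating least-squares solution for $\widehat{\boldsymbol{V}}_s$ is a linear map applied to the residual targets $r_{a,u}-vec(\mathring{\boldsymbol{X}})^\top vec(\widehat{\boldsymbol{\Theta}}_s\boldsymbol{W})$, which under the true model split into a deterministic signal, the zero-mean noise $\eta$, and a cross-contamination term driven by $vec(\widehat{\boldsymbol{\Theta}}_s-\boldsymbol{\Theta})$. A proper initialization ensures this cross term injects no bias, so that $\mathbb{E}[vec(\widehat{\boldsymbol{V}}_s)-vec(\boldsymbol{V})\mid\mathcal{F}_s]=\boldsymbol{0}$ and hence $\mathbb{E}[\epsilon_v\mid\mathcal{F}_s]=0$; the symmetric argument, with the roles of $\boldsymbol{\Theta}$ and $\boldsymbol{V}$ exchanged, handles $\epsilon_\theta$. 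The delicate point I would treat carefully is the non-convexity of the joint objective in \eqref{eq:ridge-lasso-reg}: the alternating updates admit a clean unbiased fixed point only when initialized in the correct basin, and one must check that the boundedness used for the variance proxy is compatible with conditional unbiasedness, i.e.\ that no implicit projection step silently re-introduces bias. Reconciling the a.s.\ bound $|\epsilon_v|\le 2s_xs_v$ with the martingale-difference property is therefore the crux, while the remaining estimates are routine.
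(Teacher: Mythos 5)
Your proposal is correct and takes essentially the same approach as the paper: the paper's entire proof is the boundedness observation you make---conditioned on $\mathcal{F}_t$, $\epsilon_v$ lies in $[-2s_xs_v,\,2s_xs_v]$, hence is sub-Gaussian---adapted from Lemma F.1 of \cite{li2022asynchronous}, and your Cauchy--Schwarz computation with $\norm{vec(\mathring{\boldsymbol{X}}_{a_s,u_s})}_2 \leq s_x$ and $\norm{vec(\mathring{\boldsymbol{X}}_{a_s,u_s}\boldsymbol{W}^{\top})}_2 \leq s_x$ (via column-normalization of $\boldsymbol{W}$) just makes that bound explicit. The zero-mean property you rightly identify as the crux is not proved in the paper either; it is folded into the ``properly initialized'' hypothesis and the citation, so your caveats about the boundedness of $\widehat{\boldsymbol{V}}_s$ and the potential bias of the alternating updates go beyond, rather than diverge from, the paper's own argument.
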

\begin{proof}
Lemma~\ref{lem:als-est-err} is adapted from Lemma F.1 in \cite{li2022asynchronous}. Here we summarize the proof with variables in NELA briefly.

When conditioning on $\mathcal{F}_t$, $\epsilon_v$ is a random variable bounded by $[-2s_xs_v,2s_xs_v]$. Therefore, $\epsilon_v$ is sub-Gaussian. For $\epsilon_{\theta}$, the proof follows the same procedure. 
\end{proof}

\subsection{Proof of Lemma~\ref{lem:ucb}}\label{app:lem-ucb}
\begin{proof}
Define 
\begin{align*}
    \boldsymbol{S}_t &= \sum_{s=1}^t vec(\mathring{\boldsymbol{X}}_{a_s,u_s}\boldsymbol{W}^{\top})\eta_s,
\end{align*}
then we can obtain
\begin{align*}
    &\norm{vec(\widehat{\boldsymbol{\Theta}}_{t}) - vec(\boldsymbol{\Theta})}_{\boldsymbol{A}_t} \\
    &\quad \quad \leq \norm{\boldsymbol{S}_t}_{\boldsymbol{A}_t^{-1}} + \sqrt{\lambda_1}\norm{vec(\boldsymbol{\Theta})}  +\norm{\boldsymbol{Q}_{v,t}}_{\boldsymbol{A}_t^{-1}},
\end{align*}
where $\boldsymbol{Q}_{v,t} = \sum_{s=1}^t vec(\mathring{\boldsymbol{X}}_{a_s,u_s}\boldsymbol{W}^{\top}) vec(\mathring{\boldsymbol{X}}_{a_s,u_s})^{\top}(vec(\widehat{\boldsymbol{V}}_s) - vec(\boldsymbol{V}))$. 

We first bound the first term. Because $\boldsymbol{S}_t$ is a martingale, according to Theorem 1 and Theorem 2 in \cite{abbasi2011improved}, we have
\begin{gather*}
    \norm{\boldsymbol{S}_t}_{\boldsymbol{A}_t^{-1}} \leq \sigma \sqrt{2 \log (\frac{\det(\boldsymbol{A}_t)^{1/2}\det(\boldsymbol{A}_0)^{-1/2}}{\delta})},
\end{gather*}
with probability at least $1 - \delta$, for any $\delta > 0$. Besides, we have $\det(\boldsymbol{A}_0) = \det(\lambda_1 \boldsymbol{I}) \leq \lambda_1^{dn}$. 

Now it remains to bound the term $\norm{\boldsymbol{Q}_{v,t}}_{\boldsymbol{A}_t^{-1}}$, according to Lemma~\ref{lem:als-est-err}, we can bound
\begin{align*}
    \norm{\boldsymbol{Q}_{v,t}}_{\boldsymbol{A}_t^{-1}} \leq 2s_xs_A\sqrt{2 \log (\frac{\det(\boldsymbol{A}_t)^{1/2}\det(\boldsymbol{A}_0)^{-1/2}}{\delta})}.
\end{align*}
Putting all these together, we have
\begin{align}
    \norm{vec(\widehat{\boldsymbol{\Theta}}_{t}) - vec(\boldsymbol{\Theta})}_{\boldsymbol{A}_t} &\leq (\sigma +2s_xs_v)\sqrt{\frac{\det(\boldsymbol{A}_t)}{\lambda_1^{dn} \delta^2}} + \sqrt{\lambda_1}s_{\theta} . \label{eq:vecb}
\end{align}
In addition, we have $\Tr(\boldsymbol{A}_t) \leq \lambda_1 dn + \sum_{\tau=1}^t\sum_{j=1}^n W(u_{\tau},j)^2s_A^2 d^2$, we have $\det(\boldsymbol{A}_t) \leq (\frac{\Tr(\boldsymbol{A}_t)}{dn})^{dn} \leq (\lambda_1 + \frac{\sum_{\tau=1}^t\sum_{j=1}^n W(u_{\tau},j)^2s_A^2d^2}{dn})^{dn}$. Therefore, \eqref{eq:vecb} can further bounded by
\begin{align*}
    &\norm{vec(\widehat{\boldsymbol{\Theta}}_{t}) - vec(\boldsymbol{\Theta})}_{\boldsymbol{A}_t} \leq \sqrt{\lambda_1}s_{\theta} + (\sigma + 2s_xs_v)\times \notag \\
    & \sqrt{dn\log(1+\frac{\sum_{\tau=1}^t\sum_{j=1}^n W(u_{\tau},j)^2s_A^2d^2}{dn \lambda_1}) - 2\log \delta}.
\end{align*}
\end{proof}
\subsection{Proof of Lemma~\ref{lem:support-recover}}\label{app:lem-support-recover}
\begin{proof}
Define $\boldsymbol{\Delta}_t = \widehat{\boldsymbol{V}}_0^t - vec(\boldsymbol{V})$ and define the event $\mathcal{H}_t$ as
\begin{gather}
    \mathcal{H}_t = \Big \{ \norm{\boldsymbol{\Delta}_t}_1 \leq \frac{4s_0\lambda_t}{\phi_t^2} \Big\},
\end{gather}
where $\phi_t$ is the compatibility constant of $\widehat{\overline{\boldsymbol{\Sigma}}}_t$ defined in Lemma~\ref{lem:delta-bound}. 

For the rest of the proof, we assume that the event $\mathcal{H}_t$ holds. Besides, we have
\begin{align*}
    \norm{\boldsymbol{\Delta}_t}_1 &\geq \norm{\boldsymbol{\Delta}_t(\mathcal{J}^c)}_1  \geq \sum_{j \in \mathcal{J}^c \cap \hat{\mathcal{J}}_0^t} |\widehat{\boldsymbol{V}}_0^t(j)| \geq |\hat{\mathcal{J}}_0^t \backslash \mathcal{J}|4\lambda_t , 
\end{align*}
where in the last inequality we use the construction of $\hat{\mathcal{J}}_0^t$ in the algorithm. We have 
\begin{gather}
    |\hat{\mathcal{J}}_0^t \backslash \mathcal{J}| \leq \frac{\norm{\boldsymbol{\Delta}_t}_1}{4\lambda_t} \leq \frac{s_0}{\phi_t^2},
\end{gather}
where in the last inequality we used the definition of $\mathcal{H}_t$. We have $\forall j \in J$,
\begin{align}
    |\widehat{\boldsymbol{V}}_0^t(j)|  \geq  V_{\min} - \norm{\boldsymbol{\Delta}(\mathcal{J})}_1 \geq V_{\min} - \frac{4s_0\lambda_t}{\phi_t^2}.
\end{align}
Therefore, when $t$ is large enough so that $4\lambda_t \leq V_{\min} - \frac{4s_0\lambda_t}{\phi_t^2}$, we have $\mathcal{J} \subset \hat{\mathcal{J}}_0^t$. Similarly, when $t$ is large enough so that $4\lambda_t \sqrt{(1+\frac{1}{\phi_t^2})s_0} \leq V_{\min} - \frac{4s_0\lambda_t}{\phi_t^2}$, it holds that $\mathcal{J} \subset \hat{\mathcal{J}}_0^t$. From the construction of $\hat{\mathcal{J}}_1^t$ in the algorithm, it also holds that $\hat{\mathcal{J}}_1^t \subset \hat{\mathcal{J}}_0^t$. Therefore, 
\begin{align}
     \norm{\boldsymbol{\Delta}_t}_1 &\geq \sum_{j \in \hat{\mathcal{J}}_0^t \backslash \mathcal{J}}|\widehat{\boldsymbol{V}}_0^t(j)| \geq |\hat{\mathcal{J}}_1^t \backslash \mathcal{J}| 4\lambda_t \sqrt{|\hat{\mathcal{J}}_0^t|},
\end{align}
and 
\begin{align}
    |\hat{\mathcal{J}}_1^t \backslash \mathcal{J}| &\leq \frac{\norm{\boldsymbol{\Delta}_t}_1}{4\lambda_t\sqrt{|\hat{\mathcal{J}}_0^t|}} \leq \frac{1}{4\lambda_t\sqrt{|\hat{\mathcal{J}}_0^t|}} \frac{4s_0\lambda_t}{\phi_t^2} \leq \frac{\sqrt{s_0}}{\phi_t^2}. 
\end{align}
The condition that $4\lambda_t\sqrt{(1+\frac{1}{\phi_t^2})s_0} \leq vec(\boldsymbol{V})_{\min} - \frac{4s_0\lambda_t}{\phi_t^2}$ is equivalent to $4\lambda_t(\sqrt{(1+\frac{1}{\phi_t^2})s_0} + \frac{s_0}{\phi_t^2}) \leq vec(\boldsymbol{V})_{\min}$. Based on the auxiliary Lemmas~\ref{lem:delta-bound} and \ref{lem:phi-sigma}, we can guarantee the estimation of the support by substituting $\phi_t^2 = \frac{\phi_0^2}{4\nu\omega_{\mathcal{X}}}$. In the worst case, each anomaly in $\mathcal{U}_t$ contains at least one non-zero dimension, which indicates $\mathbb{P}(\tilde{\mathcal{U}} \subset \tilde{\mathcal{U}}_t$ and $|\tilde{\mathcal{U}}_t \backslash \tilde{\mathcal{U}}| \leq \frac{4\nu\omega_{\mathcal{X}}\sqrt{s_0}}{\phi_0^2}) \geq 1 - 2\exp(-\frac{t\lambda_t^2}{32\sigma^2s_A^2}+\log nd)-\exp(-\frac{tC_0^2}{2})$. 
\end{proof}
\subsection{Proof of Lemma~\ref{lem:ins-rgt}}\label{app:lem-ins-rgt}
\begin{proof}
\begin{align}
   \mathbb{E}[R_t] &=  \langle vec(\mathring{\boldsymbol{X}}_{a_t^*,u_{t}}),vec(\boldsymbol{\Theta}\boldsymbol{W}) + vec(\boldsymbol{V}) \rangle \notag \\
   & \quad - \langle vec(\mathring{\boldsymbol{X}}_{a_t,u_t}),vec(\boldsymbol{\Theta}\boldsymbol{W}) + vec(\boldsymbol{V}) \rangle \notag \\
   &\leq \underbrace{ \alpha_t \sqrt{vec(\mathring{\boldsymbol{X}}_{a_t,u_t}\boldsymbol{W}^{\top})\boldsymbol{A}_{t-1}^{-1}vec(\mathring{\boldsymbol{X}}_{a_t,u_t}\boldsymbol{W}^{\top})^{\top}}}_{\text{term (a)}} \notag \\
    + &\underbrace{\langle vec(\mathring{\boldsymbol{X}}_{a_t^*,u_t}) - vec(\mathring{\boldsymbol{X}}_{a_t,u_t}),   vec(\boldsymbol{\Theta}\boldsymbol{W}) -vec(\widehat{\boldsymbol{\Theta}}_t\boldsymbol{W}) \rangle}_{\text{term (b)}} \notag \\
   & + \underbrace{\langle vec(\mathring{\boldsymbol{X}}_{a_t^*,u_t}) - vec(\mathring{\boldsymbol{X}}_{a_t,u_t}),    vec(\boldsymbol{V}) - vec(\widehat{\boldsymbol{V}}_t) \rangle}_{\text{term (c)}}  \label{eq:rgt-ins-term}
\end{align}
where the inequality is obtained according to our action selection strategy.

Next, we have
\begin{align*}
    &\text{term (b)}\notag \\
    &= \langle vec(\mathring{\boldsymbol{X}}_{a_t^*,u_t}) - vec(\mathring{\boldsymbol{X}}_{a_t,u_t}),   vec(\boldsymbol{\Theta}\boldsymbol{W})  -vec(\widehat{\boldsymbol{\Theta}}_t\boldsymbol{W}) \rangle \notag \\
    &\leq \alpha_t \Big(\norm{vec(\mathring{\boldsymbol{X}}_{a_t,u_t}\boldsymbol{W}^{\top})}_{\boldsymbol{A}_{t-1}^{-1}} +  \norm{vec(\mathring{\boldsymbol{X}}_{a_t^*,u_t}\boldsymbol{W}^{\top})}_{\boldsymbol{A}_{t-1}^{-1}}\Big).
\end{align*}
To bound term (c) of \eqref{eq:rgt-ins-term}, we first define the event $\mathcal{L}_t^{\frac{\beta}{4\nu\omega_{\mathcal{X}}}} = \Big\{ \lambda_{\min}(\widehat{\overline{\boldsymbol{\Sigma}}}_{t,\hat{J}}) \geq \frac{\beta}{4\nu\omega_{\mathcal{X}}} \Big\}$, we have 
\begin{align}
    \text{term (c)} &=  \langle vec(\mathring{\boldsymbol{X}}_{a_t^*,u_t}) - vec(\mathring{\boldsymbol{X}}_{a_t,u_t}),    vec(\boldsymbol{V}) - vec(\widehat{\boldsymbol{V}}_t) \rangle \notag \\
    &\leq 2s_A\mathbb{E}\Big[\norm{vec(\boldsymbol{V}) - vec(\widehat{\boldsymbol{V}}_t)}_2| \mathcal{D}_t,\mathcal{L}_t^{\frac{\beta}{4\nu\omega_{\mathcal{X}}}}\Big]\notag \\
    &\quad +2s_1s_A\Big (\mathbb{P}(\mathcal{D}_t^c) + \mathbb{P}((\mathcal{L}_t^{\frac{\beta}{4\nu\omega_{\mathcal{X}}}})^c |\mathcal{D}_t) \Big) 
\end{align}
Specially, use $\hat{\mathcal{J}} = \hat{\mathcal{J}}_1^t$ and $\eta = (\eta_1, \eta_2, \ldots, \eta_t)$. Assume $\lambda_{\min}(\widehat{\overline{\boldsymbol{\Sigma}}}_{t,\hat{J}}) \geq \lambda$. We have
\begin{align}
 &\norm{vec(\widehat{\boldsymbol{V}}_{t+1}) - vec(\boldsymbol{V})}_2 \notag \\
 &= \norm{(\mathcal{X}(\hat{\mathcal{J}})^{\top}\mathcal{X}(\hat{\mathcal{J}}))^{-1}\mathcal{X}(\hat{\mathcal{J}})^{\top}\mathcal{R} - vec(\boldsymbol{V})}_2 \notag \\
 &\leq\frac{1}{\lambda t}(\norm{\mathcal{X}(\hat{\mathcal{J}})^{\top}\eta}_2 + \norm{\boldsymbol{Q}_{\theta,t}}_2), \label{eq:norm-v}
\end{align}
where the last inequality is because $\lambda_{\min}(\widehat{\overline{\boldsymbol{\Sigma}}}_{t,\hat{J}}) \geq \lambda$ and $\boldsymbol{Q}_{\theta,t} = \sum_{s=1}^tvec(\mathring{\boldsymbol{X}}_{a_s,u_s})vec(\mathring{\boldsymbol{X}}_{a_s,u_s}\boldsymbol{W}^{\top}) (vec(\widehat{\boldsymbol{\Theta}}_s) - vec(\boldsymbol{\Theta}))$.

Similarly, according to Lemma~\ref{lem:als-est-err} and Hoeffding's inequality, with probability at least $1-\delta$,
\begin{align*}
    \norm{\boldsymbol{Q}_{\theta,t}}_2 \leq 2s_xs_{\theta}\sqrt{2t\log(2d/\delta)}.
\end{align*}

Putting all these together, we have
\begin{align*}
    &\mathbb{E}\Big[\norm{vec(\boldsymbol{V}) - vec(\widehat{\boldsymbol{V}}_t)}_2| \mathcal{D}_t,\mathcal{L}_t^{\frac{\beta}{4\nu\omega_{\mathcal{X}}}}\Big] \leq \notag \\
    & \mathbb{E}\Big[\frac{1}{\lambda t}\Big(\norm{\mathcal{X}(\hat{\mathcal{J}})^{\top}\eta}_2  + 2s_xs_{\theta}\sqrt{2t\log(2d/\delta)} \Big)| \mathcal{D}_t,\mathcal{L}_t^{\frac{\beta}{4\nu\omega_{\mathcal{X}}}} \Big],
\end{align*}
where $\lambda = \lambda_{\min}(\widehat{\overline{\boldsymbol{\Sigma}}}_{t,\hat{\mathcal{J}}}) \geq \frac{\beta}{4\nu\omega_{\mathcal{X}}}$.

For the first term, denote the event $\mathcal{I}_h = \Big \{\frac{1}{t}\norm{\mathcal{X}(\hat{J})^{\top}\eta}_2 \in (2h\delta, 2\delta(h+1)] \Big\}$, where
\begin{gather}
    \delta = \frac{\sigma s_A\nu \omega_{\mathcal{X}}}{\beta}\sqrt{\frac{32(s_0+\frac{4\nu\omega_{\mathcal{X}}}{\phi_0^2})}{t-1}},
\end{gather}
then
\begin{align*}
    &\mathbb{E}\Big[\frac{1}{\lambda t}\norm{\mathcal{X}(\hat{\mathcal{J}})^{\top}\eta}_2| \mathcal{D}_t,\mathcal{L}_t^{\frac{\beta}{4\nu\omega_{\mathcal{X}}}} \Big] \notag \\
    &\leq \sum_{h=0}^{\lceil s_1/\delta \rceil}\delta (h+1)\mathbb{P}\Big( \norm{\mathcal{X}(\hat{\mathcal{J}})^{\top}\eta}_2 \geq 2th\delta |\mathcal{I}_h, \mathcal{D}_t,\mathcal{L}_t^{\frac{\beta}{4\nu\omega_{\mathcal{X}}}} \Big).
\end{align*}
Denote $s' = s_0 + \frac{4\nu\omega_{\mathcal{X}}\sqrt{s_0}}{\phi_0^2}$. Then, using Lemma~\ref{lem:eat}, we get
\begin{align}
    &\mathbb{P}\Big ( \norm{\mathcal{X}(\hat{\mathcal{J}})^{\top}\eta}_2 \geq 2th\delta |\mathcal{I}_h, \mathcal{D}_t,\mathcal{L}_t^{\frac{\beta}{4\nu\omega_{\mathcal{X}}}} \Big) \notag \\
    &\leq 2s'\exp(-\frac{\beta^2t\delta^2h^2}{32\sigma^2s_A^2\nu\omega_{\mathcal{X}}^2s'}) = 2s'\exp(-h^2).
\end{align}
We also trivially have that:
\begin{align}
    \mathbb{P}\Big ( \norm{\mathcal{X}(\hat{J})^{\top}\eta}_2 \geq 2th\delta |\mathcal{I}_h, \mathcal{D}_t,\mathcal{L}_t^{\frac{\beta}{4\nu\omega_{\mathcal{X}}}} \Big) \leq 1.
\end{align}
Thus, we have
\begin{align}
    &\mathbb{E}\Big[\frac{1}{\lambda t}\norm{\mathcal{X}(\hat{\mathcal{J}})^{\top}\eta}_2| \mathcal{D}_t,\mathcal{L}_t^{\frac{\beta}{4\nu\omega_{\mathcal{X}}}} \Big] \notag \\
    &\leq \sum_{h=0}^{\lceil s_1/\delta \rceil}\delta (h+1)\mathbb{P}\Big( \norm{\mathcal{X}(\hat{\mathcal{J}})^{\top}\eta}_2 \geq 2th\delta |\mathcal{I}_h, \mathcal{D}_t,\mathcal{L}_t^{\frac{\beta}{4\nu\omega_{\mathcal{X}}}} \Big) \notag \\
    &\leq  \Big(\sum_{h=0}^{\lceil s_1/\delta \rceil}\delta (h+1) \min \{1,2s'\exp(-h^2)\}\Big) \notag \\
    &\leq \Big(\sum_{h=0}^{h_0}\delta(h+1)+\sum_{h = h_0+1}^{h_{\max}}\delta 2s'(h+1) \exp(-h^2) \Big),
\end{align}
where we set $h_0 = \lfloor \sqrt{\log 4s'} + 1\rfloor$. Based on the proof of Lemma 5.9 in \cite{ariu2022thresholded}, we can obtain
\begin{align}
    &\sum_{h=0}^{h_0}(h+1)+\sum_{h = h_0+1}^{h_{\max}}2s'(h+1) \exp(-h^2) \notag \\
    &\leq \frac{(h_0+1)(h_0+2)}{2} + 2s'\exp(-h_0^2) \leq \frac{7h_0^2}{2}.
\end{align}
Then we have
\begin{align*}
    &\mathbb{E}\Big[\norm{vec(\boldsymbol{V}) - vec(\widehat{\boldsymbol{V}}_t)}_2| \mathcal{D}_t,\mathcal{L}_t^{\frac{\beta}{4\nu\omega_{\mathcal{X}}}}\Big] \notag \\
    & \leq 14h_0^2 \frac{\sigma s_A\nu \omega_{\mathcal{X}}}{\beta}\sqrt{\frac{2(s_0+\frac{4\nu\omega_{\mathcal{X}}}{\phi_0^2})}{t-1}} \notag \\
    &\quad + \frac{4\nu\omega_{\mathcal{X}}}{\beta}2s_xs_{\theta}\sqrt{\frac{2}{t}\log(2d/\delta)}.
\end{align*}
In summary, we can bound the instantaneous regret
\begin{align}
    &\mathbb{E}[R_t] \leq \frac{4\nu\omega_{\mathcal{X}}}{\beta}2s_xs_{\theta}\sqrt{\frac{2}{t}\log(2d/\delta)}   \notag \\
    &\quad+ \frac{14h_0^2\sigma s_A\nu \omega_{\mathcal{X}}}{\beta}\sqrt{\frac{2(s_0+\frac{4\nu\omega_{\mathcal{X}}}{\phi_0^2})}{t-1}} \notag \\
    &\quad  + \alpha_t\norm{vec(\mathring{\boldsymbol{X}}_{a_t,u_t}\boldsymbol{W}^{\top})}_{\boldsymbol{A}_{t-1}^{-1}} + \alpha_t  \norm{vec(\mathring{\boldsymbol{X}}_{a_t^*,u_t}\boldsymbol{W}^{\top})}_{\boldsymbol{A}_{t-1}^{-1}} \notag \\
    &\quad  + 2s_1s_A\Big (\mathbb{P}(\mathcal{D}_t^c) + \mathbb{P}((\mathcal{L}_t^{\frac{\beta}{4\nu\omega_{\mathcal{X}}}})^c | \mathcal{D}_t) \Big). 
\end{align}
\end{proof}

\subsection{Proof of Theorem~\ref{the:rgt}}\label{app:the}
\begin{proof}
\begin{align}
   R(T) \leq 2s_As_1\tau + \text{term (a)} + \text{term (b)} + \text{term (c)},  \label{eq:rgt-term}
\end{align}
where 
\begin{align}
    \text{term (a)} &= \sum_{t=1}^T \Big (2\alpha_t\norm{vec(\mathring{\boldsymbol{X}}_{a_t,u_t}\boldsymbol{W}^{\top})}_{\boldsymbol{A}_{t-1}^{-1}} \notag \\
    &\quad + \alpha_t  \norm{vec(\mathring{\boldsymbol{X}}_{a_t^*,u_t}\boldsymbol{W}^{\top})}_{\boldsymbol{A}_{t-1}^{-1}} \Big), \\
    \text{term (b)} &= \sum_{t=1}^T \Big(\frac{4\nu\omega_{\mathcal{X}}}{\beta}2s_xs_{\theta}\sqrt{\frac{2}{t}\log(2d/\delta)} \notag \\
    &\quad + \frac{14h_0^2\sigma s_A\nu \omega_{\mathcal{X}}}{\beta}\sqrt{\frac{2(s_0+\frac{4\nu\omega_{\mathcal{X}}}{\phi_0^2})}{t-1}} \Big), 
\end{align}
\begin{align}
    \text{term (c)} &= \sum_{t=\tau+1}^T 2s_1s_A\Big (\mathbb{P}(\mathcal{D}_t^c) + \mathbb{P}((\mathcal{L}_t^{\frac{\beta}{4\nu\omega_{\mathcal{X}}}})^c | \mathcal{D}_t) \Big).
\end{align}

According to Lemma 11 in \cite{abbasi2011improved},
\begin{align}
    &\sum_{t=1}^T\norm{vec(\mathring{\boldsymbol{X}}_{a_t,u_t}\boldsymbol{W}^{\top})}_{\boldsymbol{A}_{t-1}^{-1}}^2 \leq 2\log(\frac{\det(\boldsymbol{A}_T)}{\det(\boldsymbol{A}_0)}) \label{eq:b1} 
\end{align}
Then we can derive
\begin{align*}
    \text{term (a)} 
    &\leq 3\alpha_T\sqrt{2Tdn \log(1+\frac{\sum_{t=1}^T\sum_{j=1}^n \boldsymbol{W}(u_t,j)^2}{dn \lambda_1})},
\end{align*}
Then we bound term (b) of \eqref{eq:rgt-term}, 
\begin{align*}
    &\quad \quad \text{term (b)} \\
    &\leq  \frac{4s_A\nu\omega_{\mathcal{X}}}{\beta} \Big(7h_0^2\sigma \sqrt{2(s_0+\frac{4\nu\omega_{\mathcal{X}}}{\phi_0^2})T}  + 4s_xs_{\theta} \sqrt{T\log(2d/\delta)}\Big).
\end{align*}
According to the proof in \cite{ariu2022thresholded}, with $vec(\boldsymbol{V})$'s dimension $nd$ and $\tau = \lfloor \frac{2\log(2n^2d^2)}{C_0^2(\log s_0)(\log \log nd)} \rfloor$, we have
\begin{align}
    \sum_{t=\tau + 1}^T \exp(-\frac{t\lambda_t^2}{32\sigma^2s_A^2}+\log(nd)) \leq \frac{\pi^2}{6}. 
\end{align}

Thus, we can bound term (c) of \eqref{eq:rgt-term}
\begin{align}
    \text{term (c)} &\leq \frac{\pi^2}{6} + \frac{2}{C_0^2} + \Big( s_0 + \frac{4\nu\omega_{\mathcal{X}}\sqrt{s_0}}{\phi_0^2} \Big)^2\frac{20s_A^2\nu\omega_{\mathcal{X}}}{\beta}.
\end{align}
Putting all of these together, we can bound the regret
\begingroup
\footnotesize
\begin{align}
    &R(T) \leq 2s_As_1\tau + \frac{\pi^2}{6} + \frac{2}{C_0^2}  + \Big( s_0 + \frac{4\nu\omega_{\mathcal{X}}\sqrt{s_0}}{\phi_0^2} \Big)^2\frac{20s_A^2\nu\omega_{\mathcal{X}}}{\beta} \notag \\
    &\quad \quad + \frac{28h_0^2\sigma s_A\nu\omega_{\mathcal{X}}}{\beta}  \sqrt{2(s_0+\frac{4\nu\omega_{\mathcal{X}}}{\phi_0^2})T} \notag \\
    &\quad \quad + \frac{16s_A\nu\omega_{\mathcal{X}}}{\beta} \sqrt{T\log(2d/\delta)} \Big) \notag \\
    &\quad \quad  + 3\alpha_{T_r}\sqrt{2Tdn \log(1+\frac{\sum_{t=1}^T\sum_{j=1}^n \boldsymbol{W}(u_t,j)^2}{dn \lambda_1})},
\end{align}
\endgroup
with $\alpha_T$ defined in Lemma~\ref{lem:ucb}.
\end{proof}

\subsection{Proof of Theorem~\ref{the:lb}}\label{app:the-lb}
\begin{proof}
The proof combines two known lower bounds from the literature: Collaborative linear bandit regret and sparse anomaly component. In the following, we analyze these two parts separately. 

In the best-case scenario (all users are identical and fully connected), the regret of collaborative linear bandit is lower bounded by $\Omega(d\sqrt{nT})$, while in the worst-case scenario (all users are completely isolated and different), it is lower bounded by $\Omega(dn\sqrt{T})$. Therefore, the regret of network learning is lower bounded by $\Omega(\xi d\sqrt{nT})$, where $\xi \in [1,\sqrt{n}]$ where $\xi$ serves as a scaling factor jointly determined by the connectivity of the user graph and the similarity among users’ parameters. 

In the sparse linear bandit setting (where an unknown sparse corruption vector affects the rewards), previous work \cite{ren2024dynamic} has proven that regret is lower bounded by $\Omega(\sqrt{s_0T})$. 

By combining these two challenges, i.e. learning across a network and handling sparse anomalies, we conclude that any algorithm must suffer at least $\Omega(\xi d\sqrt{nT} + \sqrt{s_0T})$ regret.
\end{proof}

\bibliographystyle{IEEEtran}
\bibliography{references}  

\end{document}